\newif\ifcomments 
\newif\ifauthors  
\newif\iftrapdoor 
\pgfplotsset{compat=1.13}
\renewcommand{\backref}[1]{}
\renewcommand{\backrefalt}[4]{%
\ifcase #1 %
\or
[p.\ #2]%
\else
[pp.\ #2]%
\fi}
\newtheorem{theorem}{Theorem}
\newtheorem{claim}[theorem]{Claim}
\newtheorem{corollary}[theorem]{Corollary}
\newtheorem{definition}[theorem]{Definition}
\newtheorem{fact}[theorem]{Fact}
\newtheorem{lemma}[theorem]{Lemma}
\newtheorem{problem}[theorem]{Problem}
\newtheorem{proposition}[theorem]{Proposition}
  \newcommand{\comment}[1]{{\color{red}#1}}
  \newcommand{\comment}[1]{}
\newcommand{\mparen}[1]{\mleft(#1\mright)}
\newcommand{\mbracket}[1]{\mleft[#1\mright]}
\newcommand{\mbrace}[1]{\mleft\{#1\mright\}}
\newcommand{\abs}[1]{\left|#1\right|}
\newcolumntype{Y}{>{\centering\arraybackslash}X|}
\definecolor{mylightgreen}{RGB}{217,234,211}
\definecolor{mylightred}{RGB}{244,204,204}
\newcommand{\Ucell}{\cellcolor{mylightred} Uniform}
\newcommand{\Fcell}{\cellcolor{mylightgreen} Forrelated}
\newcommand{\equad}{\mathrel{\phantom{=}}}
\newcommand{\E}{\mathop{\mathbf{E}}}
\newcommand{\Naturals}{\mathbb{N}}
\newcommand{\poly}{\mathrm{poly}}
\newcommand{\negl}{\mathrm{negl}}
\newcommand{\polylog}{\mathrm{polylog}}
\newcommand{\quasipoly}{\mathrm{quasipoly}}
\newcommand{\AND}{\normalfont\textsc{AND}} 
\newcommand{\OR}{\normalfont\textsc{OR}}
\newcommand{\NOT}{\normalfont\textsc{NOT}}
\newcommand{\Dist}{\mathcal{D}}
\newcommand{\PDist}{\mathcal{P}}
\newcommand{\SDist}{\mathcal{S}}
\newcommand{\Forr}{\mathcal{F}}
\DeclareMathOperator{\s}{\mathsf{s}}
\begin{document}

\title{Quantum-Computable One-Way Functions \\ without One-Way Functions}

\newcommand{\email}[1]{\href{mailto:#1}{\texttt{#1}}}

\ifauthors
\author{William Kretschmer\thanks{ Simons Institute for the Theory of Computing, University of California, Berkeley. \ Email:
\email{kretsch@berkeley.edu}. \ Supported by the U.S.\ Department of Energy, Office of Science, National
Quantum Information Science Research Centers, Quantum Systems Accelerator. } \and 
Luowen Qian\thanks{NTT Research, Inc. \ Email: \email{luowen@qcry.pt}. \ This work was done in part when LQ was at Boston University and supported by DARPA under Agreement No.\ HR00112020023.}
\and
Avishay Tal\thanks{University of California, Berkeley. \ Email:
\email{atal@berkeley.edu}. Supported by a Sloan Research Fellowship and NSF CAREER Award CCF-2145474.}
}
\fi

\date{}
\maketitle

\begin{abstract}
    We construct a classical oracle relative to which $\mathsf{P} = \mathsf{NP}$ but quantum-computable quantum-secure trapdoor one-way functions exist.
    This is a substantial strengthening of the result of Kretschmer, Qian, Sinha, and Tal (STOC 2023), which only achieved single-copy pseudorandom quantum states relative to an oracle that collapses $\mathsf{NP}$ to $\mathsf{P}$. 
    For example, our result implies multi-copy pseudorandom states and pseudorandom unitaries, but also classical-communication public-key encryption, signatures, and oblivious transfer schemes relative to an oracle on which $\mathsf{P}=\mathsf{NP}$.
    Hence, in our new relativized world, classical computers live in ``Algorithmica'' whereas quantum computers live in ``Cryptomania,'' using the language of Impagliazzo's worlds.

    Our proof relies on a new distributional block-insensitivity lemma for $\mathsf{AC^0}$ circuits, wherein a single block is resampled from an arbitrary distribution.
    \comment{maybe there's a better way to phrase this...}
\comment{Potential title: Quantum-Computable Classical Cryptomania in Algorithmica}
\comment{A: How about `Quantum Cryptomania in Algorithmica'}

\end{abstract}

\section{Introduction}

What cryptography would survive in the event of someone or some organization discovering a practical algorithm to solve $\mathsf{NP}$-complete problems?
Unfortunately, almost all computationally-secure classical cryptography relies on the existence of one-way functions \cite{IL89-owf,Gol90-prg}.
Thus, their security certainly requires at least $\mathsf{P} \neq \mathsf{NP}$.

Recent works have 
hinted that quantum analogues of many important cryptographic tasks may not be subject to this barrier.
A series of black-box separations established that \textit{pseudorandom quantum states}---a quantum counterpart to classical pseudorandom generators---can exist relative to oracles that make $\mathsf{NP}$~\cite{KQST23-prs}, $\mathsf{QMA}$~\cite{Kre21-pseudorandom}, and more powerful complexity classes~\cite{LMW24-synthesis} computationally easy.
Combined with parallel efforts to build cryptosystems from pseudorandom states~\cite{AQY22-prs,MY22-prs,AGQY22-prfs,ALY24-pseudorandom,BBOSS24-botprf,CGG24-qccc}, we now know that useful computationally-secure quantum cryptography could conceivably exist in a world where $\mathsf{P} = \mathsf{NP}$.

Upon closer inspection, the cryptographic protocols realized in these oracle separations all require quantum communication, or even long-term quantum memory, in addition to fault-tolerant quantum computation.
So, these oracle separations are hardly a satisfactory replacement for the classical cryptography that we currently use.
For example, it is possible that we find efficient algorithms for problems like breaking SHA-3 and learning with errors \textit{before} we have quantum internet or reliable quantum storage.

Even outside of this nightmare scenario, quantum communication is not always desirable.
Besides the obvious challenges of engineering a robust quantum channel, there are other theoretical and practical limitations to protocols that use quantum communication.
Certain scenarios require broadcasting, which is impossible quantumly due to the no-broadcasting theorem~\cite{BCF+96-broadcast}.
For a concrete example, public-key encryptions and digital signatures with public key infrastructure (PKI) have been essential for securing digital communications.
While quantum versions of public-key encryptions from one-way functions \cite{BGH+23-qpke,Col23-qtd,KMNY24-tamper,MW24-robust} and digital signatures without one-way functions \cite{GC01-qds,MY22-prs} have been explored, these schemes all make use of an \textit{uncloneable} quantum public key, besides only satisfying weak security requirements such as one-time security.
This unclonability is undesirable from the PKI perspective because it prevents the PKI from distributing the public key: once the copies that the PKI holds are exhausted, new users would be unable to obtain public keys and take part in the protocols.

Altogether, these limitations of quantum communication in cryptography give rise to the following natural question:
\begin{center}
    \emph{What quantum cryptography with \emph{classical communication} is still possible if $\mathsf P = \mathsf{NP}$?}
\end{center}

\subsection{Main Result}

Consider a \emph{quantum-computable} one-way function (OWF), which is like an ordinary one-way function except that instead of mandating an efficient classical evaluation algorithm, we permit (pseudo-deterministic) quantum algorithms as well.
Security must hold against both classical and quantum adversaries.
It is clear that this is a weakening of traditional (quantum-secure) one-way functions.
Since the only difference is in permitting evaluation by a quantum computer, one might be optimistic that this object is not so different from one-way functions, perhaps by employing some clever dequantizations.
After all, a $\mathsf{BPP}$-computable pseudo-deterministic one-way function can be derandomized through standard black-box techniques.\footnote{In particular, a $\mathsf{BPP}$-computable OWF $f(x; r)$ gives a distributional OWF $f'(x, r) := f(x; r)$, which can then be used to construct a standard OWF \cite{IL89-owf}.}

In this work, we show, surprisingly, that quantum-computable OWFs can exist in an oracle world where $\mathsf{P} = \mathsf{NP}$, and therefore dequantizing a quantum-computable OWF is impossible in a black-box fashion.
In fact, our main theorem is a significant strengthening of this, where we construct quantum-computable \textit{trapdoor} one-way functions that are consistent with $\mathsf{P} = \mathsf{NP}$.

\begin{theorem}[\Cref{thm:main_towf}, informal]
    \label{thm:main_informal}
    There exists a classical oracle relative to which $\mathsf P = \mathsf{NP}$ and a quantum-computable trapdoor one-way function exists.
    Furthermore, the trapdoor one-way function has pseudorandom public keys.
\end{theorem}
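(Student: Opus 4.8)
The plan is to follow and strengthen the template of Kretschmer--Qian--Sinha--Tal. The oracle will be a pair $\mathcal{O} = (\mathcal{G}, \mathcal{A})$, where $\mathcal{G}$ is a randomized ``cryptographic'' oracle encoding the trapdoor one-way function and $\mathcal{A}$ is a classical oracle that collapses $\mathsf{NP}$ to $\mathsf{P}$ but is engineered so as not to help break the cryptography. For $\mathcal{G}$ I would, at each security parameter $\lambda$, encode a key-generation map $F\colon \mathit{td}\mapsto pk$ and, for every $pk$, a trapdoor permutation $f_{pk}$ that is invertible given the matching $\mathit{td}$, arranged so that both $F$ and each $f_{pk}$ can be evaluated with a single quantum query to a planted \emph{Forrelation gadget} inside $\mathcal{G}$, whereas the gadget is inert to classical (indeed $\mathsf{AC^0}$) access --- so these maps are quantum-computable but not classically computable, even with $\mathcal{A}$. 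The gadget data and the trapdoor information are buried in uniformly random padding over a domain of size $2^{\poly(\lambda)}$, so that recovering $\mathit{td}$ from $pk = F(\mathit{td})$, or even recognizing a valid public key, amounts to an unstructured search. For $\mathcal{A}$ I would reuse the $\mathsf{NP}$-collapsing oracle of Kretschmer--Qian--Sinha--Tal: a classical oracle answering ``does this nondeterministic polynomial-time oracle machine accept'', defined by recursion on the input length so that the definition is well founded, and such that its answers --- relative to the random part of $\mathcal{G}$ --- are computed by constant-depth circuits of subexponential size.

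With this setup, $\mathsf{P} = \mathsf{NP}$ relative to $\mathcal{O}$ holds for \emph{every} oracle in the support of $\mathcal{G}$, essentially by the definition of $\mathcal{A}$, and the honest algorithms are routine: pseudo-determinism of the quantum $\mathsf{KeyGen}$, $\mathsf{Eval}$, and $\mathsf{Invert}$ follows from the syntactic design of $\mathcal{G}$'s interface together with standard amplification. The substance of the proof is \emph{quantum one-wayness} of $f_{pk}$ --- together with the reduction of public-key pseudorandomness to it --- against an adversary that \emph{also} holds the powerful oracle $\mathcal{A}$. Here my plan has three moves. First, pass to the query model: a polynomial-time quantum attacker makes $\poly(\lambda)$ quantum queries to $\mathcal{G}$ and $\poly(\lambda)$ classical queries to $\mathcal{A}$, all of length $\poly(\lambda)$, which is exponentially shorter than the challenge instance's ``block'' inside $\mathcal{G}$. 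Second, and this is the technical core, \emph{neutralize} $\mathcal{A}$: after averaging out all of $\mathcal{G}$ except the single challenge block that carries the challenged instance's Forrelation gadget and trapdoor, the answer of $\mathcal{A}$ to any fixed query is computed by a depth-$O(1)$, size-$2^{\poly(\lambda)}$ circuit of that block --- well below the switching-lemma threshold, since the block itself has $2^{\poly(\lambda)}$ bits. Now invoke the new \emph{distributional block-insensitivity lemma}: because the challenge block is drawn from the (highly non-product) Forrelation distribution and the hybrids must resample it to other, a priori arbitrary, distributions, one needs the strengthened statement that resampling a single block from an \emph{arbitrary} distribution perturbs any such $\mathsf{AC^0}$ circuit's output only negligibly. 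This makes every answer of $\mathcal{A}$ statistically independent of the challenge instance. Third, finish with a purely quantum query lower bound against the residual attacker, who now effectively queries only $\mathcal{G}$: one-wayness of $f_{pk}$, hardness of inverting $F$, and hardness of recognizing a valid public key all reduce to an unstructured-inversion lower bound --- a quantum adversary can evaluate $F$ and $f_{pk}$ in the forward direction, but this does not help it invert --- and public-key pseudorandomness then follows via a hybrid over the hidden instance locations.

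I expect the difficulty to concentrate in the second move, in two forms. The first is proving the new distributional block-insensitivity lemma itself: existing block-insensitivity arguments for $\mathsf{AC^0}$ rely on the resampled block being \emph{uniform}, so that random restrictions and product structure are available, and allowing an adversarially chosen block distribution while keeping the error negligible appears to require a genuinely new argument. The second is the design tension the lemma is meant to dissolve: the Forrelation gadget must be ``quantum-usable'' enough for $\mathsf{KeyGen}$ and $\mathsf{Eval}$ to exploit it, yet ``quantum-useless'' enough that a polynomial-query quantum adversary without the trapdoor can neither invert $f_{pk}$ nor distinguish public keys from uniform --- reconciling these is exactly what forces the gadget to sit at a hidden location over an exponential domain, and what makes the final query lower bound delicate rather than routine.
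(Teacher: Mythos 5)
Your high-level architecture matches the paper's --- a Forrelation-encoded cryptographic oracle, the KQST-style recursively defined oracle collapsing $\mathsf{NP}$ to $\mathsf{P}$, neutralization of that oracle by converting its answers to $\mathsf{AC^0}$ circuits, and a new distributional block-insensitivity lemma combined with a BBBV-type hybrid --- and you correctly locate the main technical novelty in \Cref{lem:ac0_dist_block_sensitivity}. But your instantiation has a genuine gap: you encode a trapdoor \emph{permutation} $f_{pk}$, and this collides with the very lemma you plan to use. \Cref{lem:ac0_dist_block_sensitivity} requires the blocks to be drawn i.i.d.\ from a product distribution $\Dist^K$ and resampled from that same distribution; the entries of a random permutation are correlated across blocks, so the resampling argument does not apply (the paper explicitly flags this obstruction in its overview). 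Relatedly, your endgame --- ``a purely quantum query lower bound for unstructured inversion'' --- is forced on you in a hard form precisely because of the permutation choice: every image has a preimage, so you cannot argue inversion is information-theoretically impossible after a hybrid, and you would need a genuine quantum inversion lower bound in the presence of the encoded inversion oracle. The paper avoids both problems by using a length-expanding random function $F(pk,\cdot)\colon\{0,1\}^n\to\{0,1\}^{6n}$ with an explicitly encoded inversion table $I$; one-wayness then follows by resampling the challenge image to a fresh $y^*$, which lies outside the image of $F(pk^*,\cdot)$ except with probability $2^{-5n}$, so no search lower bound is ever invoked (\Cref{prop:oneway-fake-pk}).

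The second missing piece is how the inversion functionality --- which must sit somewhere in the oracle for trapdoor correctness --- is prevented from helping the adversary; ``burying the trapdoor in random padding'' does not by itself rule out the adversary exploiting $I$ through the $\mathsf{NP}$ oracle. This is half of the paper's security proof: one resamples $G(td)$ and $I(td,\cdot)$ \emph{jointly} as a single block to establish pseudorandom public keys (\Cref{prop:pseudorandom-pk}), and then exploits that $G$ is length-expanding, so a uniformly random $pk^*$ lies outside its image except with probability $2^{-2n}$ and hence no trapdoor for $F(pk^*,\cdot)$ exists anywhere in $I$. One-wayness under honestly generated keys is then recovered from one-wayness under uniform keys via \Cref{fact:fake-pk-conversion}. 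Your proposal would need substitutes for these steps, and as written (with a permutation and a Grover-style finish) it does not go through.
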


By invoking known post-quantum fully black-box reductions, we obtain that relative to the same classical oracle as \Cref{thm:main_informal}, the following \emph{classical-communication} cryptographic schemes also exist:
\begin{itemize}
    \item Public-key encryptions with semantic security. (\Cref{cor:cryptomania})
    \item Public-key signatures with existential unforgeable security. (\cite[Section 5.1]{Son14-pqc})
    \item Oblivious transfer protocols with simulation security. (\Cref{cor:cryptomania})
\end{itemize}
Therefore, it appears that in this oracular world, classical computers live in ``Algorithmica'' while quantum computers live in ``Cryptomania'' \cite{Imp95-average}, even without the need of any long-term quantum memory or quantum communication!

\comment{How do we feel about adding this paragraph?}
As a corollary, all of these cryptographic schemes are separated from $\mathsf P \neq \mathsf{NP}$ (thus OWFs) as well.
Note that prior to our work, even mildly quantum variants of these were not known to be separated from OWFs (e.g., public-key encryption with quantum ciphertext, or quantum signatures with standard security).

\paragraph{Implications for Quantum Pseudorandomness.}
Recall that the previous oracle separations of comparable nature are instantiations of \textit{pseudorandom quantum states} relative to oracles that make classical cryptography easy~\cite{Kre21-pseudorandom,KQST23-prs,LMW24-synthesis}.
Our result is strictly stronger, then, because quantum-computable one-way functions are also sufficient to construct pseudorandom states since these reductions to one-way functions are fully-black-box~\cite{JLS18-prs,BS19-binary,AGQY22-prfs}.
Specifically, one can use the now standard binary phase construction:
\[
\ket{\psi_k} \coloneqq \frac{1}{\sqrt{2^n}}\sum_{x \in \{0,1\}^n} (-1)^{f_k(x)}\ket{x},
\]
where $\{f_k\}$ is a keyed family of pseudorandom functions constructed from the one-way functions~\cite{Zha21-qprf}.
This directly answers an open problem from~\cite{KQST23-prs}:
\begin{corollary}
    There exists a classical oracle relative to which $\mathsf P = \mathsf{NP}$ and many-copy-secure pseudorandom states exist.
\end{corollary}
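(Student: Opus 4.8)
The plan is to obtain this corollary as an immediate consequence of \Cref{thm:main_informal} together with a chain of fully-black-box, post-quantum-secure reductions, all of which relativize. Fix the classical oracle $\mathcal{O}$ from \Cref{thm:main_informal}. Relative to $\mathcal{O}$ we have $\mathsf{P} = \mathsf{NP}$, and \Cref{thm:main_informal} in particular provides a quantum-computable, quantum-secure one-way function (obtained from the trapdoor one-way function by outputting the public key together with the image and forgetting the trapdoor). Every subsequent object will use this one-way function only as a black box, accessing $\mathcal{O}$ solely through its quantum evaluation algorithm, so the associated security reductions continue to hold in the relativized world.

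First I would pass from the one-way function to a quantum-computable, quantum-secure pseudorandom function family $\{f_k\}$, using the standard one-way-function-to-pseudorandom-generator-to-pseudorandom-function pipeline with the quantum security analysis of \cite{Zha21-qprf}; since each stage is fully black box, the construction and its security persist relative to $\mathcal{O}$. The one point deserving care is that ``quantum-computable'' here means \emph{pseudo-deterministic}: the base one-way function, and hence each derived primitive, may err with small probability on a given input. I would handle this in the usual way, amplifying the success probability to $1 - 2^{-\omega(n)}$ by repetition and majority vote, so that the composed evaluation algorithm for $f_k$ remains correct with overwhelming probability even when executed coherently over a superposition of inputs.

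Then I would instantiate the binary-phase pseudorandom-state construction on $\{f_k\}$, setting
\[
\ket{\psi_k} \coloneqq \frac{1}{\sqrt{2^n}} \sum_{x \in \{0,1\}^n} (-1)^{f_k(x)} \ket{x} .
\]
Given $k$, a quantum algorithm prepares a state negligibly close to $\ket{\psi_k}$ efficiently: apply Hadamards to obtain the uniform superposition, coherently evaluate the error-amplified algorithm for $f_k$ to imprint the sign, and uncompute the garbage; the $2^{-\omega(n)}$ per-input error contributes only negligible amplitude. By \cite{JLS18-prs,BS19-binary,AGQY22-prfs}, whenever $\{f_k\}$ is a quantum-secure pseudorandom function the ensemble $\{\ket{\psi_k}\}$ is a pseudorandom state ensemble secure against adversaries that receive polynomially many copies, and this reduction is itself fully black box, hence relativizes to $\mathcal{O}$. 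Composing the three steps yields many-copy-secure pseudorandom states relative to an oracle on which $\mathsf{P} = \mathsf{NP}$.

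I do not anticipate a genuine obstacle within this corollary; the entire difficulty lies in \Cref{thm:main_informal}. It is worth noting, though, precisely where this strengthens \cite{KQST23-prs}: there the oracle only supported a \emph{single-copy} pseudorandom state ensemble and no one-way function, so GGM-style amplification to a pseudorandom function family---and hence the multi-copy guarantee of the binary-phase construction---was out of reach. Here the existence of a true quantum-computable one-way function is exactly what removes that restriction.
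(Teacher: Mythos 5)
Your proposal is correct and follows essentially the same route as the paper: obtain a quantum-computable (quantum-secure) one-way function relative to the oracle of \Cref{thm:main_informal}, build quantum-computable pseudorandom functions via the relativizing black-box pipeline of \cite{Zha21-qprf}, and apply the binary-phase construction whose multi-copy security reduction \cite{JLS18-prs,BS19-binary,AGQY22-prfs} is fully black box. The only (harmless) difference is that the paper could also shortcut the GGM step, since it directly constructs quantum-computable PRFs relative to its oracle (\Cref{thm:main_prf}); your extra care about pseudo-determinism and error amplification is consistent with, if more explicit than, the paper's treatment.
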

For comparison, \cite{KQST23-prs} only achieved \textit{single-copy-secure} pseudorandom states.
Following more recent advances, one can build even more versatile quantum pseudorandomness primitives from one-way functions, including pseudorandom states of arbitrary polynomial length~\cite{BS20-scalable} and even pseudorandom unitaries~\cite{MPSY24-pru,CBB+24-pru,MH24-pru}.
As a corollary, these also exist relative to our oracle as well.

\subsection{Technical Overview}

\paragraph{Warmup: Separating Quantum-Computable PRFs.}
We first show how to construct an oracle relative to which $\mathsf{P} = \mathsf{NP}$ and quantum-computable \textit{pseudorandom functions} (PRFs) exist.
\comment{W: Do we need to define PRFs? Q: I think it's fine without}
The analysis in this special case is conceptually simpler, though still rich enough to capture most of the important ideas needed to generalize to trapdoor one-way functions.

The main idea behind this oracle separation is to construct a random oracle in a special encoding that is only accessible by $\mathsf{BQP}$ but not $\mathsf{PH}$.
This encoding technique was previously seen in the work of Aaronson, Ingram, and Kretschmer~\cite{AIK21-acrobatics}, although similar ideas have also appeared in even earlier works~\cite{BM99-relativized,ABK16-cheat}.
Then intuitively, the quantum-computable pseudorandom functions can simply be constructed by direct evaluation of this quantum-computable random oracle.

We now explain the oracle construction in more detail.
Similar to the oracle used in~\cite{KQST23-prs}, our oracle $\mathcal{O}$ can be thought of as a pair of oracles $(A, B)$.
The oracle $A$ encodes the $\mathsf{BQP}$-accessible random oracle, and the addition of auxiliary oracle $B$ has the effect of making $\mathsf{P} = \mathsf{NP}$.
Morally speaking, $B$ behaves as if it were an oracle for $\mathsf{PH}^A$.
Thus, showing the security of the quantum-computable PRFs relative to $\mathcal{O}$ amounts to proving security against polynomial-time quantum adversaries that can query any $\mathsf{PH}^A$ language.
For brevity, we'll call these $\mathsf{BQP^{PH}}$ (oracular) adversaries in this exposition.

The key difference between our oracle construction and that of~\cite{KQST23-prs} is in how we build $A$.
In~\cite{KQST23-prs}, $A$ is simply a random oracle, whereas in this warmup, $A$ is an \textit{encoding} of a uniformly random oracle.
As with many of the oracles constructed in~\cite{AIK21-acrobatics}, we encode each bit of the oracle using the \textit{Forrelation} problem~\cite{Aar10-bqp-ph,RT19-bqp-ph}.
Recall that Forrelation (in its broadest sense) is the following task: given query access to a pair of functions $f, g : \{0,1\}^\ell \to \{0,1\}$, distinguish between
\begin{itemize}
    \item[(NO)] $f$ and $g$ are independent uniformly random functions, or
    \item[(YES)] $f$ and $g$ are individually random, but sampled in such a way that $f$ is noticeably correlated with the Boolean Fourier transform of $g$ (i.e., $f$ and $g$ are ``Forrelated'').
\end{itemize}
In contrast to~\cite{KQST23-prs}, whose analysis required a carefully-crafted version of Forrelation, we only require black-box use of one key fact from \cite{RT19-bqp-ph}: that these two distributions are efficiently distinguishable by $\mathsf{BQP}$ algorithms, but not by $\mathsf{PH}$ algorithms.

Letting $L$ be the random oracle, our strategy for encoding $L$ is to hide each bit of its output behind an instance of Forrelation.
That is, for each $x \in \{0,1\}^*$, we add a region of $A$ that encodes a pair of uniformly random Boolean functions if $L(x) = 0$, or otherwise a pair of Forrelated functions if $L(x) = 1$.
It is straightforward to show that oracle access to $A$ enables a quantum algorithm to recover any bit of $L$.
The candidate quantum-computable PRF family $\{f_k\}$ is the natural choice defined by $f_k(x) \coloneqq L(k||x)$. \comment{A: this notation is confusing, as we just defined $L(x)$. W: what about $L(k||x)$?}

The main technical difficulty is to formalize the intuition that this encoding is secure enough against $\mathsf{PH}$ that distinguishing the PRFs from random remains hard even against $\mathsf{BQP^{PH}}$ adversaries.
We expect this to be the case because to $\mathsf{PH}$, the oracle $A$ looks completely random and independent of $L$, and therefore a $\mathsf{BQP^{PH}}$ adversary should not have much more power than a $\mathsf{BQP}$ adversary.
To make this argument rigorous, we have to prove that a $\mathsf{BQP^{PH}}$ adversary, given oracle access to an auxiliary function $h$, cannot distinguish whether $h$ is uniformly random or whether $h$ is one of the pseudorandom functions $f_k$.
We establish this by observing that it suffices to show the following: if the adversary is given a uniformly random $h$, then the adversary is unlikely to detect a change to the oracle $A$ so as to make it consistent with $h = f_k$ for a random key $k$.

For the PRF family $\{f_k\}_{k \in \{0,1\}^n}$ of functions $f_k: \{0,1\}^n \to \{0,1\}$, we can view the part of $A$ encoding these functions as a $2^n \times 2^n$ matrix of Forrelation instances.
In this matrix, the rows are indexed by keys $k \in \{0,1\}^n$, the columns by inputs $x \in \{0,1\}^n$, and the corresponding Forrelation instance is Forrelated if $f_k(x) = 1$, or uniform otherwise.
Given $h: \{0,1\}^n \to \{0,1\}$, the goal of the adversary is to determine whether there is a row of the matrix whose pattern of Forrelated/uniform instances is consistent with $h$.
See \Cref{fig:prf_distinguishing_task} for an example.


\begin{figure}
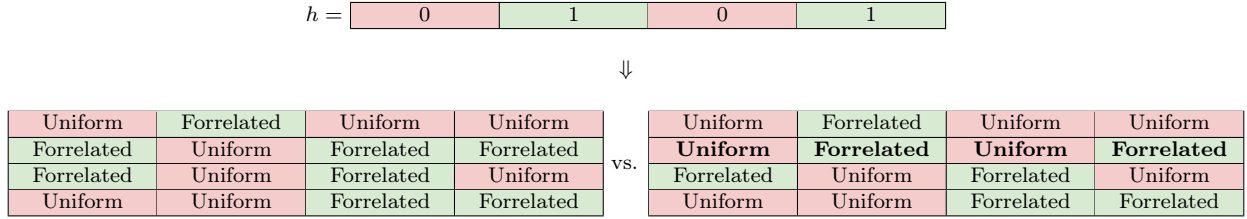

    \centering
    \scriptsize
    $h =$ \begin{tabularx}{0.48\textwidth}{|*{6}{Y}}
    \hline
    \cellcolor{mylightred} 0 &
    \cellcolor{mylightgreen} 1 &
    \cellcolor{mylightred} 0 &
    \cellcolor{mylightgreen} 1\\\hline
    \end{tabularx}
    \[
    \Downarrow
    \]
    \begin{tabularx}{0.48\textwidth}{|*{4}{Y}}
    \hline
    \Ucell & \Fcell & \Ucell & \Ucell\\\hline
    \Fcell & \Ucell & \Fcell & \Fcell\\\hline
    \Fcell & \Ucell & \Fcell & \Ucell\\\hline
    \Ucell & \Ucell & \Fcell & \Fcell\\\hline
    \end{tabularx}
    vs.
    \begin{tabularx}{0.48\textwidth}{|*{4}{Y}}
    \hline
    \Ucell & \Fcell & \Ucell & \Ucell\\\hline
    \bf\Ucell & \bf\Fcell & \bf\Ucell & \bf\Fcell\\\hline
    \Fcell & \Ucell & \Fcell & \Ucell\\\hline
    \Ucell & \Ucell & \Fcell & \Fcell\\\hline
    \end{tabularx}
    \caption{An example of the distinguishing task for our quantum-computable PRFs with $n = 2$.
    The goal is to decide (left) whether the Forrelation instances in the matrix are all randomly assigned to either uniform or Forrelated, or (right) whether the pattern of 0s/1s in $h$ matches the pattern of uniform/Forrelated in one of the rows.}
    \label{fig:prf_distinguishing_task}
\end{figure}

Our security proof proceeds in a similar fashion to the $\mathsf{BQP^{PH}}$ lower bound for the so-called $\textsc{OR} \circ \textsc{Forrelation}$ problem that was considered in prior works~\cite{AIK21-acrobatics,KQST23-prs}.
In $\textsc{OR} \circ \textsc{Forrelation}$, we are given a list of instances of the Forrelation problem, and must decide whether they are all uniform (NO), or whether a single one of the instances is Forrelated (YES).
Using the well-known correspondence between $\mathsf{PH}$ query algorithms and $\mathsf{AC^0}$ circuits~\cite{FSS84-circuit-oracle}, the $\mathsf{BQP^{PH}}$ lower bound for $\textsc{OR} \circ \textsc{Forrelation}$ reduces to a certain type of sensitivity concentration result for $\mathsf{AC^0}$ circuits.
The key step~\cite[Lemma 45]{AIK21-acrobatics} shows that for most uniformly random $M \times N$ Boolean matrices, an $\mathsf{AC^0}$ circuit is unlikely to notice the change if we uniformly swap out one of the rows for fresh random bits.

We show that security of our PRF ensemble also reduces to a comparable statement about the sensitivity of $\mathsf{AC^0}$ circuits, but under a distribution of Boolean matrices that is not uniform.
Instead, we have to consider matrices like those sampled in \Cref{fig:prf_distinguishing_task}, where the blocks are a random pattern of uniform or Forrelated.
For this purpose, we are able to show the following main technical lemma, which informally states the following.
Consider a quasi-polynomial-size $\mathsf{AC^0}$ circuit that takes $K$ blocks of bits, where each block is sampled independently from an arbitrary distribution $\Dist$.
Then for sufficiently large $K$, it is hard for it to notice if any block is resampled from the same distribution $\Dist$.
More formally,

\begin{lemma}[\Cref{lem:ac0_dist_block_sensitivity}, restated]
\label{lem:ac0_dist_block_sensitivity_informal}
Let $f: \{0,1\}^{KM} \to \{0,1\}$ be an $\mathsf{AC^0}$ circuit of size $s$ and depth $d$. Let $\Dist$ be a distribution over $\{0,1\}^M$. 
Let $x\sim \Dist^K$ be an input to $f$, viewed as a $K \times M$ matrix.
Let $y$ be sampled depending on $x$ as follows: uniformly select one of the rows of $x$, resample that row from $\Dist$, and leave the other rows of $x$ unchanged. Then for any $p > 0$:
\[
\Pr_{x \sim \Dist^K}\left[\Pr_y\left[f(x) \neq f(y)\right] \ge p \right] \le \frac{4K}{p} \cdot 2^{-\frac{p K}{O(\log s)^{d - 1}}}.
\]
\end{lemma}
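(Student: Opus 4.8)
The plan is to reduce this to a switching-lemma / random-restriction argument in the style of [Lemma 45]{AIK21-acrobatics}, but carried out over the product distribution $\Dist^K$ rather than the uniform distribution. The high-level idea: if $f$ has large "distributional block sensitivity" on a noticeable fraction of inputs drawn from $\Dist^K$, then we can use this to build a decision tree (equivalently, a small certificate structure) that distinguishes $\Dist^K$-inputs too well, contradicting the fact that $\mathsf{AC}^0$ circuits simplify under random restrictions. The key conceptual point is that resampling one block of $K$ is, from the circuit's point of view, a very "gentle" perturbation when $K$ is large, because a random restriction of $\Dist^K$ that keeps all-but-one block alive still kills the circuit with high probability.

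The main steps I would carry out are as follows. First, \emph{symmetrize and reduce to a single distinguished block}: by the union bound over the $K$ choices of row, it suffices to bound, for a fixed row $i$, the probability over $x \sim \Dist^K$ that $\Pr_{r \sim \Dist}[f(x) \neq f(x^{(i \to r)})]$ is at least $p/1$ — then sum over $i$, which produces the factor $K$. Second, \emph{replace block-sensitivity with a two-sample statement}: the quantity $\Pr_r[f(x)\neq f(x^{(i\to r)})]$ being large on many $x$ means that the Boolean function $g_i(x) := f(x)$, viewed as depending on block $i$ with the other blocks as "context," is genuinely sensitive to block $i$ under $\Dist$ for many contexts. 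Third, and this is the heart, \emph{apply a multi-switching or iterated-restriction argument}: hit all blocks \emph{except} block $i$ with an independent random restriction drawn from $\Dist$ restricted appropriately (i.e.\ fix each of the other $K-1$ blocks to a sample from $\Dist$ with some survival probability, or more precisely fix $K-1$ of the blocks entirely to samples and argue over the randomness of which blocks), collapsing the depth-$d$ circuit $f$ down to a decision tree of small depth in the bits of block $i$; the standard $\mathsf{AC}^0$ switching lemma (Håstad) gives that a single such layer survives with probability $\exp(-\Omega(1/\log s))$ per "unit" of depth, and iterating over $d-1$ layers yields the $O(\log s)^{d-1}$ in the exponent. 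Fourth, \emph{convert decision-tree depth into a sensitivity bound}: once $f$ restricted to block $i$ (with the other blocks fixed to $\Dist$-samples) is a shallow decision tree, the probability that resampling block $i$ from $\Dist$ flips the output is at most the probability a fresh $\Dist$-sample disagrees with the current one on some queried coordinate, which for a depth-$t$ tree is controlled, and crucially vanishes geometrically as we take $K \to \infty$ because we get to \emph{average over which $K-1$ blocks were used as restriction} — giving the $2^{-\Omega(pK/\log^{d-1}s)}$ decay. Finally, a Markov/averaging step converts the bound on expected flip-probability into the stated tail bound (the $1/p$ and the extra factor come out of this).

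The step I expect to be the main obstacle is making the random-restriction argument work over an \emph{arbitrary} distribution $\Dist$ on each block, rather than the uniform distribution. Håstad's switching lemma is stated for $p$-random restrictions of \emph{independent uniform} bits; here the bits inside a block are arbitrarily correlated. The trick I would use is that we are not restricting \emph{within} a block — we restrict at the \emph{granularity of whole blocks}, fixing $K-1$ of the $K$ blocks entirely to i.i.d.\ $\Dist$-samples and leaving one block fully live. This is a "block restriction" with block-survival probability $1/K$, and one can prove a block-switching lemma: after such a restriction a depth-$2$ circuit (DNF/CNF) over the blocks becomes, with probability $1 - \exp(-\Omega(\cdot))$, a decision tree that queries only a few \emph{whole blocks}, hence only a few bits once those blocks collapse. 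The arbitrary-$\Dist$ correlations are harmless because within the live block we never "partially fix" anything. Getting the parameters to line up — in particular showing the per-layer survival probability is $\exp(-\Omega(1/\log s))$ so that $d-1$ iterations give $O(\log s)^{d-1}$, and that the final flip probability scales like $2^{-pK/O(\log s)^{d-1}}$ — is the delicate bookkeeping, and I would expect to need a careful "multi-switching lemma" (Håstad–Rossman–Servedio–Tan style) to handle the $t$ outputs of one layer simultaneously when $d > 2$.
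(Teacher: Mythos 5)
There is a genuine gap, and it appears already in your first step. You propose to union-bound over the $K$ choices of row and then bound, for a \emph{fixed} row $i$, the probability that $\Pr_{r \sim \Dist}[f(x) \neq f(x^{(i \to r)})] \ge p$. This decomposition is fatally lossy: the per-row flip probability need not be small for any single row. Take $f(x) = x_{1,1}$ (the first bit of the first row) and $\Dist$ uniform; then for $i=1$ the inner probability equals $1/2$ for every $x$, so your per-row tail probability is $1$ and the union bound gives a trivial bound, even though the quantity the lemma actually controls, $\Pr_y[f(x)\neq f(y)] = \frac{1}{K}\sum_i \Pr_r[f(x)\neq f(x^{(i\to r)})] \le \frac{1}{K}$, is tiny. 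The $\frac{1}{K}$ averaging over which row is resampled is the entire source of the exponential decay in $K$, so the statement is inherently about concentration of the \emph{total} (row-averaged) sensitivity and cannot be proved row-by-row. Relatedly, your step four is not a correct mechanism: for a depth-$t$ decision tree on the live block, the probability that a fresh $\Dist$-sample disagrees with the current one on some queried coordinate is typically close to $1$ (e.g.\ $1 - 2^{-t}$ for uniform $\Dist$), not geometrically small.

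Your central technical engine --- a ``block switching lemma'' under an arbitrary per-block distribution $\Dist$ --- is also left as an assertion, and it is exactly the hard part. The paper sidesteps it entirely with a two-sample trick: draw $w = (w_0, w_1) \sim \Dist^{2K}$, define $g_w(z) = f(w_z)$ where row $i$ of $w_z$ is taken from $w_{z_i}$, and observe that for fixed $w$ each input bit of $f$ becomes a constant or a literal in $z_i$, so $g_w \in \mathsf{AC^0}[s,d]$. Then $K\cdot\Pr_y[f(x)\neq f(y)] = \E_{x',z}[\s^z(g_w)]$, and since $(w,z)$ can be sampled independently with $z$ \emph{uniform} on $\{0,1\}^K$, the known uniform-distribution sensitivity tail bound for $\mathsf{AC^0}$ (\cite[Lemma 43]{AIK21-acrobatics}) applies directly to $g_w$; Markov plus a pigeonhole step finishes. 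This is where the arbitrariness of $\Dist$ is absorbed --- into the choice of $w$, not into the restriction distribution --- and it simultaneously keeps all $K$ rows in play, which is what your fixed-$i$ reduction throws away. If you want to salvage your route, you would need to (a) keep the row-averaging and prove a tail bound on the sum $\sum_i \Pr_r[f(x)\neq f(x^{(i\to r)})]$, and (b) find some analogue of the two-sample reduction so that the block-level randomness you are switching over is a product of uniform bits; at that point you would essentially have reconstructed the paper's argument.
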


Notably, this holds for \textit{any} distribution $\Dist$, and not merely the distribution of randomly chosen uniform/Forrelated blocks.
Previously, \Cref{lem:ac0_dist_block_sensitivity_informal} was only known to hold in the special case where $\Dist$ is the uniform distribution~\cite[Lemma 45]{AIK21-acrobatics}.

The proof of this lemma involves a careful construction of a related $\mathsf{AC^0}$ circuit from $f$, whose sensitivity corresponds to the probability of $f$ noticing the block being resampled.
This then allows us to relate this to known sensitivity bounds on $\mathsf{AC^0}$ circuits~\cite{AIK21-acrobatics}.
We refer interested readers to \Cref{lem:ac0_dist_block_sensitivity} for the details.

\paragraph{Generalization to Trapdoor Functions.}
Since \Cref{lem:ac0_dist_block_sensitivity_informal} morally lets us resample an arbitrary block that can be arbitrarily distributed, we can in fact prove that the distinguishing task in \Cref{fig:prf_distinguishing_task} is hard for any distribution of functions instead of just uniform.
A natural idea then is to use Forrelation to encode a more structured oracle that allows us to construct more structured cryptographic primitives.
However, we also cannot introduce more structure than what \Cref{lem:ac0_dist_block_sensitivity_informal} allows us to handle.
For example, to get quantum-computable oblivious transfer protocols, a natural idea would be to encode random trapdoored permutations or something similar instead.
However, unlike a random function, a random permutation is already problematic since each entry in a random permutation is weakly correlated with the other entries.
On the other hand, \Cref{lem:ac0_dist_block_sensitivity_informal} only works with product distributions.

Inspired by this example, we instead start with random functions and only introduce just enough structure to have a trapdoor.
Specifically, to obtain an oracle relative to which some form of quantum-computable trapdoor one-way functions exist but $\mathsf{P} = \mathsf{NP}$, we utilize a similar Forrelation encoding, but the oracle $A$ is no longer random.
Instead, $A$ encodes a triple of functions $G, F, I$ where
\begin{itemize}
    \item $G(td)$ is a random function mapping a trapdoor $td$ to its public key $pk$;
    \item $F(pk, x)$ is also a random function mapping a public key $pk$ and input $x$ to an output $y$;
    \item Finally, $I(td, y)$ is the only ``structured'' function that inverts an image $y$ of $F(G(td), \cdot)$ using the trapdoor $td$.
\end{itemize}
Then the quantum-computable trapdoor one-way function construction is simply evaluating these three functions $G, F, I$ by a $\mathsf{BQP}$ algorithm that queries the Forrelation encodings in $A$.

For security, we want to show that a $\mathsf{BQP^{PH}}$ adversary, given $pk = G(td)$ and $y = F(pk, x)$ for random trapdoor $td$ and input $x$, cannot find a preimage $x'$ such that $F(pk, x') = y$.
Proving this is more challenging than before since the encoded oracle here is more sophisticated than simply a random function.
However, notice that the previous security proof for the PRF reduces to a resampling indistinguishability task (\Cref{fig:prf_distinguishing_task}).
We can similarly reduce security to resampling indistinguishability with some additional steps:

\begin{enumerate}
    \item Starting with the real experiment $(G, F, I)$, we first argue that $G$ and $I$ do not help the adversary in inverting $y$ as follows. \begin{enumerate}
        \item First, resample $G(td) = pk^*$ and set $I(td, \cdot)$ to be the inversion table for $F(pk^*, \cdot)$ instead: this is indistinguishable since we are resampling one of exponentially many blocks of $(G, I)$.
        \item Next, we want to claim that $I$ no longer contains the inversion table for $F(pk, \cdot)$.
            The idea is to argue that $td$ was the only trapdoor that inverts $F(pk, \cdot)$ and after resampling it, there is no trapdoor left to invert $F(pk, \cdot)$ with overwhelming probability.
            For this to hold, it suffices to take $G$ to be length expanding enough so that it is injective with overwhelming probability.
    \end{enumerate}
    \item After the first step, $F(pk, \cdot)$ is essentially a random function independent of the rest of the oracle, which includes $I$, $G$, and the rest of $F$.
        Therefore, we can prove it is one-way using the same approach as before.
        Morally, we next resample $F(pk, x)$ indistinguishably so that $y$ is no longer in the image of $F(pk, \cdot)$, making inversion impossible.
\end{enumerate}

\paragraph{Beyond Trapdoor Functions.}
Observe that this proof sketch actually establishes something stronger.
Specifically, the two steps above prove that our trapdoor function construction satisfies two additional properties, which are respectively: (a) the public keys are pseudorandom and (b) the function is one-way under a truly random public key.
These properties allow us to construct what is called a fakeable public-key encryption (PKE) scheme \cite{GKMRV00-pkeot} which is ``essentially equivalent'' to semi-honest oblivious transfer.
On a high level, a fakeable public-key primitive has a ``fake mode'' of sampling the public key such that security holds even if given the randomness for fakely sampling the public key.
In our case, the fake mode sampling would simply be outputting the input randomness as is.


Finally, we briefly comment on adapting constructions for trapdoor one-way functions with pseudorandom public keys to the quantum-computable setting, such as constructing fully-secure oblivious transfer from a semi-honest protocol.
Building cryptography out of quantum-computable primitives requires additional care,
because it is not always possible to mindlessly substitute a classical primitive with a quantum-computable counterpart.
For example, consider the scenario where we wish to prove that a one-way function was computed correctly in zero-knowledge.\footnote{This is not a contrived example: many oblivious transfer protocol constructions do make use of such functionality.}
Classically, this could be done just by the assumption that one-way functions exist because the statement above is in $\mathsf{NP}$.
However, if we wish to instead prove that a \emph{quantum-computable} one-way function was computed correctly in zero-knowledge, then this would appear to be a $\mathsf{QCMA}$ statement, so the construction breaks down.
We resolve this by observing that as long as we have a \emph{post-quantum fully black-box} reduction \cite{RTV04-reduction} then substituting with a quantum-computable primitive works.

\subsection{Discussion}

\paragraph{Explaining Quantum-Classical Separations.}
There are two high-level reasons why quantum cryptographic primitives are harder to break than classical ones, even if they inherently can only be computationally secure.\footnote{We focus on separations of these strictly-computational primitives, unlike statistical-computational separations such as QKD~\cite{BB84-qkd} vs.\ classical key exchange.}
The first is purely complexity-theoretic: because the challenger is quantum rather than classical, adversaries require stronger computational power to detect patterns produced by the challenger.
For example, whereas inverting a one-way function is a canonical example of an $\mathsf{NP}$ problem, inverting a \textit{quantum-computable} one-way function is a $\mathsf{QCMA}$ problem. The containment $\mathsf{NP} \subseteq \mathsf{QCMA}$ is believed to be strict, and this distinction alone allows for the possibility that quantum cryptography could be beyond the grasp of $\mathsf{NP}$ algorithms.

The second reason is information-theoretic: the challenges themselves could be complex and highly-entangled quantum states, rather than classical bit strings that can be readily copied.
So, we cannot even express the security games as problems within our usual mathematical framework of complexity classes like $\mathsf{P}$, $\mathsf{NP}$, $\mathsf{BQP}$, or $\mathsf{QMA}$: these classes are only equipped to operate on classical inputs!
While some recent efforts have been made to define complexity classes that accept quantum inputs (e.g.\ $\mathsf{unitaryBQP}$~\cite{BEMPQY23-uhlmann}), the connections between these ``$\mathsf{unitary}$'' complexity classes and their classical-input counterparts remain unclear.


We point out this distinction because to date, the existing oracle separations between pseudorandom states and classical cryptography~\cite{Kre21-pseudorandom,KQST23-prs,LMW24-synthesis} have widely been understood as arising from the second feature.
Our \Cref{thm:main_informal} is the first that clearly makes use of the first feature exclusively, because quantum-computable one-way functions output classical security challenges.
For comparison, recall:
\begin{itemize}
    \item \cite{Kre21-pseudorandom} constructs a quantum oracle relative to which $\mathsf{BQP} = \mathsf{QMA}$ and pseudorandom states exist.
    The proof makes crucial use of the quantum-challenge nature of pseudorandom states.
    One way to see this is the fact that quantum-computable one-way functions do not exist relative to this oracle.
    \item \cite{KQST23-prs} constructs an oracle relative to which $\mathsf{P} = \mathsf{NP}$ and single-copy pseudorandom states exist.
    Here, it is not so clear whether the quantumness of the pseudorandom states is essential for this separation, as quantum-computable one-way functions may or may not exist relative to this oracle.
    (We discuss this further in the related work section.)
    \item \cite{LMW24-synthesis} (implicitly\footnote{%
        Take a random language $L$, and let the oracle $\mathcal{O}$ be any $\mathsf{PSPACE}^L$-complete language.
        Then clearly $\mathsf{P}^\mathcal{O} = \mathsf{PSPACE}^\mathcal{O}$.
       \cite{LMW24-synthesis} construct a single-copy pseudorandom state ensemble using queries to $L$ (which can certainly be simulated using queries to $\mathcal{O}$) whose parallel-query security holds relative to \textit{any} oracle $\mathcal{O}$.
    }) constructs a classical oracle relative to which $\mathsf{P} = \mathsf{PSPACE}$ and single-copy pseudorandom states cannot be broken by efficient parallel-query adversaries.
    But $\mathsf{P} = \mathsf{PSPACE}$ implies (for example) that any one-round classical-communication falsifiable protocol can be broken by an efficient parallel-query adversary, because the optimal adversary strategy can be simulated in $\mathsf{PSPACE}$.
    This certainly implies that quantum-computable one-way functions do not exist relative to this oracle.
    So, the security of the pseudorandom state ensemble here also relies on its use of quantum challenges.
\end{itemize}

Thus, our work is the only one to use the first feature alone, hinting at the possibility of quantum advantage for computing a one-way function.
For this reason, a worthwhile direction for future research complementing ours is to give stronger evidence that the second feature (the use of quantum states in computationally-secure cryptography) directly enables separations from classical cryptography.
All of the existing separations in this regard have caveats: \cite{Kre21-pseudorandom} uses a quantum oracle, \cite{KQST23-prs} might not rely on the use of quantum challenges at all, and \cite{LMW24-synthesis} only obtains security against parallel-query adversaries.
For this purpose, we reiterate the following open problem that was raised in earlier works~\cite{Kre21-pseudorandom,KQST23-prs}:

\begin{problem}
\label{prob:better_oracle}
    Construct a classical oracle relative to which $\mathsf{P} = \mathsf{QMA}$ (or at least $\mathsf{BQP} = \mathsf{QCMA}$) and pseudorandom states (or at least quantum commitments) exist.
\end{problem}

The main appeal of \Cref{prob:better_oracle} is that it would answer this conceptual question about the role of quantum challenges in cryptography, \textit{without} necessarily requiring a resolution to the long-standing unitary synthesis problem in quantum query complexity~\cite{AK07-qcma-qma,Aar16-barbados,LMW24-synthesis}.

\paragraph{Related Work.}
Compared to~\cite{KQST23-prs}, who gave a black-box construction of single-copy-secure pseudorandom states relative to an oracle that makes $\mathsf{P} = \mathsf{NP}$, our work has some advantages.
As noted before, our result is strictly stronger, because one can additionally build quantum-computable (trapdoor) one-way functions relative to our oracle. 
These imply single-copy pseudorandom states but are not necessary for them~\cite{Kre21-pseudorandom}.
Also, our proof is somewhat simpler, because we do not require a special version of the Forrelation problem; hardness against $\mathsf{AC^0}$ and easiness for quantum algorithms are sufficient for the construction to work.
One difference, however, is that our oracle is more structured.
Both our oracle and~\cite{KQST23-prs}'s take the form $\mathcal{O} = (A, B)$, where in both cases $B$ is constructed from $A$ in an identical fashion. But, whereas~\cite{KQST23-prs} takes $A$ to be a random oracle, we do not.
The \textit{Aaronson--Ambainis conjecture}~\cite{AA14-structure} provides some evidence that this difference is necessary: if the Aaronson--Ambainis conjecture is true, then any pseudo-deterministic quantum algorithm querying the random oracle can be query-efficiently simulated by a classical algorithm as well.
Thus, intuitively, a random oracle does not assist in building a quantum-computable one-way function that is not also classically-computable.

Our main theorem also improves upon the separation in~\cite[Theorem 4]{AIK21-acrobatics}, which shows that there is an oracle relative to which $\mathsf{P} = \mathsf{NP}$ but $\mathsf{BQP} \neq \mathsf{QCMA}$.
The same complexity class separations hold relative to our oracle, because quantum-computable one-way functions cannot exist if $\mathsf{BQP} = \mathsf{QCMA}$.

Starting with the work of Ananth, Gulati, Qian, and Yuen~\cite{AGQY22-prfs}, there have been a few works~\cite{ALY24-pseudorandom,BBOSS24-botprf,CGG24-qccc} aiming to construct classical-communication quantum cryptography without using a one-way function.
One might intuitively believe that these constructions do not rely on one-way functions because they are based on \textit{logarithmic-output-length} pseudorandom states, which seem weaker.
Nevertheless, our work is the first to black-box separate these logarithmic-output-length pseudorandom states from one-way functions, as they can be black-box constructed from quantum-computable pseudorandom functions~\cite{BS20-scalable}.
Furthermore, we separate a stronger object, a (pseudo-deterministic) trapdoor one-way function.
Pseudorandom state-based constructions are also messier and less elegant due to the use of tomography.

Another recent line of works aim to investigate the (im)possibility of constructing quantum public-key encryption (PKE) schemes from one-way functions \cite{ACCFLM22-qro2ka,BGH+23-qpke,Col23-qtd,BGVV24-owfqpke,KMNY24-tamper,LLLL24-owfqpke,MW24-robust}.
Specifically, it is shown that they can be constructed from OWFs if the public key is allowed to be an uncloneable quantum state.
We do consider the stronger notion of quantum-computable classical-communication PKE, but the construction essentially assumes quantum-computable trapdoor OWFs.
This assumption (as we have shown) is incomparable to OWFs.


\paragraph{Future Directions.}
One open question is to strengthen our separation to separate quantum-computable collision resistance, quantum-computable one-way permutations, or quantum-computable indistinguishability obfuscation and quantum-computable OWFs from $\mathsf{P} = \mathsf{NP}$ as well.
Our oracle construction could be straightforwardly adapted to those settings by replacing the Forrelation-encoded random oracle with a Forrelation encoding of any oracle that instantiates these primitives.
However, proving security of these primitives relative to such oracles remains a challenge. It seems one would need a stronger $\mathsf{AC^0}$ sensitivity bound (like \Cref{lem:ac0_dist_block_sensitivity_informal}) capable of handling a more complicated resampling procedure, e.g.\ where multiple rows may be updated in a correlated fashion.

While our work, along with earlier oracle separations, only studies separations between abstract cryptographic primitives, these also reveal a potential pathway towards finding cryptographically-useful concrete assumptions beneath one-way functions.
Specifically, this work shows that if we consider a candidate one-way function that we only know how to evaluate with quantum computers, then its security could resist even the proof of $\mathsf P = \mathsf{NP}$.
Unlike the work of~\cite{KQST23-prs}, however, it is unclear how to heuristically instantiate our oracle, because we do not know of any candidates for non-oracular problems in $\mathsf{BQP}$ that are hard for $\mathsf{PH}$~\cite{Aar18-bqp-ph}.
However, the recent work of Khurana and Tomer~\cite{KT24-advantage} constructs quantum cryptography from $\mathsf{\#P}$-hardness and quantum advantage conjectures, and our work hints that quantum-computable classical cryptography could be realized from similar assumptions as well.
We leave as a future research direction to investigate concrete quantum-computable classical cryptography instantiations.

Finally, we note that our technical contributions may prove useful towards resolving a certain open problem about oracle separations of complexity classes.
Aaronson~\cite{Aar10-bqp-ph} raised the question of whether there exists an oracle relative to which $\mathsf{NP} \subseteq \mathsf{BQP}$ but $\mathsf{PH} \not\subset \mathsf{BQP}$.
Later work by Aaronson, Ingram, and Kretschmer~\cite{AIK21-acrobatics} conjectured the possibility of a more granular separation, namely: an oracle relative to which $\mathsf{\Sigma}_k^\mathsf{P} \subseteq \mathsf{BQP}$ but $\mathsf{\Sigma}_{k+1}^\mathsf{P} \not\subset \mathsf{BQP}$, for any desired $k \in \Naturals$.
Moreover, they gave a candidate construction of such an oracle, and sketched a possible route towards showing that this oracle satisfies the desired properties~\cite[Section 6.2]{AIK21-acrobatics}.
We remark that \Cref{lem:ac0_dist_block_sensitivity_informal} is precisely one of the steps in their proposed sketch, although it does not seem to be sufficient on its own to achieve the separation.

\section{Preliminaries}

\subsection{Notation}
For a distribution $\Dist$, $x \sim \Dist$ denotes that $x$ is sampled from $\Dist$.
If $K \in \Naturals$, then $x \sim \Dist^K$ means that $x$ is sampled from the product distribution $\underbrace{\Dist \times \cdots \times \Dist}_{k \text{ times}}$.
When $S$ is a finite set, $x \sim S$ denotes sampling from the uniform distribution over $S$.
An algorithm $\mathcal{A}$ with access to an oracle $\mathcal{O}$ is denoted by 
$\mathcal{A}^{\mathcal{O}}$.

$\mathsf{AC^0}[s,d]$ is the set of Boolean circuits of size at most $s$ and depth at most $d$ that are composed of unbounded-fan-in $\AND$, $\OR$, and $\NOT$ gates (where $\NOT$ gates do not contribute towards depth).

We use $\poly(n)$, $\polylog(n)$, $\quasipoly(n)$, and $\negl(n)$ in their standard fashion.
Formally, $\poly(n)$ indicates an arbitrary polynomially-bounded function of $n$, i.e.\ a function $f$ for which there is a constant $c > 0$ such that $f(n) \le n^c$ for all sufficiently large $n$.
$\polylog(n)$ is an arbitrary $f$ satisfying $f(n) \le \log(n)^c$ for all sufficiently large $n$, and $\quasipoly(n)$ is an arbitrary $f$ satisfying $f(n) \le 2^{\log(n)^c}$ for all sufficiently large $n$. Finally, $\negl(n)$ is an arbitrary negligibly-bounded function of $n$, i.e.\ a function $f$ such that for every $c > 0$, for all sufficiently large $n$, $f(n) \le n^{-c}$.

\subsection{Forrelation}

In this section, we introduce the Forrelation distribution and some variants of it, along with their needed properties.

\begin{theorem}[{\cite[Theorem 1.2 and Claim 8.2]{RT19-bqp-ph}}]
\label{thm:raz-tal}
For all sufficiently large $L$, there exists an explicit distribution $\mathcal{F}_L$ that we call the \emph{Forrelation distribution} over $\{0,1\}^L$ such that:
\begin{enumerate}
\item There exists a quantum algorithm $\mathcal{A}$ that makes $\polylog(L)$ queries and runs in time $\polylog(L)$ such that:
\[
\Pr_{x \sim \mathcal{F}_L} [\mathcal{A}^x = 1] - \Pr_{x \sim \{0,1\}^{L}} [\mathcal{A}^x = 1]  \ge 1 - \frac{1}{L^2}.
\]
\item For any $s = \quasipoly(L)$ and $d = O(1)$, there exists some $\delta' = \frac{\polylog(L)}{\sqrt{L}}$ such that for any $C \in \mathsf{AC^0}[s, d]$:
\[
\left|\Pr_{x \sim \mathcal{F}_L} [C(x) = 1] - \Pr_{x \sim \{0,1\}^{L}} [C(x) = 1]  \right| \le \delta'.
\]
\end{enumerate}
\end{theorem}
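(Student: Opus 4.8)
I would follow the Raz--Tal analysis, establishing the two items around a common construction. Write $L = 2N$ with $N = 2^n$, identify a point of $\{0,1\}^L$ with a pair of truth tables $f,g \colon \{0,1\}^n \to \{0,1\}$, and let $H$ be the \emph{normalized} $N \times N$ Hadamard matrix (orthogonal and symmetric, entries $\pm 1/\sqrt N$). Fix a sufficiently small constant $\eps > 0$. The base object is a real ``Gaussian Forrelation'' instance: sample $(\mathbf U,\mathbf V) \in \Reals^{2N}$ jointly Gaussian with mean $0$ and covariance $\left(\begin{smallmatrix} I & \eps H \\ \eps H & I \end{smallmatrix}\right)$ --- positive definite since $H$ has eigenvalues $\pm 1$, equivalently $\mathbf V = \eps H \mathbf U + \sqrt{1-\eps^2}\,\mathbf W$ with $\mathbf W \sim \mathcal N(0,I)$ independent, so $\mathbf U$ and $\mathbf V$ are each marginally standard Gaussian. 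Then discretize: truncate every coordinate to $[-T,T]$ with $T = \polylog(L)$ (which changes nothing except with probability $2Ne^{-T^2/2} = \negl(L)$), and independently round each truncated value $z$ to a bit $b$ with $\E[(-1)^b] = z/T$. Call the resulting distribution over $\{0,1\}^{2N}$ the base instance $\mathcal F_0$; note $f$ alone and $g$ alone are each marginally uniform. Finally take $\mathcal F_L$ to be a product of $t = \polylog(L)$ independent copies of $\mathcal F_0$; this product structure lets us amplify the quantum advantage in item~1 while paying only a hybrid factor in item~2.

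For item~1, on each of the $t$ blocks I would run Aaronson's Forrelation subroutine: prepare the uniform superposition over $\{0,1\}^n$, apply the phase oracle for $f$, the $n$-qubit Hadamard, the phase oracle for $g$, the $n$-qubit Hadamard, and measure, accepting that block iff the outcome is $0^n$. Its acceptance probability is $\bigl|\langle (-1)^f,\, H(-1)^g\rangle / N\bigr|^2$, which a short Wick-type computation shows has expectation $\tfrac1N$ on uniform inputs and $\tfrac1N + \Theta(\eps^2/T^4)$ under $\mathcal F_0$ --- a gap of $\Theta(1/\polylog(L))$ --- with $O(1/\sqrt N)$ fluctuations. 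The algorithm outputs $1$ iff the number of accepting blocks exceeds $\tfrac12 t \cdot \Theta(\eps^2/T^4)$; taking $t = \polylog(L)$ so that $t\eps^2/T^4 \ge C\log L$, a Chernoff bound gives $\Pr_{\mathcal F_L}[\text{output }1] \ge 1 - L^{-2}$ and $\Pr_{\mathcal U}[\text{output }1] \le L^{-2}$, hence advantage $\ge 1 - L^{-2}$. The number of queries and the running time are $\polylog(L)$.

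For item~2, by the truncation bound and a hybrid over the $t$ blocks (each step reducing to a single base instance and a restriction of the circuit) it suffices to bound $|\E_{\mathcal F_0}[C] - \E_{\mathcal U}[C]|$ for any $C \in \mathsf{AC^0}[s,d]$ with $s = \quasipoly(L)$, $d = O(1)$ --- after composing $C$ with the randomized rounding, which merely replaces $C$ by a convex combination of subcube restrictions of $C$ and so preserves the $\mathsf{AC^0}$ Fourier bounds used below. Expanding $C = \sum_S \hat C(S)\chi_S$ over $\{0,1\}^{2N}$, the uniform marginals of $f$ and $g$ force $\E_{\mathcal F_0}[\chi_S] = 0$ unless $S$ is \emph{balanced}, using $m$ of the $u$-coordinates and $m$ of the $v$-coordinates; a Wick (Isserlis) expansion over perfect matchings then gives $|\E_{\mathcal F_0}[\chi_S]| \le m!\,(\eps/T^2)^m N^{-m/2}$, as each matched pair contributes an $\eps H$-entry of magnitude $\eps/\sqrt N$. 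The $m = 1$ (level-$2$) contribution is therefore at most $\tfrac{\eps}{T^2\sqrt N}\sum_{i,j}|\hat C(\{u_i,v_j\})| \le \tfrac{\eps}{T^2\sqrt N} \cdot L_{1,2}(C)$, and Tal's level-$k$ $\ell_1$-Fourier inequality for $\mathsf{AC^0}$ bounds $L_{1,2}(C) \le \polylog(s) = \polylog(L)$, giving $\polylog(L)/\sqrt L$. The contributions of all higher balanced levels are shown, by the more delicate part of the Raz--Tal argument --- random restrictions (H\aa stad's switching lemma) that simplify the circuit, combined with the orthogonality of $H$ to control the alternating permanent-like sums $\sum_\sigma \prod_a H_{x_a y_{\sigma(a)}}$ --- to sum to $O(1/\sqrt L)$ as well (indeed $\negl(L)$ above level $\polylog(L)$). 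Summing, $|\E_{\mathcal F_0}[C] - \E_{\mathcal U}[C]| \le \polylog(L)/\sqrt L$, and multiplying by the $t = \polylog(L)$ hybrid factor yields $|\E_{\mathcal F_L}[C] - \E_{\mathcal U}[C]| \le \polylog(L)/\sqrt L = \delta'$.

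The main obstacle is the circuit-complexity input to item~2. For the dominant level-$2$ term only the smallness $|H_{ij}| = 1/\sqrt N$ of the Hadamard entries and the (switching-lemma-based) concentration of the low-level $\mathsf{AC^0}$ Fourier spectrum are needed --- and using the \emph{$\ell_1$} level bound is essential, since an $\ell_2$-plus-Cauchy--Schwarz step over the $\approx N^{2m}$ balanced sets of size $2m$ would reinsert a factor $N^{m/2}$ and leave a vacuous bound. The genuinely hard part is showing the high Fourier levels do not accumulate: a crude bound on $\sum_{|S| = 2m}|\hat C(S)|\cdot|\E_{\mathcal F_0}[\chi_S]|$ fails because $\E_{\mathcal F_0}[\chi_S]$ does not decay fast enough per set, so one must exploit cancellation in the products of $H$-entries (genuine orthogonality of the Hadamard transform, not just small entries) together with restriction-based simplification of $C$; this is the technical heart of the theorem. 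The remaining points --- that rounding and truncation do not harm the spectral bounds, and that a $\quasipoly(L)$-size, $O(1)$-depth circuit enjoys the needed concentration at scale $k = \polylog(L)$ --- are routine.
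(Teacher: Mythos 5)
You should first note that the paper itself does not prove this statement at all: \Cref{thm:raz-tal} is imported wholesale from Raz--Tal \cite{RT19-bqp-ph} (Theorem 1.2 and Claim 8.2, plus the standard amplification by taking $\polylog(L)$ independent copies, as in \cite{AIK21-acrobatics,KQST23-prs}), and the present paper only uses it as a black box. So the only meaningful comparison is with the cited source. Your construction (Gaussian pair with covariance coupling through $H$, truncation, randomized rounding, product of $\polylog(L)$ blocks), your item~1 analysis (per-block acceptance gap $\Theta(\varepsilon^2/T^4)$ versus $1/N$, threshold count, Chernoff), and your identification of Tal's level-$2$ $\ell_1$-Fourier bound as the source of the $\polylog(L)/\sqrt{L}$ term all match the architecture of the actual argument.

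The genuine gap is in item~2, at exactly the place you flag as ``the technical heart.'' Your plan is a direct Fourier/Wick expansion of $\E_{\mathcal F_0}[C]-\E_{\mathcal U}[C]$ over all balanced levels, with the higher levels to be tamed by ``cancellation in permanent-like sums via orthogonality of $H$ plus switching-lemma restrictions.'' You do not carry this out, and it is not how the cited proof works; to my knowledge no written proof proceeds this way, and the per-set bound $m!\,(\varepsilon/T^2)^m N^{-m/2}$ against $\approx N^{2m}$ balanced sets of size $2m$ indeed leaves the high levels unresolved. Raz--Tal avoid the problem entirely: they reach the (truncated) Gaussian vector by an interpolation in many tiny increments (a ``random walk'' $X^{(t+1)}=X^{(t)}+p\,Z^{(t)}$ with $p$ polynomially small), and at each step a second-order Taylor expansion of the multilinear extension of $C$ around the current point shows that the drift is controlled by the level-$2$ $\ell_1$ mass of a \emph{restriction/shift} of $C$ (to which Tal's bound still applies, since the relevant bound is closed under such operations), while the third/fourth-order error terms are small because each step is small; no control of high Fourier levels of $C$ against the planted distribution is ever needed. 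So as a standalone argument your proposal is incomplete at its central step, and the mechanism you sketch for that step would need to be replaced by (or shown equivalent to) this incremental argument. The remaining issues you call routine (truncation error, the fact that balancedness only holds approximately after truncation, and closure of the Fourier bounds under the rounding/restriction) are indeed handled in \cite{RT19-bqp-ph}, but they are part of why the walk-based bookkeeping, rather than a one-shot expansion, is the workable route.
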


Let $\SDist$ be a distribution over $\{0, 1\}^N$.
We define a distribution $\PDist_{\SDist,L}$  (the ``patterned Forrelation distribution'') as the following compound distribution over $\{0,1\}^{NL}$.
First, draw a sample $z \sim \SDist$.
Then, sample row $i$ from $\Forr_L$ if $z_i = 1$, and otherwise sample from $\{0,1\}^L$ if $z_i = 0$.

When $z \in \{0,1\}^N$, we use $\PDist_{z, L}$ as shorthand for $\PDist_{\SDist,L}$ where $\SDist$ is the constant distribution that always samples $z$.
When $N \in \Naturals$, we also use $\PDist_{N, L}$ as shorthand for $\PDist_{\{0,1\}^N,L}$.



\begin{lemma}
\label{lem:planted_forrelation}
For any $s = \quasipoly(L)$ and $d = O(1)$, there exists some $\delta = \frac{\polylog(L)N}{\sqrt{L}}$ such that for any $C \in \mathsf{AC^0}[s, d]$ and $z \in \{0,1\}^N$:
\[
\abs{\Pr_{x \sim \PDist_{z,L}} [C(x) = 1] - \Pr_{x \sim \{0,1\}^{NL}} [C(x) = 1] } \le \delta.
\]
\end{lemma}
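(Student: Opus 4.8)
The plan is a standard hybrid argument over the $N$ rows, reducing the multi-row statement to the single-instance bound of \Cref{thm:raz-tal}(2). For $0 \le j \le N$, define the hybrid distribution $H_j$ over $\{0,1\}^{NL}$ in which rows $1,\dots,j$ are sampled according to the pattern $z$ (row $i$ from $\Forr_L$ if $z_i=1$, from $\{0,1\}^L$ if $z_i=0$) while rows $j+1,\dots,N$ are sampled independently and uniformly from $\{0,1\}^L$. Then $H_0 = \{0,1\}^{NL}$ and $H_N = \PDist_{z,L}$, so by the triangle inequality it suffices to bound $\sum_{j=1}^N \abs{\Pr_{x\sim H_j}[C(x)=1] - \Pr_{x\sim H_{j-1}}[C(x)=1]}$.

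Next I would observe that $H_{j-1}$ and $H_j$ are the \emph{same} distribution whenever $z_j = 0$ (in both, row $j$ is uniform and all other rows are sampled identically), so only the indices $j$ with $z_j = 1$ contribute, and there are at most $N$ of them. Fix such a $j$, and write $R$ for the joint assignment to all rows other than row $j$; its distribution is the same under $H_{j-1}$ and $H_j$, and conditioned on $R$ the two hybrids differ only in whether row $j$ is drawn from $\Forr_L$ or from $\{0,1\}^L$. Hence
\[
\Pr_{H_j}[C=1] - \Pr_{H_{j-1}}[C=1] = \E_{R}\left[\Pr_{w \sim \Forr_L}[C(R,w)=1] - \Pr_{w \sim \{0,1\}^L}[C(R,w)=1]\right].
\]
For each fixed $R$, the function $w \mapsto C(R,w)$ is computed by the circuit obtained from $C$ by hardwiring the coordinates outside row $j$; this does not increase size or depth, so it lies in $\mathsf{AC^0}[s,d]$ with $s = \quasipoly(L)$ and $d = O(1)$. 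Therefore \Cref{thm:raz-tal}(2) bounds the bracketed quantity in absolute value by $\delta' = \polylog(L)/\sqrt{L}$, uniformly over $R$, and averaging over $R$ preserves this bound.

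Combining the two steps gives $\abs{\Pr_{x \sim \PDist_{z,L}}[C(x)=1] - \Pr_{x \sim \{0,1\}^{NL}}[C(x)=1]} \le \sum_{j \,:\, z_j = 1}\delta' \le N\delta' = \polylog(L)\cdot N/\sqrt{L}$, so we may take $\delta := N\delta'$. There is no real obstacle here beyond bookkeeping; the only point that deserves a moment's care is the observation that restricting a subset of the input coordinates of an $\mathsf{AC^0}[s,d]$ circuit yields another $\mathsf{AC^0}[s,d]$ circuit on the remaining coordinates, which is exactly what lets us invoke the single-block Raz--Tal bound with the \emph{same} size and depth parameters for every fixing of the other rows.
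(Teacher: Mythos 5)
Your proposal is correct and follows essentially the same route as the paper: a hybrid argument over the $N$ rows (your $H_j$ is exactly the paper's $\PDist_{z^{(j)},L}$), with each consecutive pair bounded by $\delta'$ via \Cref{thm:raz-tal}(2) after fixing the other rows. Your write-up just makes explicit two points the paper leaves implicit—that hybrids with $z_j=0$ coincide, and that hardwiring the other rows keeps the circuit in $\mathsf{AC^0}[s,d]$.
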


\begin{proof}
    This follows from a standard hybrid argument. For $i = 0, 1, \ldots, N$, let $z^{(i)} \in \{0,1\}^N$ be the string such that the first $i$ bits are the same as $z$, and the last $N - i$ bits are all zeros. Notice that $\PDist_{z^{(0)},L}$ is the uniform distribution over $\{0,1\}^{NL}$, and $\PDist_{z^{(N)},L} = \PDist_{z,L}$. Hence,
    \begin{align*}
        \abs{\Pr_{x \sim \PDist_{z,L}} [C(x) = 1] - \Pr_{x \sim \{0,1\}^{NL}} [C(x) = 1] }
        &= \abs{\Pr_{x \sim \PDist_{z^{(N)},L}} [C(x) = 1] - \Pr_{x \sim \PDist_{z^{(0)},L}} [C(x) = 1] }\\
        &\le \sum_{i=1}^N \abs{\Pr_{x \sim \PDist_{z^{(i)},L}} [C(x) = 1] - \Pr_{x \sim \PDist_{z^{(i-1)},L}} [C(x) = 1] }\\
        &\le \delta' N,
    \end{align*}
    by the triangle inequality and \Cref{thm:raz-tal}.
    We complete the proof by taking $\delta := \delta' N$.
\end{proof}


\begin{corollary}
\label{cor:avg_planted_forrelation}
For any $s = \quasipoly(L)$, and $d = O(1)$, there exists some $\delta = \frac{\polylog(L)N}{\sqrt{L}}$ such that for any $C \in \mathsf{AC^0}[s, d]$ and any distribution $\SDist$ over $\{0,1\}^N$:
\[
\abs{\Pr_{x \sim \PDist_{\SDist,L}} [C(x) = 1] - \Pr_{x \sim \{0,1\}^{NL}} [C(x) = 1] } \le \delta.
\]
\end{corollary}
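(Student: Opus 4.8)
The plan is to observe that $\PDist_{\SDist,L}$ is, by construction, nothing more than a convex mixture of the constant-pattern distributions $\PDist_{z,L}$ with mixing weights given by $\SDist$. Concretely, sampling $x \sim \PDist_{\SDist,L}$ amounts to first drawing $z \sim \SDist$ and then drawing $x \sim \PDist_{z,L}$, so for any circuit $C$ we have
\[
\Pr_{x \sim \PDist_{\SDist,L}}[C(x) = 1] = \E_{z \sim \SDist}\left[\Pr_{x \sim \PDist_{z,L}}[C(x) = 1]\right].
\]
On the other side, the quantity $\Pr_{x \sim \{0,1\}^{NL}}[C(x)=1]$ does not depend on $z$ at all, so it is trivially equal to its own average $\E_{z \sim \SDist}\big[\Pr_{x \sim \{0,1\}^{NL}}[C(x)=1]\big]$.

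Next I would subtract these two expressions, pull the difference inside the expectation over $z$, and apply the triangle inequality (equivalently, Jensen's inequality for the absolute value):
\[
\abs{\Pr_{x \sim \PDist_{\SDist,L}}[C(x)=1] - \Pr_{x \sim \{0,1\}^{NL}}[C(x)=1]}
\le \E_{z \sim \SDist}\left[\abs{\Pr_{x \sim \PDist_{z,L}}[C(x)=1] - \Pr_{x \sim \{0,1\}^{NL}}[C(x)=1]}\right].
\]
Now invoke \Cref{lem:planted_forrelation}: for every fixed $z \in \{0,1\}^N$, and every $C \in \mathsf{AC^0}[s,d]$ with $s = \quasipoly(L)$, $d = O(1)$, the inner absolute difference is at most $\delta = \frac{\polylog(L)N}{\sqrt{L}}$. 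Crucially, this bound is uniform in $z$ — it is the same $\delta$ for all patterns — so the expectation over $z \sim \SDist$ of a quantity bounded by $\delta$ is itself bounded by $\delta$, which is exactly the claimed inequality with the same $\delta$.

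There is essentially no obstacle here; the corollary is an immediate consequence of the lemma once one recognizes the mixture structure of $\PDist_{\SDist,L}$ and that the bound in \Cref{lem:planted_forrelation} holds uniformly over the pattern string. The only thing worth spelling out carefully is the interchange of the absolute value with the expectation over $z$, which is just the triangle inequality, and the fact that $\delta$ is a fixed function of $N$ and $L$ independent of the pattern, so no union-bound-style loss over the (exponentially many) possible patterns is incurred.
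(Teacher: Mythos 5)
Your proposal is correct and matches the paper's proof, which likewise just notes that $\PDist_{\SDist,L}$ is a mixture of the $\PDist_{z,L}$ over $z \sim \SDist$ and invokes \Cref{lem:planted_forrelation}; you have simply spelled out the triangle-inequality step and the uniformity of $\delta$ in $z$ that the paper leaves implicit.
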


\begin{proof}
    Follows from \Cref{lem:planted_forrelation} and the fact that $\PDist_{\SDist,L}$ is a mixture of $\PDist_{z, L}$ over $z \sim \SDist$.
\end{proof}

\begin{corollary}
\label{cor:avg_planted_forrelation_combined}
For any $s = \quasipoly(L)$ and $d = O(1)$, there exists some $\delta = \frac{\polylog(L)N}{\sqrt{L}}$ such that for any $C \in \mathsf{AC^0}[s, d]$ and any two distributions $\SDist,\SDist'$ over $\{0,1\}^N$:
\[
\abs{\Pr_{x \sim \PDist_{\SDist,L}} [C(x) = 1] - \Pr_{x \sim \PDist_{\SDist',L}} [C(x) = 1] } \le \delta.
\]
\end{corollary}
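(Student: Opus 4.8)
The plan is to route both distributions through the uniform distribution on $\{0,1\}^{NL}$ and apply \Cref{cor:avg_planted_forrelation} twice. Concretely, fix $s = \quasipoly(L)$ and $d = O(1)$, let $\delta_0 = \frac{\polylog(L)N}{\sqrt{L}}$ be the error bound supplied by \Cref{cor:avg_planted_forrelation} for these parameters, and let $C \in \mathsf{AC^0}[s,d]$ and $\SDist, \SDist'$ be arbitrary distributions over $\{0,1\}^N$. Applying \Cref{cor:avg_planted_forrelation} to $\SDist$ gives
\[
\abs{\Pr_{x \sim \PDist_{\SDist,L}}[C(x)=1] - \Pr_{x \sim \{0,1\}^{NL}}[C(x)=1]} \le \delta_0,
\]
and applying it to $\SDist'$ gives the analogous bound with $\SDist'$ in place of $\SDist$.

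The second step is simply the triangle inequality: adding the two displayed estimates and cancelling the common uniform term yields
\[
\abs{\Pr_{x \sim \PDist_{\SDist,L}}[C(x)=1] - \Pr_{x \sim \PDist_{\SDist',L}}[C(x)=1]} \le 2\delta_0.
\]
Since $2\delta_0 = \frac{2\polylog(L)N}{\sqrt{L}}$ is again of the form $\frac{\polylog(L)N}{\sqrt{L}}$, we finish by taking $\delta := 2\delta_0$, which depends only on $s, d, N, L$ and not on $C$, $\SDist$, or $\SDist'$, as required.

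I do not anticipate any real obstacle here: the statement is a routine consequence of the triangle inequality applied to the bound already established in \Cref{cor:avg_planted_forrelation}, and the only thing to check is the bookkeeping that the error term retains the asserted shape, which it does.
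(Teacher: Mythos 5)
Your proposal is correct and matches the paper's own proof exactly: the paper also applies \Cref{cor:avg_planted_forrelation} to each of $\SDist$ and $\SDist'$ against the uniform distribution and combines via the triangle inequality, absorbing the factor of $2$ into the $\polylog(L)$ term. Nothing is missing.
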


\begin{proof}
    Follows from applying \Cref{cor:avg_planted_forrelation} to $\SDist$ and $\SDist'$, via the triangle inequality (increasing $\delta$ by a factor of $2$).
\end{proof}


\subsection{Cryptography}

Here we define the quantum-computable primitives that are constructed in this work.

\begin{definition}
    Let $n \in \Naturals$ be the security parameter, and let $m(n)$ be the output length.
    A function $f: \{0,1\}^n \to \{0,1\}^m$ is a \emph{quantum-computable one-way function} if the following conditions hold:
    \begin{enumerate}[(i)]
        \item (Efficient computability) There is a quantum algorithm $G(x)$ running in time $\poly(n)$ that on input $x \in \{0,1\}^n$ satisfies
        \[
        \Pr\mbracket{G(x) = f(x)} \ge 2/3.
        \]
        \item (One-wayness) For all polynomial-time quantum adversaries $\mathcal{A}$,
        \[
        \Pr_{x \sim \{0,1\}^n}\mbracket{f(\mathcal{A}(1^n,f(x))) = f(x)}
        \le \negl(n).
        \]
    \end{enumerate}
    We say a (quantum-computable) one-way function is \emph{injective} if $f(\cdot)$ is injective for every $n \in \Naturals$.
\end{definition}

\begin{definition}
    Let $\kappa \in \Naturals$ be the security parameter and $n(\kappa)$ be input length.
    A keyed family of functions $\{f_k\}_{k \in \{0,1\}^\kappa}$ with $f_k: \{0,1\}^n \to \{0,1\}$ is a \emph{quantum-computable pseudorandom function family} if the following conditions hold:
    \begin{enumerate}[(i)]
        \item (Efficient computability) There is a quantum algorithm $G(k,x)$ running in time $\poly(\kappa)$ that on input $k \in \{0,1\}^\kappa$ and $x \in \{0,1\}^n$ satisfies
        \[
        \Pr\mbracket{G(k,x) = f_k(x)} \ge 2/3.
        \]
        In other words, the language $\mathcal{L}$ defined by $\mathcal{L}(k, x) = f_k(x)$ is in $\mathsf{BQP}$.
        \item (Pseudorandomness) For all polynomial-time quantum adversaries $\mathcal{A}$,
        \[
        \abs{
        \Pr_{k \sim \{0,1\}^k}\mbracket{\mathcal{A}^{f_k}(1^\kappa) = 1}
        -
        \Pr_{h \sim \{0,1\}^n \to \{0,1\}}\mbracket{\mathcal{A}^{h}(1^\kappa) = 1}
        }
        \le \negl(\kappa).
        \]
    \end{enumerate}
\end{definition}

For simplicity, in the above definition we always end up setting $\kappa = n$, although this is not strictly necessary.

\newcommand{\Gen}{\mathsf{Gen}}
\newcommand{\Eval}{\mathsf{Eval}}
\newcommand{\Inv}{\mathsf{Inv}}

\begin{definition}
\label{def:trapdoor_function}
    Let $n \in \Naturals$ be the security parameter.
    Let $\lambda(n)$, $\ell(n)$, $m(n)$ be the lengths of the public key, input, and output, respectively.
    A function $F: \{0,1\}^{\lambda} \times \{0,1\}^\ell \to \{0,1\}^m$ is a \emph{quantum-computable trapdoor one-way function} if there is triple of efficient quantum algorithms $(\Gen, \Eval, \Inv)$ that satisfies the following conditions:
    \begin{enumerate}[(i)]
        \item (Input-output behavior) $\Gen(1^n)$ outputs a pair $(pk, td) \in \{0,1\}^\lambda \times \{0,1\}^n$.\footnote{For simplicity, we take the length of the $td$ as the security parameter, although this is also without loss of generality.}
        Given $pk \in \{0,1\}^\ell$ and $x \in \{0,1\}^\ell$, $\Eval(pk, x)$ outputs $y \in \{0,1\}^m$.
        Given $td \in \{0,1\}^\ell$ and $y \in \{0,1\}^m$, $\Inv(td, y)$ outputs $x \in \{0,1\}^\ell$.
        \item (Efficient computability) For any public key $pk \in \{0,1\}^\lambda$ and input $x \in \{0, 1\}^\ell$
        \[
        \Pr\mbracket{\Eval(pk, x) = F(pk, x)} \ge 2/3.
        \]
        \item (Trapdoor correctness)
        For all inputs $x \in \{0,1\}^\ell$,
        \[
        \Pr_{(pk, td) \gets \Gen(1^n)}\mbracket{\Inv(td, F(pk, x)) = x} \ge 1 - \negl(n).
        \]
        \item (One-wayness) For all polynomial-time quantum adversaries $\mathcal{A}$,
        \begin{equation}
        \label{eq:trapdoor-oneway}
        \Pr_{\substack{x \sim \{0,1\}^\ell\\(pk, td) \gets \Gen(1^n)}}\mbracket{F(pk,\mathcal{A}(1^n,pk,F(pk,x))) = F(pk,x)}
        \le \negl(n).
        \end{equation}
    \end{enumerate}
\end{definition}

Classically, a generic construction of oblivious transfer from trapdoor one-way functions is not known.
Usually, a trapdoor one-way function with some ``fakeable'' property is needed \cite{GKMRV00-pkeot} in the sense of being able to sample a $pk$ obliviously without also getting the ability to invert the sampled function.

Here we will similarly make use of the fact that our construction has pseudorandom public keys, which is a fakeable property that is useful enough for oblivious transfer.
We define it formally below.

\begin{definition}
    \label{def:towf-pseudorandom-pk}
    We say a (quantum-computable) trapdoor one-way function has \emph{pseudorandom public keys} if the distribution of $pk$ generated by $\Gen(1^n)$ is indistinguishable from $\lambda$ random bits:
    \[
    \abs{
    \Pr_{(pk,td) \leftarrow \Gen(1^n)}\mbracket{\mathcal{A}(1^n, pk) = 1}
    -
    \Pr_{pk \sim \{0,1\}^\lambda}\mbracket{\mathcal{A}(1^n, pk) = 1}
    }
    \le \negl(n).
    \]
\end{definition}

Since the challenger in the one-wayness game does not require the trapdoor once the public key is sampled, we can see that any trapdoor one-way function with pseudorandom public keys is one-way if and only if it is one-way under \emph{uniformly sampled} public keys.
Informally speaking, this implies that such a trapdoor one-way function (candidate) is one-way if and only if it is one-way in the ``fake mode'' where random public keys are given instead:

\begin{fact}
    \label{fact:fake-pk-conversion}
    Any (quantum-computable) trapdoor one-way function (candidate) with pseudorandom public keys satisfies one-wayness if and only if for all polynomial-time quantum adversaries $\mathcal A$,
        \begin{equation}
        \label{eq:trapdoor-oneway-fake}
        \Pr_{\substack{x \sim \{0,1\}^\ell\\pk \sim \{0, 1\}^\lambda}}\mbracket{F(pk,\mathcal{A}(1^n,pk,F(pk,x))) = F(pk,x)}
        \le \negl(n).
        \end{equation}
\end{fact}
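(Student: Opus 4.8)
The plan is to prove the two directions of the "if and only if" by a straightforward reduction, using \Cref{def:towf-pseudorandom-pk} to bridge between a real public key sampled by $\Gen(1^n)$ and a uniformly random public key. The key observation is that in the one-wayness game \eqref{eq:trapdoor-oneway}, once $pk$ has been produced, the trapdoor $td$ plays no further role: the challenger samples $x$, computes $F(pk, x)$, hands $(1^n, pk, F(pk,x))$ to $\mathcal{A}$, and checks whether $\mathcal{A}$'s output $x'$ satisfies $F(pk, x') = F(pk, x)$. So the only difference between \eqref{eq:trapdoor-oneway} and \eqref{eq:trapdoor-oneway-fake} is whether $pk$ comes from $\Gen(1^n)$ or from the uniform distribution on $\{0,1\}^\lambda$.

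For either direction, I would argue the contrapositive: suppose $\mathcal{A}$ wins one of the two games with non-negligible probability; I build a distinguisher $\mathcal{B}$ that contradicts \Cref{def:towf-pseudorandom-pk}. Given a challenge public key $pk^*$ (which is either $\Gen$-generated or uniform), $\mathcal{B}$ samples $x \sim \{0,1\}^\ell$ on its own, computes $y \gets \Eval(pk^*, x)$ (this is efficient by the efficient computability condition, and equals $F(pk^*, x)$ with probability at least $2/3$; one can boost this by repetition, or simply absorb the $1/3$ error into the non-negligible advantage), runs $x' \gets \mathcal{A}(1^n, pk^*, y)$, and outputs $1$ iff $\Eval(pk^*, x') = y$. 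When $pk^*$ is $\Gen$-generated, $\mathcal{B}$ outputs $1$ with probability (essentially) the success probability of $\mathcal{A}$ in \eqref{eq:trapdoor-oneway}; when $pk^*$ is uniform, with (essentially) the success probability in \eqref{eq:trapdoor-oneway-fake}. If these two success probabilities differed by a non-negligible amount, $\mathcal{B}$ would distinguish real from random public keys, contradicting the pseudorandom-public-keys property. Hence the two success probabilities differ by a negligible amount, and so one is $\negl(n)$ if and only if the other is.

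The only mildly delicate point — and the one I would be careful about — is that $\mathcal{B}$ does \emph{not} have access to the true function $F$, only to the efficient evaluation algorithm $\Eval$, which is a bounded-error quantum procedure. So the quantity $\mathcal{B}$ actually tests is $\Eval(pk^*, x') = \Eval(pk^*, x)$ rather than $F(pk^*, x') = F(pk^*, x)$. This is handled by a union bound over the (at most two) invocations of $\Eval$ that matter, after amplifying $\Eval$'s success probability to $1 - \negl(n)$ via $\poly(n)$-fold majority — standard, and it changes the advantage only by a negligible additive term. With that in hand, the equivalence follows, and no step presents a real obstacle; this is essentially a bookkeeping argument around the definition of pseudorandom public keys.
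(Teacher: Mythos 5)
Your proposal is correct and takes essentially the same approach as the paper: both build a distinguisher $\mathcal{B}(1^n, pk)$ that samples $x$ itself, runs $\mathcal{A}$ on $(pk, F(pk,x))$, and outputs $1$ iff inversion succeeds, so that any non-negligible gap between the success probabilities in \eqref{eq:trapdoor-oneway} and \eqref{eq:trapdoor-oneway-fake} would contradict pseudorandomness of the public keys. Your additional care about $\mathcal{B}$ only having the bounded-error $\Eval$ rather than $F$ itself (handled by amplification) is a detail the paper's terser proof glosses over, but it does not change the argument.
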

\begin{proof}
    Assume for contradiction that the fact above is false, then there must exist some $\mathcal A$ that either breaks \eqref{eq:trapdoor-oneway} but does not break \eqref{eq:trapdoor-oneway-fake}, or the other way around (breaks \eqref{eq:trapdoor-oneway-fake} but not \eqref{eq:trapdoor-oneway}).
    We construct $\mathcal B(1^n, pk)$ that samples a uniformly random $x$ and outputs $1$ if and only if $F(pk,\mathcal{A}(1^n,pk,F(pk,x))) = F(pk,x)$.
    Then we see that $\mathcal B$ achieves non-negligible advantage in distinguishing the truly random distribution vs.\ the $pk$ distribution from $\Gen$.
    This contradicts the pseudorandomness of the public keys.
\end{proof}

Throughout this work, we will only consider \textit{uniform} adversaries, meaning that we have a polynomial-time Turing machine that given $1^\kappa$ outputs the quantum circuit implementing $\mathcal{A}(1^\kappa)$.
This is in contrast to the stronger notion of \textit{non-uniform} adversaries, which can be a different circuit (potentially with quantum advice) for each security parameter $\kappa$.
We believe that our lower bounds hold for non-uniform adversaries as well, though proving this could require substantially more effort---perhaps by way a \textit{direct product theorem}, that an adversary's success probability decays exponentially if it tries to solve many instances at once.
See, for example, \cite{Aar05-advice,CGLQ20-tradeoffs}.

\section{Forrelation and Circuits}

We use this section to prove the sensitivity concentration properties of $\mathsf{AC^0}$ circuits that will be needed in the security proof.

For this first lemma, $\s^x(f)$ denotes the \textit{sensitivity} of a function $f: \{0,1\}^K \to \{0,1\}$ on input $x$.
It is defined as the number of single-bit changes to $x$ that change the value of $f$:
\[
\s^x(f) \coloneqq \abs{\mbrace{i \in [n]: f(x) \neq f\mparen{x^{\oplus i}}}},
\]
where $x^{\oplus i}$ is $x$ with the $i$th bit flipped.

\begin{lemma}[{\cite[Lemma 43]{AIK21-acrobatics}}]
\label{lem:ac0_sensitivity_tail_bound}
Let $f: \{0,1\}^K \to \{0,1\}$ be a circuit in $\mathsf{AC^0}[s, d]$. Then for any $t > 0$,
\[
\Pr_{x \sim \{0,1\}^K} \mbracket{ \s^x(f) \ge t } \le 2K\cdot 2^{-\frac{t}{O(\log s)^{d-1}}}.
\]
\end{lemma}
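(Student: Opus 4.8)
The plan is to reduce this tail bound on sensitivity --- a quantity that depends on the behaviour of $f$ at all inputs simultaneously --- to the standard fact that $\mathsf{AC^0}$ circuits collapse to shallow decision trees under random restrictions, bridging the two by subsampling coordinates and applying a Chernoff bound in each direction. Throughout, write $\mathrm{Sens}(f,x) \coloneqq \mbrace{i \in [K] : f(x) \neq f\mparen{x^{\oplus i}}}$, so that $\s^x(f) = \abs{\mathrm{Sens}(f,x)}$.

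First I would fix $p \coloneqq 1/(C \log s)^{d-1}$ for a sufficiently large absolute constant $C$, and generate a uniform $x \sim \{0,1\}^K$ in a coupled way: draw a $p$-random restriction $\rho$ (each coordinate independently kept free with probability $p$, otherwise set to a uniform bit), let $S$ be its set of free coordinates, and fill those in with fresh uniform bits to obtain $x$. Then $S$ includes each coordinate independently with probability $p$, the marginal of $x$ is uniform, and $x_S$ is independent of $(S, x_{[K]\setminus S})$. Since flipping a free coordinate of $x$ changes the value of $f$ exactly when it changes the value of the restricted circuit $f|_\rho$ (a function of the variables in $S$), one checks directly that
\[
\s^{x_S}\mparen{f|_\rho} = \abs{\mathrm{Sens}(f,x) \cap S}.
\]
Now I would invoke H\r{a}stad's switching lemma, iterated across the $d-1$ layers: with this choice of $p$, for every $m \geq 1$ the circuit $f|_\rho$ is computed by a decision tree of depth less than $m$ except with probability at most $2K \cdot 2^{-m}$ over $\rho$. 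A decision tree of depth less than $m$ has sensitivity less than $m$ at every input, so combining this with the displayed identity,
\[
\Pr_{x, S}\mbracket{\abs{\mathrm{Sens}(f,x) \cap S} \geq m} \leq 2K \cdot 2^{-m}.
\]

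To turn this into a statement about $\s^x(f)$ itself, I would fix any $x$ with $\s^x(f) \geq t$ and set $m \coloneqq \lfloor pt/4 \rfloor$. Over the random subsample $S$, the quantity $\abs{\mathrm{Sens}(f,x) \cap S}$ is a sum of $\s^x(f) \geq t$ independent $\mathrm{Bernoulli}(p)$ indicators, so it has mean at least $tp \geq 4m$, and a Chernoff bound gives $\Pr_S\mbracket{\abs{\mathrm{Sens}(f,x) \cap S} \geq m} \geq \tfrac12$. Hence
\[
\tfrac12 \cdot \Pr_{x}\mbracket{\s^x(f) \geq t} \leq \Pr_{x, S}\mbracket{\abs{\mathrm{Sens}(f,x) \cap S} \geq m} \leq 2K \cdot 2^{-m},
\]
and substituting $m = \lfloor pt/4 \rfloor$ and $p = 1/(C\log s)^{d-1}$ and folding the $O(1)$ losses into the constant hidden by the $O(\cdot)$ in the exponent yields the claimed inequality. (Those losses are harmless: whenever the claimed bound is below $1$ one checks that $t$ is automatically large enough to absorb them, and otherwise the statement is trivially true since a probability never exceeds $1$.)

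The main obstacle is the invocation of the switching lemma in exactly this quantitative form: that a $p$-random restriction with $p$ of order $(\log s)^{-(d-1)}$ collapses a depth-$d$ size-$s$ circuit to a decision tree of depth below $m$ except with probability \emph{linear in the number of variables} $K$ times $2^{-m}$. Obtaining a prefactor linear in $K$ rather than in $s$ is precisely what makes the final bound non-vacuous in the regime $K \ll s$ relevant to our application to \Cref{lem:ac0_dist_block_sensitivity}; it follows from applying H\r{a}stad's switching lemma layer by layer --- with the first round used to tame possibly-wide bottom gates --- together with a careful accounting of the union bound over partial decision trees, but this is by now standard $\mathsf{AC^0}$ machinery. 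The subsampling and Chernoff steps surrounding it are elementary.
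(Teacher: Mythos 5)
The paper does not actually prove this lemma---it is imported verbatim as \cite[Lemma 43]{AIK21-acrobatics}---so there is no in-paper argument to compare against; but your proof follows the standard route (and, to my knowledge, essentially the route taken in the cited source): couple a uniform input with a $p$-random restriction, observe that the sensitive coordinates surviving the restriction lower-bound the decision-tree depth of $f|_\rho$, and close the loop with a Chernoff bound on the subsample. The coupling identity $\s^{x_S}(f|_\rho) = \abs{\mathrm{Sens}(f,x)\cap S}$, the Chernoff step with $m = \lfloor pt/4\rfloor$ (including the degenerate case $m=0$, where the bound is vacuous anyway), and the absorption of constant factors into the $O(\cdot)$ in the exponent are all fine.

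The one place that deserves more care is the step you yourself flag as the main obstacle: the claim that for a \emph{fixed} $p = 1/(C\log s)^{d-1}$ and \emph{every} $m\ge 1$ simultaneously, $\Pr_\rho[\mathrm{DT}(f|_\rho)\ge m]\le 2K\cdot 2^{-m}$. This statement is true, but it does not follow from the naive layer-by-layer iteration of H\r{a}stad's switching lemma with a union bound over gates: that argument yields a prefactor of order $s$ (the number of gates), and moreover forces the intermediate decision-tree depths to be at least $\log s$, so for $m<\log s$ it gives neither the prefactor nor the exponent you need, and for $m>\log s$ the restriction parameter degrades to $1/O(m)^{d-1}$. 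What you are really invoking is the criticality theorem for $\mathsf{AC^0}$ (Rossman; equivalently a consequence of H\r{a}stad's multi-switching lemma): every size-$s$ depth-$d$ circuit satisfies $\Pr_\rho[\mathrm{DT}(f|_\rho)\ge m]\le (p\cdot O(\log s)^{d-1})^m$ for all $p$ and $m$, which at $p=1/O(\log s)^{d-1}$ gives $2^{-m}$ with no size-dependent prefactor at all. So the step is correct but rests on substantially heavier machinery than ``standard switching lemma plus union bound,'' and the lemma as stated (with the $2K$ rather than an $s$ prefactor) genuinely needs it in the regime $K\ll s$ and $t\lesssim (\log s)^{d}$. (As an aside, for this paper's downstream use in \Cref{lem:ac0_dist_block_sensitivity} and \Cref{lem:pseudorandom_image}, where $t = pK/2$ is exponentially large compared to $\log s$, even an $s$ prefactor would have sufficed.) With the criticality theorem cited in place of the hand-waved iteration, your proof is complete.
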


The above lemma states that $\mathsf{AC^0}$ circuits have exponentially decaying sensitivity under the uniform distribution.
Below, we prove a generalization of this statement for a ``distributional'' version of sensitivity, which measures how often the value of the function changes on a matrix-valued input when we randomly change one of the rows according to a given distribution $\Dist$.

\begin{lemma}
\label{lem:ac0_dist_block_sensitivity}
Let $f: \{0,1\}^{KM} \to \{0,1\}$ be a circuit in $\mathsf{AC^0}[s, d]$. Let $\Dist$ be a distribution over $\{0,1\}^M$. 
Let $x\sim \Dist^K$ be an input to $f$, viewed as a $K \times M$ matrix.
Let $y$ be sampled depending on $x$ as follows: uniformly select one of the rows of $x$, resample that row from $\Dist$, and leave the other rows of $x$ unchanged. Then for any $p > 0$:
\[
\Pr_{x \sim \Dist^K}\mbracket{\Pr_y\left[f(x) \neq f(y)\right] \ge p} \le \frac{4K}{p} \cdot 2^{-\frac{p K}{O(\log s)^{d - 1}}}.
\]
\end{lemma}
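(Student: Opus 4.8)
The plan is to reduce the distributional block-sensitivity statement to the ordinary (uniform) sensitivity tail bound of \Cref{lem:ac0_sensitivity_tail_bound} by building a new $\mathsf{AC^0}$ circuit $g$ whose ordinary sensitivity on a uniformly random input ``encodes'' the distributional block sensitivity of $f$. The natural construction is to introduce, for each of the $K$ row-slots, a \emph{switch variable} $b_i \in \{0,1\}$ together with a fresh ``alternate'' copy of a row sampled from $\Dist$, so that flipping $b_i$ toggles between the original row and the alternate row. Concretely, think of sampling $x \sim \Dist^K$ and an independent $x' \sim \Dist^K$; define $g$ on inputs $(x, x', b)$ with $b \in \{0,1\}^K$ by feeding to $f$ the matrix whose $i$th row is $x_i$ if $b_i = 0$ and $x'_i$ if $b_i = 1$. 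Then on the input $(x, x', \mathbf{0})$, the sensitivity of $g$ restricted to the $b$-coordinates is exactly $\abs{\{i : f(x) \neq f(y^{(i)})\}}$ where $y^{(i)}$ is $x$ with row $i$ replaced by $x'_i$ — and $y^{(i)}$ has exactly the distribution of $y$ conditioned on ``row $i$ was resampled.'' So $\Pr_y[f(x)\neq f(y)] = \frac{1}{K}\s^{(x,x',\mathbf 0)}_b(g)$, where $\s_b$ counts only sensitive $b$-coordinates.

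The subtlety is that $x$, $x'$ are drawn from $\Dist^K$, not from the uniform distribution, so \Cref{lem:ac0_sensitivity_tail_bound} does not directly apply to $g$ with these coordinates. The fix is to absorb the sampling of each $\Dist$-row into the circuit: since \Cref{lem:ac0_sensitivity_tail_bound} is a statement about sensitivity under the uniform distribution, I want a circuit taking \emph{uniform} bits as input. Here I can use that $\Dist$ is an arbitrary distribution over a finite set — any such $\Dist$ is the pushforward of the uniform distribution on $\{0,1\}^r$ (for $r$ large enough, up to negligible rounding error, or exactly if we only need an approximation) under some function $\mathsf{samp}_\Dist$. The key point is that $\mathsf{samp}_\Dist$ need not be computable in $\mathsf{AC^0}$: we only change the \emph{value} of $f$ on the composed input, and the \emph{sensitivity with respect to the} $b$-coordinates is unaffected by how the $\Dist$-rows are generated, as long as each $b_i$ continues to act as a clean toggle between two rows that are themselves fixed once we condition on the non-$b$ randomness. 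So in fact I do not need to build a fancy circuit at all: I can work directly with the probability space where $x, x' \sim \Dist^K$ are the ``hidden randomness'' and only the $K$ toggle bits $b$ are the formal Boolean input to an $\mathsf{AC^0}$ circuit $g_{x,x'}$ of the same size and depth as $f$ (roughly: $g_{x,x'}(b) = f$ of the toggled matrix; this is a subfunction/restriction of a size-$O(s)$, depth-$d$ circuit), and then apply \Cref{lem:ac0_sensitivity_tail_bound} \emph{over the uniform distribution on $b \in \{0,1\}^K$} and average over $(x,x')$.

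Carrying this out: for fixed $(x,x')$, $g_{x,x'} \in \mathsf{AC^0}[O(s), d]$, so \Cref{lem:ac0_sensitivity_tail_bound} gives $\Pr_{b \sim \{0,1\}^K}[\s^b(g_{x,x'}) \ge t] \le 2K \cdot 2^{-t/O(\log s)^{d-1}}$. Averaging over $x, x' \sim \Dist^K$ and using that $(x, x', b \sim \{0,1\}^K)$ induces the same distribution as first sampling the toggled matrix $w$ from $\Dist^K$ and then, for each row, having an independent ``spare'' $\Dist$-sample (so the roles of ``current row'' and ``spare row'' are symmetric), I get $\Pr_{w \sim \Dist^K}\big[\s^{w,\mathrm{spares}}_b(g) \ge t\big] \le 2K \cdot 2^{-t/O(\log s)^{d-1}}$, where the left side counts exactly $\abs{\{i : f(w)\neq f(w\text{ with row }i\text{ resampled})\}} = K \cdot \Pr_y[f(w)\neq f(y)]$. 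Setting $t = pK$ yields $\Pr_{w \sim \Dist^K}[\Pr_y[f(w)\neq f(y)] \ge p] \le 2K \cdot 2^{-pK/O(\log s)^{d-1}}$, which is even a bit stronger than the claimed bound with the $4K/p$ factor, so there is slack to absorb the $O(s) \to s$ change in circuit size and any rounding. I expect the main obstacle to be the bookkeeping in the second paragraph — making precise that replacing one coordinate-block of $\Dist$-randomness by a uniform toggle plus a spare does not change the relevant sensitivity count, and that the resulting object $g_{x,x'}$ genuinely lies in $\mathsf{AC^0}$ of essentially the same size and depth (it is $f$ pre-composed with $K$ independent $2$-to-$1$ multiplexers on the row level, but since the two data inputs of each multiplexer are \emph{constants} once $(x,x')$ is fixed, each multiplexer collapses to a trivial circuit and $g_{x,x'}$ is literally a subfunction of $f$); once that is nailed down, the rest is a direct invocation of \Cref{lem:ac0_sensitivity_tail_bound} and an averaging argument.
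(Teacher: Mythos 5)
Your construction is exactly the paper's: the same toggle/multiplexer function $g_{x,x'}(b)=f(\text{toggled matrix})$, the same observation that for fixed $(x,x')$ each multiplexer collapses to a literal so $g_{x,x'}\in\mathsf{AC^0}[s,d]$, the same symmetrization showing that the toggled matrix together with the spares is distributed as $\Dist^{2K}$ independently of $b$, and the same reduction to \Cref{lem:ac0_sensitivity_tail_bound}. However, there is a genuine gap in the averaging step, and it is visible in the identity you assert: $\Pr_y[f(x)\neq f(y)] = \frac{1}{K}\,\s^{(x,x',\mathbf 0)}(g)$ cannot be correct as stated, since the left side does not depend on $x'$ while the right side does. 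The correct identity is
\[
\Pr_y\mbracket{f(x)\neq f(y)} \;=\; \frac{1}{K}\,\E_{x'\sim\Dist^K}\mbracket{\s^{\mathbf 0}(g_{x,x'})}.
\]
Consequently, what your argument actually delivers after symmetrization and \Cref{lem:ac0_sensitivity_tail_bound} is a tail bound on the \emph{joint} event, $\Pr_{x,x'}\mbracket{\s^{\mathbf 0}(g_{x,x'})\ge pK}\le 2K\cdot 2^{-pK/O(\log s)^{d-1}}$, whereas the lemma requires a tail bound on the \emph{conditional expectation}, $\Pr_{x}\mbracket{\E_{x'}[\s^{\mathbf 0}(g_{x,x'})]\ge pK}$. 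These are not the same quantity, and the second can exceed the first by roughly a factor $1/p$ (consider a sensitivity that equals $K$ with probability $p$ over $x'$ for the bad $x$'s and $0$ otherwise).

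The missing step is precisely what generates the $\frac{4K}{p}$ prefactor in the statement: since $\s^{\mathbf 0}(g_{x,x'})\le K$ always, the event $\E_{x'}[\s^{\mathbf 0}(g_{x,x'})]\ge pK$ forces $\Pr_{x'}[\s^{\mathbf 0}(g_{x,x'})\ge pK/2]\ge p/2$ (pigeonhole), and then Markov's inequality over $x$ converts this into $\frac{2}{p}$ times the joint tail probability at the halved threshold $pK/2$; the halving is absorbed into the $O(\cdot)$ in the exponent and the $\frac{2}{p}\cdot 2K$ gives $\frac{4K}{p}$. Your remark that you obtain a bound ``a bit stronger than the claimed bound with the $4K/p$ factor'' is the symptom of this omission rather than a genuine improvement. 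With the corrected identity and this pigeonhole-plus-Markov step inserted, your proof coincides with the paper's.
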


\begin{proof}
    For $w \in \{0,1\}^{2KM}$, define a function $g_w: \{0,1\}^K \to \{0,1\}$ by
    \[
    g_w(z) = f(w_z),
    \]
    where we view $w$ as two $K\times M$ matrices, $w_0$ and $w_1$, and $w_z \in \{0,1\}^{K \times M}$ for $z \in \{0, 1\}^K$ is defined as follows: for each $i \in [K]$, the $i$th row of $w_z$ is taken to be the $i$th row from matrix $w_{z_i}$.

    We claim that $g_w \in \mathsf{AC^0}[s, d]$.
    This holds because each bit of $w_z$ depends on at most one bit of $z$.
    In particular,
    \[
    w_{z,i,j} = (z_i \land w_{1,i,j}) \lor (\lnot z_i \land w_{0,i,j}).
    \]
    For fixed $w$, this shows that $w_{z,i,j}$ simplifies to either $0$, $1$, $z_i$, or $\lnot z_i$. Plugging in one of these literals into the $\mathsf{AC^0}[s, d]$ circuit that computes $f$, there is no increase in either size or depth.
       
    Next, observe that for any $x \in \{0,1\}^{KM}$,
    \begin{equation}
    \label{eq:prob_vs_avg_sensitivity}
    K\cdot \Pr_y[f(x) \neq f(y)] = \E_{x' \sim \Dist^K, z \sim \{0,1\}^K}[\s^z(g_w)],
    \end{equation}
    where we take $w$ so that $w_z = x, w_{\bar{z}}= x'$.
    This will let us appeal to \Cref{lem:ac0_sensitivity_tail_bound} to complete the proof. We have:
    \begin{align*}
        \Pr_{x \sim \Dist^K}\mbracket{\Pr_y\mbracket{f(x) \neq f(y)} \ge p}
        &= \Pr_{x \sim \Dist^K}\mbracket{\E_{x' \sim \Dist^K, z \sim \{0,1\}^K}\mbracket{\s^z(g_w)} \ge pK}\\
        &\le \Pr_{x \sim \Dist^K}\mbracket{\Pr_{x' \sim \Dist^K, z \sim \{0,1\}^K}\mbracket{\s^{z}(g_w)\ge pK/2} \ge p/2}\\
        &\le \frac{2}{p}\E_{x \sim \Dist^K}\mbracket{\Pr_{x' \sim \Dist^K, z \sim \{0,1\}^K}\mbracket{\s^{z}(g_w)\ge pK/2}}\\
        &= \frac{2}{p}\Pr_{x \sim \Dist^K,x' \sim \Dist^K, z \sim \{0,1\}^K}\mbracket{\s^{z}(g_w)\ge pK/2}
    \end{align*}
    where the first line substitutes \eqref{eq:prob_vs_avg_sensitivity}, the second line uses the pigeonhole principle (since $\s^{z}(g_w)$ is surely in $[0,K]$), and the third line applies Markov's inequality.

	In the last line, notice that sampling $z$ and then $(x,x')$, we get that $w$, where $w_z = x, w_{\bar{z}}=x'$, is independent of $z$.
 Hence, this distribution can be equivalently generated by a process that first picks $w \sim \Dist^{2K}$ and $z$ independently, which then defines $x$ and $x'$.
 So, we conclude that:

    \begin{align*}
        \Pr_{x \sim \Dist^K}\mbracket{\Pr_y\mbracket{f(x) \neq f(y)} \ge p} &\le
        \frac{2}{p}\Pr_{w \sim \Dist^{2K}, z \sim \{0,1\}^K}\mbracket{\s^{z}(g_w)\ge pK/2}\\
        &\le \frac{4K}{p}\cdot 2^{-\frac{pK}{O(\log s)^{d - 1}}},
    \end{align*}
    by substituting the exponential tail bound on $\mathsf{AC^0}$ sensitivity from \Cref{lem:ac0_sensitivity_tail_bound}.
\end{proof}

Next, we extend this lemma to the case where we indistinguishably plant a row using the patterned Forrelation distribution $\PDist_{z,L}$ for any bitstring $z$.
This lemma plays a similar role to~\cite[Lemma 46]{AIK21-acrobatics} and~\cite[Lemma 40]{KQST23-prs}, which have nearly identical proofs.

\begin{lemma}
\label{lem:patterned_forrelation_block_indistinguishable}
Fix $z \in \{0,1\}^N$. For a given $x \in \{0,1\}^{KNL}$, viewed as a $K \times NL$ matrix with $K$ rows and $NL$ columns, let $y$ be sampled depending on $x$ as follows: uniformly select one of the rows of $x$, resample that row from $\PDist_{z,L}$, and leave the other rows of $x$ unchanged.
Then for any $s = \quasipoly(L)$ and $d = O(1)$, there exists some $\delta = \frac{\polylog(L)N}{\sqrt{L}}$ such that for any circuit $f: \{0,1\}^{KNL} \to \{0,1\}$ in $\mathsf{AC^0}[s, d]$, any distribution $\SDist$ over $\{0, 1\}^N$, and any $p \ge 0$:
\[
\Pr_{x \sim \PDist_{\SDist,L}^K}\left[\Pr_y\left[f(x) \neq f(y)\right] \ge p + \delta \right] \le \frac{4K}{p} \cdot 2^{-\frac{p K}{\polylog(L)}}.
\]
\end{lemma}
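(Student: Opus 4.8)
The plan is to reduce to the ``same-distribution'' resampling already handled by \Cref{lem:ac0_dist_block_sensitivity}, absorbing the mismatch using the $\mathsf{AC^0}$-indistinguishability of patterned Forrelation distributions (\Cref{cor:avg_planted_forrelation_combined}). Introduce an auxiliary process that produces $y'$ from $x$ by picking a uniformly random row and resampling it from $\PDist_{\SDist,L}$ --- the very distribution the rows of $x$ came from --- rather than from $\PDist_{z,L}$. Viewing each row as a single block of $M\coloneqq NL$ bits, the rows of $x$ are i.i.d.\ draws of the distribution $\Dist\coloneqq\PDist_{\SDist,L}$ over $\{0,1\}^M$, and \Cref{lem:ac0_dist_block_sensitivity} applies directly (it imposes no product structure on $\Dist$), giving
\[
\Pr_{x\sim\PDist_{\SDist,L}^K}\mbracket{\Pr_{y'}\mbracket{f(x)\neq f(y')}\ge p}\le \frac{4K}{p}\cdot 2^{-\frac{pK}{O(\log s)^{d-1}}}=\frac{4K}{p}\cdot 2^{-\frac{pK}{\polylog(L)}},
\]
where the last equality uses $s=\quasipoly(L)$ and $d=O(1)$ to write $O(\log s)^{d-1}=\polylog(L)$. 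This is already the claimed right-hand side, so it only remains to compare $y$ with $y'$.

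The comparison is done pointwise in $x$. Fix $x\in\{0,1\}^{KNL}$ and an index $i\in[K]$, write $x_{i\to r}$ for $x$ with row $i$ replaced by $r\in\{0,1\}^{NL}$, and let $f_i^x:\{0,1\}^{NL}\to\{0,1\}$ be the circuit that on input $r$ outputs $1$ iff $f(x_{i\to r})\neq f(x)$. Hardwiring the other $K-1$ rows of $x$ into the circuit for $f$, and (when $f(x)=1$) appending a single $\NOT$ to the output, shows $f_i^x\in\mathsf{AC^0}[s+1,d]\subseteq\mathsf{AC^0}[\quasipoly(L),O(1)]$, with complexity parameters independent of $i$ and $x$. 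By construction $\Pr_y[f(x)\neq f(y)]=\tfrac{1}{K}\sum_{i}\Pr_{r\sim\PDist_{z,L}}[f_i^x(r)=1]$ and $\Pr_{y'}[f(x)\neq f(y')]=\tfrac{1}{K}\sum_{i}\Pr_{r\sim\PDist_{\SDist,L}}[f_i^x(r)=1]$. Applying \Cref{cor:avg_planted_forrelation_combined} to each $f_i^x$ with the pair of distributions $\SDist$ and the point distribution on $z$ (recalling that $\PDist_{z,L}=\PDist_{\SDist',L}$ when $\SDist'$ is the point distribution on $z$) bounds each summand pair to within $\delta=\frac{\polylog(L)N}{\sqrt L}$; averaging over $i$ gives $\abs{\Pr_y[f(x)\neq f(y)]-\Pr_{y'}[f(x)\neq f(y')]}\le\delta$ for every $x$. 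Consequently the event $\{\Pr_y[f(x)\neq f(y)]\ge p+\delta\}$ is contained in $\{\Pr_{y'}[f(x)\neq f(y')]\ge p\}$, and combining this containment with the displayed bound completes the proof (the case $p=0$ being vacuous, since the right-hand side is then infinite).

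I do not expect a genuine obstacle here: the argument is a clean composition of two ingredients that are already in hand. The two places that need care are (i) checking that the indicator circuit $f_i^x$ remains an $\mathsf{AC^0}$ circuit of quasipolynomial size and constant depth, uniformly over $i$ and $x$, so that \Cref{cor:avg_planted_forrelation_combined} yields a single $\delta$ of the stated form; and (ii) observing that \Cref{lem:ac0_dist_block_sensitivity} genuinely tolerates an arbitrary --- in particular non-product --- row distribution, which is exactly the feature that makes it legal to take $\Dist=\PDist_{\SDist,L}$ (whose blocks within a row are correlated through the shared sample from $\SDist$).
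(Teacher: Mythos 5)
Your proposal is correct and follows essentially the same route as the paper's proof: the paper likewise defines the indicator circuit $C(x,w,i)$ (your $f_i^x$), invokes \Cref{cor:avg_planted_forrelation_combined} to switch the resampled row's distribution from $\PDist_{z,L}$ to $\PDist_{\SDist,L}$ at a cost of $\delta$, and then applies \Cref{lem:ac0_dist_block_sensitivity} with $\Dist=\PDist_{\SDist,L}$ and $M=NL$. The only cosmetic difference is that the paper also notes one may assume $N\le\sqrt{L}$ without loss of generality (otherwise $\delta>1$ makes the claim trivial), a point your argument handles implicitly.
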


\begin{proof}
    Consider a Boolean function $C(x, w, i)$ that takes inputs $x \in \{0, 1\}^{KNL}$, $w \in \{0,1\}^{NL}$, and $i \in [K]$. Let $\tilde{y}$ be the string obtained from $x$ by replacing the $i$th row with $w$. Let $C$ output $1$ if $f(x) \neq f(\tilde{y})$, and $0$ otherwise. Observe that for any fixed $x$:
\begin{equation}
\label{eq:forrelation_block_same_distribution}
\Pr_{i \sim [K],w \sim \PDist_{z,L}}\left[C(x, w, i) = 1\right] = \Pr_{y}[f(x) \neq f(y)].
\end{equation}

We may assume without loss of generality that $N \le \sqrt{L}$, because otherwise we can pick any $\delta > 1$ to satisfy the lemma. Under this assumption, observe that for fixed $x$ and $i$, $C(x, \cdot, i) \in \mathsf{AC^0}[s+O(1), d+O(1)]$.
By \Cref{cor:avg_planted_forrelation_combined}, there exists some $\delta = \frac{\polylog(L)N}{\sqrt{L}}$ such that:
\begin{equation}
\label{eq:forrelation_block_indistinguishable}
\left|\Pr_{i \sim [K],w \sim \PDist_{z,L}}[C(x, w, i) = 1] - \Pr_{i \sim [K],w \sim \PDist_{\SDist,L}}[C(x, w, i) = 1] \right| \le \delta.
\end{equation}

Putting these together, we obtain:
\begin{align*}
\Pr_{x \sim \PDist_{\SDist,L}^K}\left[\Pr_y\left[f(x) \neq f(y)\right] \ge p + \delta \right]
&= \Pr_{x \sim \PDist_{\SDist,L}^K}\left[\Pr_{i \sim [K], w \sim \PDist_{z,L}}\left[C(x, w, i) = 1\right] \ge p + \delta \right]\\
&\le \Pr_{x \sim \PDist_{\SDist,L}^K}\left[\Pr_{i \sim [K], w \sim \PDist_{\SDist,L}}\left[C(x, w, i) = 1\right] \ge p\right]\\
&= \Pr_{x \sim \PDist_{\SDist,L}^K}\left[
\Pr_{i \sim [K],w \sim \PDist_{\SDist,L}}\left[f(x) \neq f(\tilde{y})\right]
\ge p
\right]\\
&\le \frac{4K}{p} \cdot 2^{-\frac{p K}{\polylog(L)}}.
\end{align*}
Above, the first two lines hold by \eqref{eq:forrelation_block_same_distribution} and \eqref{eq:forrelation_block_indistinguishable}, the third line holds by the definition of $C$ and $\tilde{y}$ in terms of $i$ and $w$, and the last line invokes \Cref{lem:ac0_dist_block_sensitivity} for some $s = \quasipoly(L)$ and $d = O(1)$.
\end{proof}

\section{Quantum-Computable PRF Oracle and Construction}
Here we describe the oracle that will be used to instantiate quantum-computable PRFs in a world where $\mathsf{P} = \mathsf{NP}$.
It has similarities to the oracle used in~\cite{KQST23-prs}. Like in~\cite{KQST23-prs}, the oracle consists of two parts: an oracle $A$ that is sampled from some distribution, and an oracle $B$ that is constructed to collapse $\mathsf{NP}$ to $\mathsf{P}$. The main difference is that, unlike~\cite{KQST23-prs}, $A$ is not a random oracle. Instead, $A$ is an \textit{encoding} of a random oracle, where the encoding is done using instances of the Forrelation problem.
The formal definition is below.

\begin{definition}
    \label{def:oracle_A}
    Let $\mathcal{E}_{PRF}$ be the distribution over oracles $A$ sampled via the following process:
    \begin{enumerate}[1.]
        \item For each $n \in \Naturals$ and $k \in \{0,1\}^n$, sample a uniformly random function $f_k: \{0,1\}^n \to \{0,1\}$.
        \item For each $f_k$, draw a function $\overline{f}_k : \{0,1\}^n \times \{0,1\}^{4n} \to \{0,1\}$ from $\mathcal{P}_{f_k,2^{4n}}$. In other words, for each $x \in \{0,1\}^n$, sample the truth table of $\overline{f}_k(x, \cdot)$ from the Forrelation distribution $\mathcal{F}_{2^{4n}}$ (defined in \Cref{thm:raz-tal}) if $f_k(x) = 1$, and otherwise sample it from the uniform distribution.
        \item For each $n \in \Naturals$, $k \in \{0,1\}^n$, $x \in \{0,1\}^n$, and $y \in \{0,1\}^{4n}$, set $A(k, x, y) = \overline{f}_k(x, y)$. For all other inputs $z \in \{0,1\}^*$ not of this form, set $A(z) = 0$.
    \end{enumerate}
\end{definition}


The construction of $B$ from $A$ is the same as in~\cite{KQST23-prs}.

\begin{definition}[{\cite[Definition 18]{KQST23-prs}}]
\label{def:PH_oracle}
For a language $A: \{0,1\}^* \to \{0,1\}$, we define a language $\mathcal{O}[A]$ as follows. We construct an oracle $B$ inductively: for each $\ell \in\mathbb{N}$ and $x \in \{0,1\}^\ell$, view $x$ as an encoding of a pair $\langle M, y \rangle$ such that

\begin{enumerate}
\item $\langle M, y \rangle$ takes less than $\ell$ bits to specify,\footnote{Note that there are $2^\ell - 1$ such possible $\langle M, y \rangle$, which is why we take an encoding in $\{0,1\}^\ell$.}
\item $M$ is an $\mathsf{NP}$ oracle machine and $y$ is an input to $M$,
\item $M$ is syntactically restricted to run in less than $\ell$ steps, and to make queries to $A$ and $B$ on strings of length at most $\lfloor \sqrt{\ell} \rfloor$.
\end{enumerate}
Then we define $B(x) \coloneqq M(y)$. Finally, let $\mathcal{O}[A] = (A, B)$.
\end{definition}

\begin{proposition}[{\cite[Proposition 19]{KQST23-prs}}]
    For any language $A: \{0,1\}^* \to \{0,1\}$, $\mathsf{P}^{\mathcal{O}[A]} = \mathsf{NP}^{\mathcal{O}[A]}$.
\end{proposition}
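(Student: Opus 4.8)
The plan is to show that the oracle $\mathcal{O}[A] = (A, B)$ constructed in \Cref{def:PH_oracle} satisfies $\mathsf{P}^{\mathcal{O}[A]} = \mathsf{NP}^{\mathcal{O}[A]}$, by exhibiting, for each $\mathsf{NP}^{\mathcal{O}[A]}$ language, an equivalent polynomial-time deterministic computation that makes queries to $B$. The key observation is that $B$ is defined precisely so that it answers queries of the form ``does $\mathsf{NP}$-machine $M$ accept input $y$?'' whenever $M$ is small enough relative to the query length. So the entire argument reduces to checking that an arbitrary $\mathsf{NP}^{\mathcal{O}[A]}$ computation on inputs of length $n$ can be encoded as such a query to $B$ at a sufficiently long (but still polynomial) length.

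First I would fix an $\mathsf{NP}^{\mathcal{O}[A]}$ language $\mathcal{L}$, decided by some nondeterministic oracle machine $M_0$ running in time $n^c$ for some constant $c$ and making oracle queries (to both $A$ and $B$) of length at most $n^c$ on inputs of length $n$. Given an input $y$ of length $n$, I want to feed $\langle M_0, y\rangle$ to $B$ at some length $\ell$. The three syntactic conditions in \Cref{def:PH_oracle} that must be met are: (1) $\langle M_0, y\rangle$ takes fewer than $\ell$ bits to specify — since $M_0$ has constant description size and $|y| = n$, this needs $\ell > n + O(1)$; (2) $M_0$ is an $\mathsf{NP}$ oracle machine with input $y$ — true by assumption; and (3) $M_0$ is syntactically restricted to run in fewer than $\ell$ steps and to query $A$ and $B$ only on strings of length at most $\lfloor\sqrt{\ell}\rfloor$. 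Condition (3) is the binding one: we need $\ell > n^c$ (for the running time) and $\lfloor\sqrt{\ell}\rfloor \ge n^c$, i.e.\ $\ell \ge n^{2c}$. Choosing $\ell := n^{2c} + C$ for a large enough constant $C$ (and noting we can pad $M_0$'s description and running time so that it is literally syntactically restricted in the required way without changing its behavior) satisfies all three conditions. Then $B$ at this length-$\ell$ string returns exactly $M_0(y)$, i.e.\ $1$ iff $y \in \mathcal{L}$. Hence $\mathcal{L} \in \mathsf{P}^{\mathcal{O}[A]}$: the polynomial-time machine simply computes the encoding $\langle M_0, y\rangle$, pads it to length $\ell = \poly(n)$, and makes a single query to $B$.

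The one subtlety to address carefully — and the step I expect to be the main obstacle — is well-definedness: \Cref{def:PH_oracle} defines $B$ \emph{inductively}, because a query to $B$ at length $\ell$ may itself invoke a machine $M$ that makes queries to $B$, but only on strings of length at most $\lfloor\sqrt{\ell}\rfloor < \ell$. So the definition of $B$ on length-$\ell$ strings depends only on $B$ restricted to lengths $\le \lfloor\sqrt{\ell}\rfloor$, which is a strictly shorter regime; by induction on $\ell$ the oracle $B$ is uniquely determined. (For $\ell$ so small that no valid $\langle M, y\rangle$ fits, $B$ is vacuously/trivially defined.) I would spell out this induction explicitly, since without it the claim ``$B(x) := M(y)$'' is circular. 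Once well-definedness is in hand, the forward inclusion $\mathsf{P}^{\mathcal{O}[A]} \subseteq \mathsf{NP}^{\mathcal{O}[A]}$ is immediate, and the reverse inclusion is exactly the reduction described above, giving $\mathsf{P}^{\mathcal{O}[A]} = \mathsf{NP}^{\mathcal{O}[A]}$. I would remark that this proposition and its proof are taken verbatim from \cite[Proposition 19]{KQST23-prs} and hold for any language $A$ whatsoever — in particular the structure of $A$ coming from the Forrelation encoding plays no role here; only the generic construction of $B$ from $A$ matters.
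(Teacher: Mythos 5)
Your proof is correct and is essentially the standard argument from \cite[Proposition 19]{KQST23-prs}, which this paper simply imports by citation without reproving: encode the whole $\mathsf{NP}^{\mathcal{O}[A]}$ computation as a single query to $B$ at a padded length $\ell = \poly(n)$ large enough that the running-time and $\lfloor\sqrt{\ell}\rfloor$ query-length restrictions are satisfied, with well-definedness of $B$ following by induction on $\ell$ since length-$\ell$ entries of $B$ depend only on $B$ at lengths at most $\lfloor\sqrt{\ell}\rfloor$. Your attention to the syntactic-restriction padding and to the inductive well-definedness matches the intended argument, and the observation that the structure of $A$ is irrelevant is also correct.
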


The candidate quantum-computable PRF consists of the functions $\{f_k\}_{k \in \{0,1\}^n}$. We first show that this family of functions is efficiently quantum-computable:

\begin{proposition}[{cf.~\cite[Claim 30]{AIK21-acrobatics}}]
    \label{prop:prf_computable}
    With probability $1$ over the random process in \Cref{def:oracle_A},
    there exists a quantum algorithm $G^A(k, x)$ running in time $\poly(n)$ such that, for every $n \in \Naturals$ and $k, x \in \{0,1\}^n$,
    \[
    \Pr\mbracket{G^A(k,x) = f_k(x)} \ge 2/3.
    \]
\end{proposition}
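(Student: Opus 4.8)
The plan is to reduce computing $f_k(x)$ to reading off a single bit of the encoded oracle $A$, which in turn reduces to distinguishing the Forrelation distribution from the uniform distribution on the relevant length-$2^{4n}$ block of $A$. Concretely, fix $n$ and $k, x \in \{0,1\}^n$. By \Cref{def:oracle_A}, the truth table of $\overline f_k(x,\cdot) = A(k,x,\cdot)$ is a string in $\{0,1\}^{2^{4n}}$ that was sampled from $\mathcal F_{2^{4n}}$ if $f_k(x) = 1$ and from the uniform distribution otherwise. So the algorithm $G^A(k,x)$ will simply run the quantum distinguisher $\mathcal A$ guaranteed by part (1) of \Cref{thm:raz-tal} with oracle access to the function $y \mapsto A(k,x,y)$ on $\{0,1\}^{4n}$, and output whatever $\mathcal A$ outputs. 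Since $\mathcal A$ makes $\polylog(2^{4n}) = \poly(n)$ queries and runs in time $\polylog(2^{4n}) = \poly(n)$, and each query to the induced oracle is a single query to $A$ on an input of length $\poly(n)$, the algorithm $G^A$ runs in time $\poly(n)$ as required.

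The correctness bound $\Pr[G^A(k,x) = f_k(x)] \ge 2/3$ is where the Forrelation promise does the work, but with a subtlety: part (1) of \Cref{thm:raz-tal} only guarantees a gap in acceptance probabilities, namely $\Pr_{y \sim \mathcal F_L}[\mathcal A^y = 1] - \Pr_{y \sim \{0,1\}^L}[\mathcal A^y = 1] \ge 1 - 1/L^2$ with $L = 2^{4n}$. This forces $\Pr_{y \sim \mathcal F_L}[\mathcal A^y = 1] \ge 1 - 1/L^2$ and $\Pr_{y \sim \{0,1\}^L}[\mathcal A^y = 1] \le 1/L^2$. Thus, \emph{on average over the draw of $A$}, conditioned on $f_k(x) = 1$ the algorithm accepts with probability $\ge 1 - 1/L^2$, and conditioned on $f_k(x) = 0$ it accepts with probability $\le 1/L^2$. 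The statement we want, however, is a \emph{pointwise} guarantee that holds for every $n,k,x$ with probability $1$ over $A$, not just an average guarantee. I would handle this by a union bound / Borel--Cantelli argument: for a fixed block (fixed $n,k,x$), let $q$ be the probability that $\mathcal A$ accepts when run on the realized truth table $A(k,x,\cdot)$; this $q$ is a deterministic function of that block. By Markov's inequality applied to $1-q$ (in the $f_k(x)=1$ case) or to $q$ (in the $f_k(x)=0$ case), the probability over the block that $G^A$ errs on this input with probability $> 1/3$ is at most $3/L^2 = 3 \cdot 2^{-8n}$. Since the blocks for distinct triples $(n,k,x)$ are sampled independently, and $\sum_{n} 2^{2n} \cdot 3 \cdot 2^{-8n} < \infty$ (there are $2^n$ choices of $k$ and $2^n$ of $x$ for each $n$), Borel--Cantelli gives that with probability $1$ only finitely many triples are ``bad.'' To get the genuinely pointwise statement ``for \emph{every} $n,k,x$,'' note that we are free to modify $G$ on finitely many inputs by hardcoding the correct answers, which does not affect the $\poly(n)$ running time asymptotically; alternatively one restricts attention to $n$ large enough that all triples are good, and hardcodes small $n$.

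The main obstacle I anticipate is precisely this gap between the average-case Forrelation guarantee and the almost-sure pointwise guarantee demanded by the proposition. One has to be careful that the error probability per block, $O(1/L^2) = O(2^{-8n})$, decays fast enough to beat the $2^{2n}$ union bound over keys and inputs at each level $n$ — and it does, comfortably, with room to spare. A secondary point to get right is that ``running $\mathcal A$ with oracle access to $A(k,x,\cdot)$'' is legitimate: the indices $(k,x)$ are fixed classical strings, so each of $\mathcal A$'s queries $y \in \{0,1\}^{4n}$ is answered by a single classical query $A(k,x,y)$, and the whole thing is a bona fide $\poly(n)$-time oracle algorithm with oracle $A$. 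I would also remark (cf.~\cite[Claim 30]{AIK21-acrobatics}) that amplification is available if desired — one can repeat and take majority to drive the success probability from $2/3$ up to $1 - \negl(n)$ — but this is not needed for the statement as written.
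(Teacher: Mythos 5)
Your proposal is correct and follows essentially the same route as the paper's proof: run the Raz--Tal distinguisher on the block $\overline{f}_k(x,\cdot)$, convert the average-case acceptance gap into a per-input guarantee via Markov's inequality, union bound over the $2^{2n}$ pairs $(k,x)$ at each level $n$, and apply Borel--Cantelli, fixing the finitely many bad inputs by hardcoding. The quantitative accounting ($3\cdot 2^{-8n}$ per input, $3\cdot 2^{-6n}$ per level) matches the paper exactly.
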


\begin{proof}
    For some $k, x \in \{0,1\}^n$, consider running the quantum algorithm $\mathcal{A}^{\overline{f}_k(x, \cdot)}$ from \Cref{thm:raz-tal} with $L = 2^{4n}$. Because $\overline{f}_k(x, \cdot)$ is Forrelated if $f_k(x) = 1$ and uniform if $f_k(x) = 0$, \Cref{thm:raz-tal} implies
    \[
    \Pr_{A \sim \mathcal{E}_{PRF}}\mbracket{\mathcal{A}^{\overline{f}_k(x, \cdot)} \neq f_k(x)} \le 2^{-8n},
    \]
    and hence by Markov's inequality,
    \[
    \Pr_{A \sim \mathcal{E}_{PRF}}\mbracket{\Pr\mbracket{\mathcal{A}^{\overline{f}_k(x, \cdot)} \neq f_k(x)} \ge 1/3} \le 3 \cdot 2^{-8n}.
    \]
    A union bound implies that
    \[
    \Pr_{A \sim \mathcal{E}_{PRF}}\mbracket{\exists k, x \in \{0,1\}^n : \Pr\mbracket{\mathcal{A}^{\overline{f}_k(x, \cdot)} = f_k(x)} < 2/3} \le 3 \cdot 2^{-6n}.
    \]
    Since $\sum_{n=1}^\infty 3 \cdot 2^{-6n} = 1/21 < \infty$, the Borel--Cantelli Lemma shows that with probability $1$ over $A$, the algorithm $\mathcal{A}$ satisfies
    \[
    \Pr\mbracket{\mathcal{A}^{\overline{f}_k(x, \cdot)} = f_k(x)} \ge 2/3
    \]
    for all but finitely many $(x, k)$ pairs.
    Hence, with probability $1$, there exists an algorithm $G^A(k, x)$ that satisfies the condition of the proposition on all inputs.
\end{proof}

What remains to show that $\{f_k\}_{k \in \{0,1\}^n}$ is a quantum-computable PRF is to establish its pseudorandomness against polynomial-time adversaries.

\section{PRF Security Proof}

We now have everything prepared to prove the security of the quantum-computable PRF ensemble.
Readers familiar with~\cite{KQST23-prs} will find the structure of our arguments to be quite similar.
We have made careful note of the steps below that are minor modifications of those in~\cite{KQST23-prs}.




To start things off, we require a quantitative version of the BBBV theorem~\cite{BBBV97-search}:

\begin{lemma}[{\cite[Lemma 37]{AIK21-acrobatics}}]
\label{lem:average_case_bbbv}
Consider a quantum algorithm $Q^x$ that makes $T$ queries to $x \in \{0,1\}^N$. Let $y \in \{0,1\}^N$ be drawn from some distribution such that, for all $i \in [N]$, $\Pr_{y}\left[x_i \neq y_i\right] \le p$. Then for any $r > 0$:
\[
\Pr_{y} \left[ \left|\Pr\left[Q^y = 1 \right] - \Pr\left[Q^x = 1 \right] \right| \ge r \right] \le \frac{64pT^2}{r^2}.
\]
\end{lemma}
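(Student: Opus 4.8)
The plan is to reduce the average-case statement to the standard (worst-case) BBBV hybrid argument, which quantifies how much a $T$-query algorithm's acceptance probability can change when the oracle is perturbed on a weighted set of positions. Recall the standard BBBV "query magnitude" bound: if $Q^x$ makes $T$ queries to $x \in \{0,1\}^N$, and for each $i \in [N]$ we let $q_i$ denote the total query weight that $Q^x$ places on index $i$ summed over all $T$ query steps (so $\sum_i q_i \le T$ by normalization at each of the $T$ steps), then changing $x$ to any $x'$ that differs from $x$ only on a set $S$ of coordinates changes the final state by at most $O\bigl(\sqrt{T \sum_{i \in S} q_i}\bigr)$ in Euclidean norm, hence changes $\Pr[Q=1]$ by at most $O\bigl(\sqrt{T \sum_{i \in S} q_i}\bigr)$. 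The key point is that the relevant query weights $q_i$ depend only on the \emph{fixed} reference input $x$, not on $y$.

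First I would fix $x$ and run $Q^x$, extracting the query weights $q_1, \dots, q_N$ with $\sum_i q_i \le T$. For a sample $y$, let $S(y) = \{i : x_i \neq y_i\}$ be the set of flipped coordinates; by the worst-case BBBV bound there is an absolute constant $c$ with $\bigl|\Pr[Q^y=1] - \Pr[Q^x=1]\bigr| \le c\sqrt{T \sum_{i \in S(y)} q_i}$. Therefore the event $\bigl|\Pr[Q^y=1] - \Pr[Q^x=1]\bigr| \ge r$ is contained in the event $\sum_{i \in S(y)} q_i \ge r^2 / (c^2 T)$, i.e. $\sum_i q_i \mathds{1}[x_i \neq y_i] \ge r^2/(c^2 T)$. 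Now I apply Markov's inequality to the nonnegative random variable $Z := \sum_i q_i \mathds{1}[x_i \neq y_i]$: by linearity of expectation and the hypothesis $\Pr_y[x_i \neq y_i] \le p$, we get $\E_y[Z] = \sum_i q_i \Pr_y[x_i \neq y_i] \le p \sum_i q_i \le pT$. Markov then gives $\Pr_y[Z \ge r^2/(c^2 T)] \le pT \cdot c^2 T / r^2 = c^2 p T^2 / r^2$, which is the claimed bound with the constant $64$ absorbing $c^2$ (the exact constant depends on the normalization in one's statement of BBBV; $64$ is comfortably large).

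The main obstacle — really the only non-bookkeeping step — is invoking the correct quantitative form of BBBV, namely that the \emph{squared} perturbation in amplitude scales linearly with the total query weight on the perturbed set, with the weights computed from the unperturbed oracle $x$. Once that is in hand, the argument is purely a union-bound-free reduction via Markov on a single nonnegative random variable, exploiting that the per-coordinate flip probabilities $p$ and the per-coordinate query weights $q_i$ decouple in expectation. One should be slightly careful that the $q_i$ are deterministic functions of $x$ alone (the algorithm is fixed and $x$ is fixed before $y$ is drawn), so that $\E_y$ acts only on the indicator variables; this is exactly what makes the clean bound $\E_y[Z] \le pT$ work. No further subtlety is expected.
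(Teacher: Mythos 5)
Your proposal is correct and is essentially the standard proof of this lemma, which the paper cites from \cite[Lemma 37]{AIK21-acrobatics} rather than reproving: one bounds the deviation via the BBBV hybrid argument in terms of the query mass on the flipped coordinates, then applies Markov's inequality to $\sum_i q_i \mathds{1}[x_i \neq y_i]$, whose expectation is at most $pT$. Your constant-tracking remark is also right --- the factor $64$ just absorbs the constant from the particular normalization of the hybrid bound.
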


Next we introduce a bit more notation that was used in~\cite{KQST23-prs}.
For $A: \{0,1\}^* \to \{0,1\}$ an oracle, let $B$ be defined depending on $A$ as in \Cref{def:PH_oracle}. For any $t \in \Naturals$, denote by $A_{\le t}$ (respectively, $B_{\le t}$) the concatenation of $A(x)$ (respectively, $B(x)$) over all strings $x$ of length at most $t$. The next lemma uses the standard connection between $\mathsf{PH}$ algorithms and $\mathsf{AC^0}$ circuits~\cite{FSS84-circuit-oracle} to show that each bit of an oracle constructed as $B$ can be computed by a small $\mathsf{AC^0}$ circuit in the bits of $A$. Think of $d$ as some constant.

\begin{lemma}[{\cite[Lemma 35]{AIK21-acrobatics}}]
\label{lem:PH_to_AC0}
Fix $t, d \in \Naturals$, and let $t' \le t^{2^d}$. For each $x \in \{0,1\}^{t'}$, there exists an $\mathsf{AC^0}\mbracket{O\mparen{2^{t^{2^d}}}, 2d}$ circuit that takes as input $A_{\le t^{2^{d-1}}}$ and $B_{\le t}$ and computes $B(x)$.
\end{lemma}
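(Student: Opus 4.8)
The plan is to unpack the inductive definition of $B$ from \Cref{def:PH_oracle} and, at each level of the recursion, replace an $\mathsf{NP}^{A,B}$ computation by an $\mathsf{AC^0}$ circuit using the Furst--Saxe--Sipser correspondence between $\mathsf{PH}$ query algorithms and constant-depth circuits~\cite{FSS84-circuit-oracle}. The key structural observation is that $B(x)$ for $x \in \{0,1\}^{t'}$ is defined as $M(y)$ where $\langle M, y\rangle$ is decoded from $x$, and $M$ is an $\mathsf{NP}$ oracle machine running in fewer than $t'$ steps whose queries to $A$ and $B$ have length at most $\lfloor\sqrt{t'}\rfloor$. So a single $\mathsf{NP}$ predicate on top of oracle access to (the relevant bits of) $A$ and $B$ at length $\le \sqrt{t'}$ computes $B(x)$: an $\exists$-quantifier over a witness of length $\poly(t') < 2^{t'}$, followed by a deterministic poly-time verifier that reads bits of $A_{\le \sqrt{t'}}$ and $B_{\le \sqrt{t'}}$. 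By the standard simulation, this is computable by a depth-$2$ circuit (an \OR\ of \AND s) of size $2^{\poly(t')}$ whose inputs are the bits of $A_{\le \sqrt{t'}}$ and $B_{\le \sqrt{t'}}$.

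Next I would recurse on the $B_{\le \sqrt{t'}}$ inputs. Each bit $B(x')$ with $|x'| = t'' \le \sqrt{t'}$ is again, by the same argument, a depth-$2$ circuit of size $2^{\poly(t'')}$ in the bits of $A_{\le\sqrt{t''}}$ and $B_{\le \sqrt{t''}}$. Since $\sqrt{\cdot}$ contracts exponents, after $j$ rounds of substitution the "remaining" $B$-inputs live at length $\le (t')^{1/2^j}$; taking $t' \le t^{2^d}$ means we reach lengths $\le t$ after $d$ rounds, at which point we stop recursing and leave the inputs $B_{\le t}$ (together with the $A$-inputs, all of which by then have length $\le (t^{2^d})^{1/2} = t^{2^{d-1}}$, hence are drawn from $A_{\le t^{2^{d-1}}}$). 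Each substitution round composes a depth-$2$ circuit on top of depth-$2$ circuits, so collapses adjacent \OR/\AND\ layers and adds depth $2$ per round; after $d$ rounds we get depth $2d$. For the size: each substitution replaces an input bit by a circuit of size at most $2^{\poly(\sqrt{t'})}$, and the number of bits involved at every level is at most $2^{\poly(t')} \le 2^{t^{2^d}}$, so a crude bound on the total size is $O(2^{t^{2^d}})$ — actually I should be a little careful and note that the sizes multiply across the $d$ levels, but since each level's size is at most $2^{O(t^{2^d})}$ and $d$ is constant, the product is still $2^{O(t^{2^d})}$, which is $O(2^{t^{2^d}})$ after adjusting the implicit constant (or one can be slightly more generous with the exponent; the statement's $O(\cdot)$ absorbs constant factors in the exponent's base).

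The main obstacle is bookkeeping rather than conceptual: I need to verify that (a) every intermediate $A$-query length stays bounded by $t^{2^{d-1}}$ so that the claimed input $A_{\le t^{2^{d-1}}}$ genuinely suffices — this follows because the deepest level of recursion is at length $\le t$ and its $A$-queries have length $\le\sqrt{t}\le t$, while the shallowest has $A$-queries of length $\le (t^{2^d})^{1/2} = t^{2^{d-1}}$ — and (b) the depth really is $2d$ and not, say, $2d+O(1)$, which requires noting that the topmost and bottommost layers can always be merged appropriately (an \OR\ of \AND s composed with \AND s of \OR s collapses one layer). Since \Cref{lem:PH_to_AC0} is quoted verbatim from~\cite[Lemma 35]{AIK21-acrobatics}, I would in practice just cite it; the above is the argument one would reconstruct if needed, and the only genuinely delicate point is confirming the $\lfloor\sqrt{\ell}\rfloor$ query-length restriction in \Cref{def:PH_oracle} is exactly what makes the recursion terminate at depth $2d$ with the stated input lengths.
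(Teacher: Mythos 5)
Your reconstruction is correct and matches the standard argument behind \cite[Lemma 35]{AIK21-acrobatics}, which this paper simply imports by citation rather than reproving: unwind the recursive definition of $B$, apply the Furst--Saxe--Sipser translation of each $\mathsf{NP}^{A,B}$ machine into a depth-$2$, size-$2^{O(\ell)}$ circuit, and use the $\lfloor\sqrt{\ell}\rfloor$ query-length restriction to terminate after $d$ substitution rounds with inputs $A_{\le t^{2^{d-1}}}$ and $B_{\le t}$. The only caveat you already flag yourself — that the crude size bound is $2^{O(t^{2^d})}$ rather than literally $O(2^{t^{2^d}})$ — is a constant-in-the-exponent bookkeeping matter inherited from the original statement and does not affect any use of the lemma here.
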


We now prove a lemma, which informally states that we can plant a random Forrelation-encoded block at a random location indistinguishably against $\mathsf{BQP^{PH}}$, even if the adversary is given all the information about the sampled (unencoded) random block that we are planting.
Its proof resembles the proof of~\cite[Theorem 28]{KQST23-prs}.
(In order to understand the lemma conceptually, it is best to ignore the quantitative bounds on $N$ and $L$ in terms of $n$, other than to note that they should both be roughly $2^{\poly(n)}$.)
\comment{Avishay's suggestion: move this after Theorem 21?}
\begin{lemma}
\label{lem:pseudorandom_image}
    Let 
    $N^2 \cdot 2^{2n} \le L \le 2^{\poly(n)}$.
    Consider an oracle distribution $A \sim \mathcal{E}$ for which the following is true: for every integer $n > 0$, there exists a region of the oracle of size $2^n \times NL$ bits such that, conditioned on the rest of the oracle $A$, the region is identically distributed as $\PDist_{\SDist, L}^{2^n}$ for some $\SDist$.\footnote{To clarify, $\SDist$ may depend on the portion of $A$ outside of this $2^n \times NL$ region.}
    Then for all $\poly(n)$-query\footnote{Here, we assume a $\poly(n)$ upper bound on both the number of queries and the length of the longest query to the oracle.
    We ignore the time to read $z$, which may be exponentially long.} quantum algorithms $\mathcal A$,
    \[
    \E_{\substack{A \sim \mathcal{E}\\z \sim \SDist}}\abs{
    \Pr_{\substack{k \sim [{2^n}]\\w \sim \PDist_{z, L}}}\mbracket{\mathcal{A}^{\mathcal{O}[A_{k \mapsto w}]}(z) = 1}
    -
    \Pr\mbracket{\mathcal{A}^{\mathcal{O}[A]}(z) = 1}
    } = \negl(n),
    \]
    where $A_{k \mapsto w}$ is the same as $A$ but replacing the $k$th block of $NL$ bits with $w$.
\end{lemma}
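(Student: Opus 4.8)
The plan is to reduce the indistinguishability claim to the $\mathsf{AC^0}$ sensitivity bound from \Cref{lem:patterned_forrelation_block_indistinguishable} via the quantitative BBBV theorem (\Cref{lem:average_case_bbbv}). First I would fix the security parameter $n$ and the target $2^n \times NL$ region of the oracle. The key observation is that the adversary $\mathcal{A}$, making $\poly(n)$ queries to $\mathcal{O}[A] = (A, B)$ of length at most $\poly(n)$, can be simulated by a quantum algorithm that queries only the bits $A_{\le t}$ for some $t = \poly(n)$: by \Cref{lem:PH_to_AC0}, each queried bit of $B$ is computed by an $\mathsf{AC^0}[2^{\poly(n)}, O(1)]$ circuit in the bits of $A_{\le \poly(n)}$, and these circuits can be folded into a new quantum algorithm $Q$ that makes $\poly(n)$ queries directly to the string $A_{\le t}$. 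Thus the left-hand side is bounded by the probability, over $A$ and $z$, that resampling the $k$th block of the region (for uniformly random $k$) with a fresh $w \sim \PDist_{z,L}$ changes the acceptance probability of $Q$ by more than some small $r$.

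The main step is to apply \Cref{lem:average_case_bbbv}. For this I need a per-coordinate bound: for each bit index $j$ in the $2^n \times NL$ region, $\Pr[(A_{k \mapsto w})_j \neq A_j] \le p$ for some suitable $p$. Since $k$ is uniform over $[2^n]$ and only one block is touched, any fixed bit is changed with probability at most $2^{-n}$ times the probability that that bit flips when its whole block is resampled — but to control this I should instead work blockwise: condition on the event that $Q$ is insensitive to the block-resampling operation. Concretely, I would invoke \Cref{lem:patterned_forrelation_block_indistinguishable} applied to a carefully constructed $\mathsf{AC^0}$ circuit — but $Q$ is a quantum algorithm, not an $\mathsf{AC^0}$ circuit, so the clean route is: (i) use \Cref{lem:PH_to_AC0} only to reduce $B$-queries to $\mathsf{AC^0}$ functions of $A$, keeping $Q$ quantum; (ii) bound the per-bit flip probability $p$ by noting that conditioned on the rest of $A$, the region is $\PDist_{\SDist,L}^{2^n}$, and for any fixed bit index, the probability that resampling one uniformly-random block (out of $2^n$) disagrees with the original is at most $2/2^n$ (each block is independent, and the resampled block disagrees with probability at most the probability a fresh draw differs, which is at most $1$; averaged over the uniform choice of which of the $2^n$ blocks to resample, a fixed bit lies in the chosen block with probability $2^{-n}$). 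So $p \le 2^{-n+1}$. With $T = \poly(n)$ queries and $r = \negl(n)$ chosen appropriately (say $r = 2^{-n/4}$), \Cref{lem:average_case_bbbv} gives that the acceptance probability shifts by more than $r$ with probability at most $64 p T^2 / r^2 = 2^{-\Omega(n)} \poly(n) = \negl(n)$, and on the complementary event the shift is at most $r = \negl(n)$; combining yields the claimed expectation bound.

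The subtle point — and the one I expect to be the main obstacle — is handling the fact that $w$ is drawn from $\PDist_{z,L}$ rather than from the same distribution $\SDist$ that generated the region, since $z$ itself was drawn from $\SDist$ but then the replacement block is the \emph{patterned Forrelation encoding} of $z$, not a fresh sample from $\PDist_{\SDist,L}$. This is exactly the gap that \Cref{lem:patterned_forrelation_block_indistinguishable} (built on \Cref{cor:avg_planted_forrelation_combined}) is designed to close: resampling a block from $\PDist_{z,L}$ is indistinguishable from resampling it from $\PDist_{\SDist,L}$ to any $\mathsf{AC^0}$ circuit, up to error $\delta = \polylog(L) N / \sqrt{L}$, and the hypothesis $N^2 \cdot 2^{2n} \le L$ forces $\delta = \negl(n)$. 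So the actual argument interleaves two reductions: first replace $B$-queries with $\mathsf{AC^0}$ circuits and absorb the $\PDist_{z,L}$-vs-$\PDist_{\SDist,L}$ discrepancy using \Cref{cor:avg_planted_forrelation_combined} on the composed circuit (paying $\poly(n) \cdot \delta$ over all queries), and second apply BBBV to the now-symmetric resampling. Care is needed to ensure the $\mathsf{AC^0}$ circuit sizes stay $\quasipoly(L)$ — which holds since they are $2^{\poly(n)}$ and $L \ge 2^{2n}$ — and that the Borel–Cantelli-style union over $n$ is not needed here because we only claim a per-$n$ negligible bound.
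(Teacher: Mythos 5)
There is a genuine gap, and it sits exactly where the real work of this lemma is. Your application of \Cref{lem:average_case_bbbv} uses the per-coordinate flip probability $p \le 2^{-n+1}$, which is correct only for the bits of $A$ itself: a fixed bit of the $2^n \times NL$ region lands in the resampled block with probability $2^{-n}$. But the adversary also queries $B$, and each bit $B(v)$ is an $\mathsf{AC^0}[2^{\poly(n)}, O(1)]$ function of \emph{exponentially many} bits of the Forrelated region, so a priori $\Pr_{k,w}[B(v) \neq B_{k\mapsto w}(v)]$ could be close to $1$ even though each underlying $A$-bit flips with probability $2^{-n}$. Your proposed fix --- ``fold'' the $\mathsf{AC^0}$ circuits for the $B$-queries into the quantum algorithm so that it queries only $A_{\le t}$ with $\poly(n)$ queries --- is impossible: evaluating a single such circuit requires reading up to $2^{\poly(n)}$ bits of $A$, so the simulation does not preserve the query bound that BBBV needs. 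You have also misidentified the role of \Cref{lem:patterned_forrelation_block_indistinguishable}: the $\PDist_{z,L}$-vs-$\PDist_{\SDist,L}$ discrepancy is a secondary issue (already absorbed into the $\delta$ of that lemma), whereas its primary purpose here is precisely to control the sensitivity of the $B$-bits.

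The paper's proof keeps the adversary querying the full oracle $\mathcal{O}[A]$ and applies BBBV with the per-coordinate bound supplied by a ``goodness'' event on $(A,z)$: call $(A,z)$ good if for every $v \in \{0,1\}^{\le T}$ one has $\Pr_{k,w}[\mathcal{O}[A](v) \neq \mathcal{O}[A_{k\mapsto w}](v)] \le p$. To bound the probability of not-good, each $\mathcal{O}[A](v)$ is written via \Cref{lem:PH_to_AC0} as an $\mathsf{AC^0}[O(2^{T^2}), O(1)]$ circuit in $A_{\le T}$ and $B_{\le n}$ (the latter independent of the resampled region), and \Cref{lem:patterned_forrelation_block_indistinguishable} with $K = 2^n$ gives, for each $v$, a tail bound $\frac{4K}{q} 2^{-qK/\polylog(L)}$ on the event that this circuit notices the resampling with probability above $q + \delta$; a union bound over the $2^{T+1}$ positions $v$, with $q = 2^{-n/2}$ and $p = q + \delta$, makes the not-good probability doubly exponentially small. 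This is the step your proposal is missing, and without it the BBBV application is unjustified.
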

\begin{proof}
Let $T = \poly(n)$ be an upper bound on the number of queries that $\mathcal{A}$ makes to $\mathcal{O}[A]$, and also on the length of strings queried in $\mathcal{O}[A]$. For some $p$ that we choose later, call $(A, z)$ ``good'' if, for all $v \in \{0,1\}^{\le T}$, $\Pr_{k, w}\left[\mathcal{O}[A](v) \neq \mathcal{O}[A_{k \mapsto w}](v)\right] \le p$.
We have that for any $r \in [0, 1]$,

\begin{align}
\text{LHS} &= \E_{\substack{A \sim \mathcal{E}\\z \sim \SDist}}\abs{
\Pr_{\substack{k \sim [{2^n}]\\w \sim \PDist_{z, L}}}\mbracket{\mathcal{A}^{\mathcal{O}[A_{k \mapsto w}]}(z) = 1}
-
\Pr\mbracket{\mathcal{A}^{\mathcal{O}[A]}(z) = 1}
} \nonumber \\
&\le \E_{\substack{A \sim \mathcal{E}\\z \sim \SDist}}\left[ r + 
\Pr_{\substack{k \sim [{2^n}]\\w \sim \PDist_{z, L}}}\left[\left|\Pr\left[\mathcal{A}^{\mathcal{O}[A_{k \mapsto w}]}(z) = 1 \right] - \Pr\left[\mathcal{A}^{\mathcal{O}[A]}(z) = 1 \right]\right| \ge r\right] \right]\nonumber\\
&\le r + \frac{64pT^2}{r^2} + \Pr_{\substack{A \sim \mathcal{E}\\z \sim \SDist}}[(A, z) \text{ not good}],\label{eq:p2_p3_Az_not_good}
\end{align}
where in the second line we have applied \Cref{lem:average_case_bbbv} when $(A, z)$ is good.

We turn to upper bounding the probability that $(A, z)$ is not good.
By \Cref{lem:PH_to_AC0} with $t = n$, $d = \lceil\log_2 \log_n T\rceil = O(1)$ and $t' = T$, for every $v \in \{0, 1\}^{\le T}$, $\mathcal{O}[A](v)$ can be computed by an $\mathsf{AC^0}\mbracket{O\mparen{2^{T^2}}, O(1)}$ circuit $C$ whose inputs are $A_{\le T}$ and $B_{\le n}$.
Because $B_{\le n}$ by construction does not depend at all on $A_{> n}$ (see \Cref{def:PH_oracle}), the only inputs to this circuit that can change between $A$ and $A_{k \mapsto w}$ are those corresponding to the region of size $2^n \times NL$.
Using the notation of \Cref{lem:patterned_forrelation_block_indistinguishable} with $K = 2^n$, we have that for some $\delta = \frac{\polylog(L)N}{\sqrt{L}} = \frac{\poly(n)}{2^n}$,
\begin{align*}
    \Pr_{\substack{A \sim \mathcal{E}\\z \sim \SDist}}\left[\Pr_{\substack{k \sim [{2^n}]\\w \sim \PDist_{z, L}}}\left[\mathcal{O}[A](v) \neq \mathcal{O}[A_{k \mapsto w}](v)\right] \ge q + \delta\right] &= \Pr_{x \sim \PDist_{\SDist,L}^K}\left[\Pr_{y}\left[C(x) \neq C(y)\right] \ge q + \delta\right]\\
    &\le \frac{4K}{q} \cdot 2^{-\frac{q K}{\polylog(L)}}\\
    &\le \frac{1}{q} \cdot 2^{n+2-\frac{q 2^n}{\poly(n)}}.
\end{align*}
Choose $q = 2^{-n/2}$ and $p = q + \delta$. Then by a union bound over all $v \in \{0,1\}^{\le T}$, we conclude that
\begin{align*}
    \Pr_{\substack{A \sim \mathcal{E}\\z \sim \SDist}}[(A, z) \text{ not good}] &= \Pr_{A,z}\left[ \exists v \in \{0,1\}^{\le T} : \Pr_{k, w}\left[\mathcal{O}[A](v) \neq \mathcal{O}[A_{k \mapsto w}](v)\right] > p\right]\\
    &\le \sum_{v \in \{0,1\}^{\le T}}\Pr_{A,z}\left[ \Pr_{k, w}\left[\mathcal{O}[A](v) \neq \mathcal{O}[A_{k \mapsto w}](v)\right] \ge p\right]\\
    &\le 2^{T + 1} \cdot \frac{1}{q} \cdot 2^{n+2-\frac{q 2^n}{\poly(n)}}\\
    &\le 2^{T+3n/2+2-\frac{2^{n/2}}{\poly(n)}}\\
    &\le 2^{\poly(n)-\frac{2^{n/2}}{\poly(n)}}\\
    &\le 2^{-2^{\Omega(n)}},
\end{align*}
where in the penultimate line we substituted $T = \poly(n)$.
Combining with \eqref{eq:p2_p3_Az_not_good} and choosing $r = 2^{-n / 6}$ gives us the final bound
\begin{align*}
    \text{LHS}
    &\le r + \frac{64pT^2}{r^2} + 2^{-2^{\Omega(n)}}\\
    &\le r + \frac{64T^2}{r^2}\left(2^{-n/2} + \frac{\poly(n)}{2^n} \right) + 2^{-2^{\Omega(n)}}\\
    &\le \frac{\poly(n)}{2^{n / 6}}\\
    &= \negl(n),
\end{align*}
because again $T = \poly(n)$.
\end{proof}

In plain words, the theorem below shows that the advantage of an efficient adversary, averaged over the distribution of oracles, is small.
The proof mirrors~\cite[Theorem 28]{KQST23-prs}.

\begin{theorem}.
\label{thm:security_no_abs}
    For all polynomial-time quantum adversaries $\mathcal{A}$, 
    \[
    \abs{\E_{A \sim \mathcal{E}_{PRF}}\mbracket{
    \Pr_{k \sim \{0,1\}^n}\mbracket{\mathcal{A}^{\mathcal{O}[A],f_{k}}(1^n) = 1}
    -
    \Pr_{h \sim \{0,1\}^n \to \{0,1\}}\mbracket{\mathcal{A}^{\mathcal{O}[A],h}(1^n) = 1}
    }}
    = \negl(n).
    \]
\end{theorem}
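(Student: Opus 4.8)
The plan is to reduce the pseudorandomness of $\{f_k\}$ to an instance of the resampling indistinguishability established in \Cref{lem:pseudorandom_image}. The key conceptual observation is the one stated in the technical overview: to show that $\mathcal{A}^{\mathcal{O}[A], f_k}$ and $\mathcal{A}^{\mathcal{O}[A], h}$ behave the same, it suffices to show that an adversary given a \emph{uniformly random} $h$ is unlikely to notice if we modify the oracle $A$ so that it becomes consistent with $h = f_k$ for a random $k$. First I would rewrite the quantity $\Pr_k[\mathcal{A}^{\mathcal{O}[A],f_k}(1^n)=1]$, where $A \sim \mathcal{E}_{PRF}$: since $f_k$ is itself determined by the portion of $A$ encoding row $k$, sampling $(A, f_k)$ with $k$ uniform is the same as first sampling $A \sim \mathcal{E}_{PRF}$, then picking $k \sim \{0,1\}^n$, then handing $\mathcal{A}$ the oracle $\mathcal{O}[A]$ together with the function $f_k$ encoded in $A$. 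Dually, in the random-$h$ experiment, I would think of first sampling a uniformly random $h : \{0,1\}^n \to \{0,1\}$, then sampling $A$, then giving $\mathcal{A}$ the pair $(\mathcal{O}[A], h)$, where $h$ is independent of $A$.

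The bridge between these two experiments is a hybrid in which we sample $A \sim \mathcal{E}_{PRF}$, pick $k$, \emph{resample} $h := f_k$ by redrawing the $k$th row of the PRF-block of $A$ from the appropriate patterned Forrelation distribution (so the modified oracle $A_{k \mapsto w}$ encodes a fresh uniformly random function in row $k$, whose truth table is exactly the uniformly random $h$ we want), and hand $\mathcal{A}$ the pair $(\mathcal{O}[A_{k\mapsto w}], h)$. Observe that in this hybrid, marginally $h$ is a uniformly random function independent of the un-modified part of $A$, and $A_{k \mapsto w}$ is distributed exactly as a fresh draw from $\mathcal{E}_{PRF}$; hence this hybrid is \emph{identical} to the random-$h$ experiment. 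So the whole difference
\[
\E_{A}\mbracket{ \Pr_{k}\mbracket{\mathcal{A}^{\mathcal{O}[A],f_k}(1^n) = 1} - \Pr_{h}\mbracket{\mathcal{A}^{\mathcal{O}[A],h}(1^n) = 1}}
\]
equals
\[
\E_{A \sim \mathcal{E}_{PRF},\, k}\mbracket{ \E_{w}\mbracket{\Pr\mbracket{\mathcal{A}^{\mathcal{O}[A],f_k}(1^n) = 1} - \Pr\mbracket{\mathcal{A}^{\mathcal{O}[A_{k \mapsto w}], f_k}(1^n) = 1}}},
\]
where $w$ is the resampled row and I have used that after resampling, $f_k$ (the function value given to $\mathcal{A}$, which is the \emph{old} row $k$) is the same as the random string $z$ that parametrizes the resampling distribution $\PDist_{z,L}$ of the new row. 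In other words, this is precisely the left-hand side of \Cref{lem:pseudorandom_image} with $\mathcal{E} = \mathcal{E}_{PRF}$, the relevant $2^n \times NL$ region being the block of $A$ encoding $\{f_k\}$ with $N = 2^n$, $L = 2^{4n}$ (one checks $N^2 \cdot 2^{2n} = 2^{4n} \le L$), $\SDist$ the distribution of the $k$th row of $f$ (i.e. uniform over $\{0,1\}^{2^n}$, here independent of the rest of $A$), and $\mathcal{A}$'s auxiliary input $z = f_k$ being supplied by running $\mathcal{A}^{\mathcal{O}[\cdot], z}$. Applying \Cref{lem:pseudorandom_image} — noting that the oracle given to $\mathcal{A}$ is $\mathcal{O}[A]$ versus $\mathcal{O}[A_{k\mapsto w}]$, exactly as in that lemma — immediately gives that this expectation is $\negl(n)$.

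The step I expect to require the most care is verifying that the two experiments really do line up with the hypotheses of \Cref{lem:pseudorandom_image} after conditioning: in particular, that the auxiliary string $z$ that the lemma feeds to $\mathcal{A}$ can be taken to be the truth table of $f_k$, that in the resampled oracle this truth table is genuinely independent of $A_{k \mapsto w}$ and uniformly random (so the hybrid coincides exactly with the random-$h$ world), and that $\mathcal{A}$ making $\poly(n)$ queries of length $\poly(n)$ to $\mathcal{O}[A]$ and additionally querying its oracle $h = f_k$ can be folded into a single $\poly(n)$-query algorithm to which the lemma applies (the queries to $h$ become reads of the auxiliary input $z$, which the lemma permits even though $z$ is exponentially long). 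Once these bookkeeping points are settled, the bound $\negl(n)$ follows directly, completing the proof.
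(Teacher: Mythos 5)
Your proposal is correct and takes essentially the same route as the paper: the paper likewise replaces oracle access to $f_k$ or $h$ by handing the adversary the full truth table, couples the two worlds by planting a row encoding $h$ at a uniformly random key $k$ (the mirror image of your resample-from-$P$-to-$R$ direction, which is the same coupling after swapping the labels ``old'' and ``new''), and then invokes \Cref{lem:pseudorandom_image} with $z=h$, $N=2^n$, $L=2^{4n}$, and $\SDist$ uniform. The only blemish is the sentence asserting that the \emph{new} row is drawn from $\PDist_{z,L}$ with $z=f_k^{\mathrm{old}}$ --- read literally that would re-encode the same function and leave you in the pseudorandom world --- but your displayed identity and your instantiation of the lemma are the correct (relabeled) form of the paper's argument, so this is only a slip in the prose.
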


\begin{proof}
We will show that this security property depends only on the query bound that the adversary makes to $\mathcal{O}[A]$: in particular, the advantage remains negligible even if $\mathcal{A}$ can access the entire truth table of either $f_{k}$ or $h$, which we denote by $\mathcal{A}^{\mathcal{O}[A]}(1^n,f_{k})$ or $\mathcal{A}^{\mathcal{O}[A]}(1^n,h)$.

We first use a similar technique to~\cite[Theorem 28]{KQST23-prs}, where we show how to switch between the two security challenges by changing a small part of the oracle.
Suppose we have sampled $A \sim \mathcal{E}_{PRF}$ and $h \sim \{0,1\}^n \to \{0,1\}$, consistent with a uniformly random challenge.
Now, consider drawing $k \sim \{0,1\}^n$, $w \sim \PDist_{h, L}$, and replacing the oracle with $A_{k \mapsto w}$, which means replacing the region of $A$ indexed by $k$ with $w$.
Observe that the resulting $(A_{k \mapsto w}, h)$ is equivalent to a draw from the pseudorandom security challenge. So, we have:

\begin{align}
&\equad\abs{\E_{A \sim \mathcal{E}_{PRF}}\mbracket{
    \Pr_{k \sim \{0,1\}^n}\mbracket{\mathcal{A}^{\mathcal{O}[A]}(1^n,f_{k}) = 1}
    -
    \Pr_{h \sim \{0,1\}^n \to \{0,1\}}\mbracket{\mathcal{A}^{\mathcal{O}[A]}(1^n,h) = 1}
    }}\nonumber\\
&= \abs{\E_{\substack{A \sim \mathcal{E}_{PRF}\\h \sim \{0,1\}^n \to \{0,1\}}}\mbracket{
\Pr_{\substack{k \sim \{0,1\}^n\\w \sim \PDist_{h,L}}} \left[\mathcal{A}^{\mathcal{O}[A_{k \mapsto w}]}(1^n, h) = 1 \right] - \Pr\left[\mathcal{A}^{\mathcal{O}[A]}(1^n, h) = 1 \right] }}\nonumber\\
&\le \E_{\substack{A \sim \mathcal{E}_{PRF}\\h \sim \{0,1\}^n \to \{0,1\}}}\abs{
\Pr_{\substack{k \sim \{0,1\}^n\\w \sim \PDist_{h,L}}} \left[\mathcal{A}^{\mathcal{O}[A_{k \mapsto w}]}(1^n, h) = 1 \right] - \Pr\left[\mathcal{A}^{\mathcal{O}[A]}(1^n, h) = 1 \right] }.\label{eq:p2_p3_A*}
\end{align}
We note that the last quantity is negligible from \Cref{lem:pseudorandom_image}, by considering $h = z$, $N = 2^n$, $L = 2^{4n}$, and identifying $\{0,1\}^n$ with $[2^n]$.
\end{proof}

By way of standard techniques, we turn \Cref{thm:security_no_abs} into a proof that the quantum-computable PRF is secure. First we show that we can replace absolute value in \Cref{thm:security_no_abs} inside of the expectation, following~\cite[Corollary 32]{KQST23-prs}:

\begin{corollary}
\label{cor:security_with_abs}
    For all polynomial-time quantum adversaries $\mathcal{A}$, 
    \[
    \E_{A \sim \mathcal{E}_{PRF}}\abs{
    \Pr_{k \sim \{0,1\}^n}\mbracket{\mathcal{A}^{\mathcal{O}[A],f_{k}}(1^n) = 1}
    -
    \Pr_{h \sim \{0,1\}^n \to \{0,1\}}\mbracket{\mathcal{A}^{\mathcal{O}[A],h}(1^n) = 1}
    }
    = \negl(n).
    \]
\end{corollary}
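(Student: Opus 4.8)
The plan is to deduce \Cref{cor:security_with_abs} from \Cref{thm:security_no_abs} by a standard ``sign-correction'' argument (analogous to \cite[Corollary 32]{KQST23-prs}): for any polynomial-time quantum adversary $\mathcal{A}$, I build a companion polynomial-time quantum adversary $\mathcal{A}'$ whose \emph{signed} advantage on a typical oracle is essentially $\abs{\Delta_n(A)}$, where $\Delta_n(A)$ denotes the signed advantage of $\mathcal{A}$:
\[
\Delta_n(A) \coloneqq \Pr_{k \sim \{0,1\}^n}\mbracket{\mathcal{A}^{\mathcal{O}[A],f_k}(1^n) = 1} - \Pr_{h \sim \{0,1\}^n \to \{0,1\}}\mbracket{\mathcal{A}^{\mathcal{O}[A],h}(1^n) = 1}.
\]
The crux is the observation that a polynomial-time quantum algorithm with access to $\mathcal{O}[A]$ \emph{alone} can estimate $\Delta_n(A)$ to within any desired additive error $\eps = 1/\poly(n)$, except with probability $2^{-\poly(n)}$: to estimate the first term it samples $k \sim \{0,1\}^n$ and simulates $\mathcal{A}^{\mathcal{O}[A], f_k}(1^n)$, answering $\mathcal{A}$'s queries to $f_k$ using the decoder $G^A$ of \Cref{prop:prf_computable} amplified to per-query error $2^{-\poly(n)}$ (and union-bounded over the $\poly(n)$ queries of $\mathcal{A}$); to estimate the second term it instead answers $\mathcal{A}$'s challenge-oracle queries lazily with fresh uniform bits. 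Averaging $\poly(n)$ independent trials and applying a Chernoff bound produces the estimate $\widehat{\Delta}$.

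Now define $\mathcal{A}'$: given the challenge oracle (either $f_k$ for a uniform $k$, or a uniform $h$) and input $1^n$, first run the estimation subroutine above, using only $\mathcal{O}[A]$ and fresh internal randomness, to obtain $\widehat{\Delta}$; set $\widehat{\sigma} \coloneqq +1$ if $\widehat{\Delta} \ge 0$ and $\widehat{\sigma} \coloneqq -1$ otherwise; then run $\mathcal{A}$ on the actual challenge to obtain a bit $b$, and output $b$ if $\widehat{\sigma} = +1$ and $1 - b$ if $\widehat{\sigma} = -1$. Since the estimation phase and the challenge phase use independent randomness, and negating $\mathcal{A}$'s output negates its signed advantage, a direct calculation shows the signed advantage of $\mathcal{A}'$ on oracle $A$ equals $\E\mbracket{\widehat{\sigma}} \cdot \Delta_n(A)$ up to an additive $2^{-\poly(n)}$ (accounting for the imperfect decoder used inside the estimator), the expectation being over $\mathcal{A}'$'s estimation randomness. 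When $\abs{\Delta_n(A)} > \eps$ we have $\widehat{\sigma} = \sgn(\Delta_n(A))$ except with probability $2^{-\poly(n)}$, hence $\E\mbracket{\widehat{\sigma}} \cdot \Delta_n(A) \ge (1 - 2^{-\poly(n)})\abs{\Delta_n(A)}$; and when $\abs{\Delta_n(A)} \le \eps$ we trivially have $\E\mbracket{\widehat{\sigma}} \cdot \Delta_n(A) \ge -\eps$. Either way, the signed advantage of $\mathcal{A}'$ on $A$ is at least $\abs{\Delta_n(A)} - \eps - 2^{-\poly(n)}$.

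Taking the expectation over $A \sim \mathcal{E}_{PRF}$ and rearranging yields
\[
\E_{A \sim \mathcal{E}_{PRF}} \abs{\Delta_n(A)} \le \E_{A \sim \mathcal{E}_{PRF}}\mbracket{\text{signed advantage of } \mathcal{A}' \text{ on } A} + \eps + 2^{-\poly(n)}.
\]
Because $\mathcal{A}'$ is itself a polynomial-time quantum adversary, \Cref{thm:security_no_abs} bounds the first term on the right-hand side by $\negl(n)$; choosing $\eps = n^{-\omega(1)}$ then makes the whole right-hand side negligible, which is exactly the statement of \Cref{cor:security_with_abs} (its left-hand side is $\E_{A}\abs{\Delta_n(A)}$). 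The only step with real content is the estimation subroutine, so the main obstacle is to confirm that $\Delta_n(A)$ is efficiently estimable from $\mathcal{O}[A]$ alone---this is precisely where the efficient decodability of the $f_k$'s from $A$ via \Cref{prop:prf_computable} is essential, since it lets $\mathcal{A}'$ ``internally replay'' the real experiment---and to check that the various small error terms (decoding error, sampling error, and sign-flip error) sum to something negligible.
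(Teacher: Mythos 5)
Your proposal is correct in substance but uses a different mechanism than the paper. The paper's proof (following \cite[Corollary 32]{KQST23-prs}) builds a single derived adversary $\mathcal{B}$ that flips a coin $c$, internally simulates \emph{one} run of either the real experiment (sampling $k$ and answering challenge queries via the decoder) or the ideal experiment (via Zhandry's compressed oracle), and outputs the XOR of $c$, the simulated output $d$, and $\mathcal{A}$'s output on the actual challenge; a short calculation shows $\mathcal{B}$'s signed advantage on each fixed $A$ is exactly $(a(A)-b(A))^2 \ge 0$, so \Cref{thm:security_no_abs} applied to $\mathcal{B}$ bounds $\E_A[(a-b)^2]$, and Jensen converts this to a bound on $\E_A\abs{a-b}$ at the cost of a square root (immaterial for negligibility). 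Your sign-estimation adversary instead runs polynomially many internal simulations to estimate $\sgn(\Delta_n(A))$ and conditions its output on that, getting signed advantage $\abs{\Delta_n(A)} - O(\eps)$ directly; this avoids the square-root loss but requires the quantifier dance of one adversary $\mathcal{A}'_c$ per target polynomial $n^{-c}$ (a single polynomial-time adversary cannot achieve $\eps = n^{-\omega(1)}$, so your final sentence should be read as ``for every constant $c$, take $\eps = n^{-c-1}$''). Both routes rest on the same essential capability --- that the derived adversary can internally replay both experiments using only $\mathcal{O}[A]$ --- and both reduce to \Cref{thm:security_no_abs}, so the approaches are close cousins. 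One technical point needs repair: answering $\mathcal{A}$'s challenge-oracle queries in the ideal experiment ``lazily with fresh uniform bits'' is not a valid simulation of a random function against \emph{quantum} queries; you need Zhandry's compressed oracle technique (as the paper uses) or a $2q$-wise independent function. With that substitution, and with the decoder error handled as you indicate, your argument goes through.
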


\begin{proof}
    Suppose $\mathcal{A}$ outputs a single bit. For a fixed oracle $A$, define the probabilities
    \[
    a(A) \coloneqq \Pr_{k \sim \{0,1\}^n}\mbracket{\mathcal{A}^{\mathcal{O}[A],f_{k}}(1^n) = 1}.
    \]
    and
    \[
    b(A) \coloneqq \Pr_{h \sim \{0,1\}^n \to \{0,1\}}\mbracket{\mathcal{A}^{\mathcal{O}[A],h}(1^n) = 1}.
    \]
    We wish to bound $\E_{A \sim \mathcal{E}_{PRF}}\mbracket{a(A) - b(A)}$. Define an adversary $\mathcal{B}^{\mathcal{O}[A], \cdot}$ 
    built from $\mathcal{A}^{\mathcal{O}[A], \cdot}$ that executes as follows:
    \begin{enumerate}[1.]
        \item Sample $c \sim \{0,1\}$.
        \item If $c = 1$, then sample $k \sim \{0,1\}^n$ and run $\mathcal{A}^{\mathcal{O}[A],f_{k}}(1^n)$. If $c = 0$ then simulate $\mathcal{A}^{\mathcal{O}[A],h}(1^n)$ on a uniformly random $h: \{0,1\}^n \to \{0,1\}$ using Zhandry's compressed oracle technique~\cite{Zha19-compress}. Either way, call the output $d$.
        \item Output $c \oplus d \oplus \mathcal{A}^{\mathcal{O}[A], \cdot}(1^n)$.
    \end{enumerate}

    The key observation is that $c \oplus d = 0$ with probability $\frac{1 + a(A) - b(A)}{2}$ and $c \oplus d = 1$ with probability $\frac{1 - a(A) + b(A)}{2}$. Hence, the distinguishing advantage of $\mathcal{B}$ satisfies:
    \begin{align*}
    &\E_{A \sim \mathcal{E}_{PRF}}\mbracket{
    \Pr_{k \sim \{0,1\}^n}\mbracket{\mathcal{B}^{\mathcal{O}[A],f_{k}}(1^n) = 1}
    -
    \Pr_{h \sim \{0,1\}^n \to \{0,1\}}\mbracket{\mathcal{B}^{\mathcal{O}[A],h}(1^n) = 1}
    }\\
    &\qquad\qquad = \E_{A \sim \mathcal{E}_{PRF}}\mbracket{\Pr[c \oplus d = 0](a(A) - b(A)) + \Pr[c \oplus d = 1](b(A) - a(A))}\\
    &\qquad\qquad = \E_{A \sim \mathcal{E}_{PRF}}\mbracket{(a(A) - b(A))^2}.
    \end{align*}
    The desired quantity is then bounded by:
    \begin{align*}
        \E_{A \sim \mathcal{E}_{PRF}}\mbracket{a(A) - b(A)}
        &\le
        \sqrt{\E_{A \sim \mathcal{E}_{PRF}}[(a(A) - b(A))^2]}\\
        &= \sqrt{\E_{A \sim \mathcal{E}_{PRF}}\mbracket{
    \Pr_{k \sim \{0,1\}^n}\mbracket{\mathcal{B}^{\mathcal{O}[A],f_{k}}(1^n) = 1}
    -
    \Pr_{h \sim \{0,1\}^n \to \{0,1\}}\mbracket{\mathcal{B}^{\mathcal{O}[A],h}(1^n) = 1}
    }}\\
    &\le \sqrt{\negl(n)}\\
    &\le \negl(n),
    \end{align*}
    by Jensen's inequality and \Cref{thm:security_no_abs}.
\end{proof}

\begin{theorem}
\label{thm:main_prf}
    With probability $1$ over $A \sim \mathcal{E}_{PRF}$, $\{f_k\}_{k \in \{0,1\}^n}$ is a quantum-computable pseudorandom function family relative to $\mathcal{O}[A]$.
\end{theorem}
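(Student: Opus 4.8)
The plan is to combine the two ingredients already established for the construction: \Cref{prop:prf_computable}, which gives efficient quantum computability of $\{f_k\}$ with probability $1$ over $A \sim \mathcal{E}_{PRF}$, and \Cref{cor:security_with_abs}, which gives a bound on the \emph{expected} (over $A$) distinguishing advantage of any fixed polynomial-time adversary. The task is thus to upgrade the averaged statement of \Cref{cor:security_with_abs} into a ``with probability $1$'' statement that holds \emph{simultaneously} for all (uniform) adversaries. This is a standard Borel--Cantelli-style argument, and I expect it to follow the same template as the corresponding step in \cite{KQST23-prs}.

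First I would fix a uniform adversary $\mathcal{A}$, say running in time bounded by $n^c$ for some constant $c$. By \Cref{cor:security_with_abs},
\[
\E_{A \sim \mathcal{E}_{PRF}}\mbracket{ \delta_\mathcal{A}(n, A) } = \negl(n),
\]
where $\delta_\mathcal{A}(n, A)$ denotes the absolute distinguishing advantage at security parameter $n$ relative to $\mathcal{O}[A]$. Since the advantage is nonnegative and negligibly-bounded in expectation, Markov's inequality gives, for each $n$,
\[
\Pr_{A \sim \mathcal{E}_{PRF}}\mbracket{ \delta_\mathcal{A}(n, A) \ge n^{-\log n} } \le n^{\log n} \cdot \negl(n) = \negl(n),
\]
and this is summable over $n$ (e.g.\ it is eventually at most $2^{-n}$). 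By the Borel--Cantelli Lemma, with probability $1$ over $A$, we have $\delta_\mathcal{A}(n, A) < n^{-\log n}$ for all but finitely many $n$, which is in particular negligible; hence, with probability $1$, $\mathcal{A}$ fails to break pseudorandomness relative to $\mathcal{O}[A]$.

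Next I would take a union bound over adversaries. Since we only consider uniform adversaries --- each specified by a polynomial-time Turing machine that on input $1^\kappa$ outputs the circuit for $\mathcal{A}(1^\kappa)$, together with a polynomial time bound --- there are only countably many such adversaries. A countable union of probability-$0$ events has probability $0$, so with probability $1$ over $A \sim \mathcal{E}_{PRF}$, \emph{every} uniform polynomial-time quantum adversary has negligible distinguishing advantage relative to $\mathcal{O}[A]$; that is, the pseudorandomness condition holds. Intersecting this probability-$1$ event with the probability-$1$ event from \Cref{prop:prf_computable} (efficient quantum computability of $\{f_k\}$), we conclude that with probability $1$ over $A$, $\{f_k\}_{k \in \{0,1\}^n}$ satisfies both conditions of a quantum-computable pseudorandom function family relative to $\mathcal{O}[A]$.

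There is no serious obstacle here; the main subtlety is just bookkeeping around the notion of a uniform adversary (ensuring the set of adversaries being quantified over is genuinely countable and that ``polynomial-time'' is interpreted with an explicit time bound attached to each machine, so that the union bound is over a countable set), and noting that it suffices to rule out every fixed adversary on the probability-$1$ event rather than needing a single adversary that works for all $A$. All the real content --- the Forrelation encoding, the $\mathsf{AC^0}$ block-sensitivity lemma, and the reduction to resampling indistinguishability --- has already been carried out in \Cref{cor:security_with_abs}.
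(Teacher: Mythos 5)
Your overall strategy is exactly the paper's: fix a uniform adversary, apply Markov's inequality to the expectation bound from \Cref{cor:security_with_abs}, invoke Borel--Cantelli to get a probability-$1$ statement per adversary, and union-bound over the countably many uniform adversaries, finally intersecting with the probability-$1$ event of \Cref{prop:prf_computable}.

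However, your instantiation of the Markov step contains a genuine error. You choose the fixed threshold $n^{-\log n}$ and claim $\Pr_{A}\mbracket{\delta_\mathcal{A}(n,A) \ge n^{-\log n}} \le n^{\log n}\cdot\negl(n) = \negl(n)$, asserting the result is summable (``eventually at most $2^{-n}$''). This is false for a general negligible function: negligibility only guarantees $\delta(n) \le n^{-c}$ eventually for each \emph{constant} $c$, not $\delta(n) \le n^{-\log n}$. For instance, $\delta(n) = n^{-\sqrt{\log n}}$ is negligible, yet $n^{\log n}\cdot\delta(n) = n^{\log n - \sqrt{\log n}} \to \infty$, so your Markov bound is vacuous and the Borel--Cantelli sum diverges. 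The standard fix --- and what the paper does --- is to let the threshold depend on $\delta$ itself: take the bad event to be $\delta_\mathcal{A}(n,A) \ge \sqrt{\delta(n)}$, so Markov gives probability at most $\sqrt{\delta(n)}$, which is at most $n^{-2}$ for all large $n$ (since $\delta(n) \le n^{-4}$ eventually) and hence summable; and $\sqrt{\delta(n)}$ is itself negligible, so the surviving advantage bound is still negligible. With that one-line repair, your argument goes through and coincides with the paper's proof.
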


\begin{proof}
    The efficient computability of the ensemble was established in \Cref{prop:prf_computable}, so we turn to proving its pseudorandomness against polynomial-time adversaries.

    \Cref{cor:security_with_abs} shows that for any polynomial-time adversary $\mathcal{A}$, there is a negligible function $\delta(n)$ such that:
    \[
    \E_{A \sim \mathcal{E}_{PRF}}\abs{
    \Pr_{k \sim \{0,1\}^n}\mbracket{\mathcal{A}^{\mathcal{O}[A],f_{k}}(1^n) = 1}
    -
    \Pr_{h \sim \{0,1\}^n \to \{0,1\}}\mbracket{\mathcal{A}^{\mathcal{O}[A],h}(1^n) = 1}
    }
    \le \delta(n).
    \]
    Markov's inequality then implies
    \[
    \Pr_{A \sim \mathcal{E}_{PRF}}\mbracket{
    \abs{
    \Pr_{k \sim \{0,1\}^n}\mbracket{\mathcal{A}^{\mathcal{O}[A],f_{k}}(1^n) = 1}
    -
    \Pr_{h \sim \{0,1\}^n \to \{0,1\}}\mbracket{\mathcal{A}^{\mathcal{O}[A],h}(1^n) = 1}
    }
    \ge \sqrt{\delta(n)}
    }
    \le \sqrt{\delta(n)}.
    \]
    Since $\delta$ is negligible, there exists $n_0$ such that $\delta(n) \le n^{-4}$ for all $n \ge n_0$.
    Then infinite sum $\sum_{n=n_0}^\infty \sqrt{\delta(n)} \le \sum_{n = 0}^\infty n^{-2} < \infty$, so the Borel--Cantelli Lemma implies that with probability $1$ over $A \sim \mathcal{E}_{PRF}$,
    \[
    \abs{
    \Pr_{k \sim \{0,1\}^n}\mbracket{\mathcal{A}^{\mathcal{O}[A],f_{k}}(1^n) = 1}
    -
    \Pr_{h \sim \{0,1\}^n \to \{0,1\}}\mbracket{\mathcal{A}^{\mathcal{O}[A],h}(1^n) = 1}
    }
    \le \sqrt{\delta(n)}
    \]
    except for finitely many $n$'s.
    This shows that the distinguishing advantage is bounded by a negligible function with probability $1$. Because there are only countably many uniform polynomial-time quantum adversaries, we can union bound over all of them to conclude that with probability $1$, \textit{every} adversary $\mathcal{A}$ has at most a negligible distinguishing advantage.
\end{proof}

We conclude by observing that \Cref{thm:main_prf} gives rise to a construction of quantum-computable injective one-way functions as well.
The proof uses a standard construction of a one-way function from a random oracle (see, e.g.\ \cite{IR89-permutations}), and injectivity follows from the fact that an expanding random function is with overwhelming probability injective.
While one-way functions are well-known to be equivalent to pseudorandom functions and many other ``Minicrypt'' primitives, \textit{injective} one-way functions are somewhat more powerful.
For example, they suffice to build non-interactive commitment schemes with classical communication~\cite{Blu89-coin,GL89-hard-core,MP12-nic}.

\begin{corollary}
    \label{cor:injective_qcowf}
    With probability $1$ over $A \sim \mathcal{E}_{PRF}$, there exists a quantum-computable injective one-way function relative to $\mathcal{O}[A]$.
\end{corollary}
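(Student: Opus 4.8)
The plan is to build the injective one-way function directly from the PRF-keyed family $\{f_k\}$ exactly as one builds a one-way function from a random oracle, using sufficient length expansion to guarantee injectivity. Concretely, I would first observe that \Cref{thm:main_prf} gives, with probability $1$ over $A \sim \mathcal{E}_{PRF}$, a quantum-computable PRF family $\{f_k : \{0,1\}^n \to \{0,1\}\}_{k \in \{0,1\}^n}$ secure relative to $\mathcal{O}[A]$. From this I would first pass to a (length-expanding) pseudorandom generator / pseudorandom function with longer output by the standard black-box construction — e.g.\ define $F_k(x) := f_k(x \| 1) f_k(x \| 2) \cdots f_k(x \| \ell)$ for $\ell = \ell(n)$ an appropriate polynomial, which remains quantum-computable (run $G^A$ coordinatewise) and pseudorandom. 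Then I would define the candidate one-way function on input $(k, x) \in \{0,1\}^n \times \{0,1\}^n$ by $\mathsf{OWF}(k,x) := (k, F_k(x))$ with $\ell$ chosen so that the map $x \mapsto F_k(x)$ is length-expanding, say $\ell \ge 3n$. Efficient quantum computability is immediate from \Cref{prop:prf_computable}.

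For one-wayness: since $\{f_k\}$ (hence $\{F_k\}$) is pseudorandom, inverting $\mathsf{OWF}$ would let one distinguish $F_k$ from a random long-output function by the contrapositive of the usual argument — an inverter that on input $(k,y)$ with $y = F_k(x)$ for random $x$ finds a preimage $x'$ with $F_k(x') = y$ succeeds with non-negligible probability when $y$ is truly pseudorandom, but can succeed only with probability at most $2^n / 2^{\ell} = 2^{-2n}$ (the fraction of the range hit) when $y$ is a uniformly random string, since a random length-tripling function has, with overwhelming probability, at most a negligible fraction of its range covered. This reduction is fully black-box and carries over verbatim against quantum adversaries with oracle access to $\mathcal{O}[A]$, so the argument goes through exactly as in the classical ``OWF from a random oracle'' proof (see \cite{IR89-permutations}). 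The Borel--Cantelli bookkeeping is identical to the proof of \Cref{thm:main_prf}: the expected advantage is negligible, Markov plus Borel--Cantelli gives probability $1$ over $A$ that no fixed uniform adversary wins, and a union bound over countably many adversaries finishes.

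For injectivity: I would argue that with probability $1$ over $A \sim \mathcal{E}_{PRF}$, for every sufficiently large $n$ and every key $k \in \{0,1\}^n$, the map $x \mapsto F_k(x)$ is injective. Since each $f_k$ is a uniformly random function, $F_k$ restricted to $\{0,1\}^n$ is a uniformly random function into $\{0,1\}^{\ell}$ with $\ell \ge 3n$; a birthday-bound calculation gives that a collision exists with probability at most $\binom{2^n}{2} 2^{-\ell} \le 2^{2n - \ell} \le 2^{-n}$ for a fixed $k$, hence at most $2^n \cdot 2^{-n} = 1$ — too weak as stated, so I would take $\ell$ slightly larger, e.g.\ $\ell = 3n + 2\log n$ or just $\ell \ge 4n$, making the probability over all $k$ of some $F_k$ colliding at most $2^{-\Omega(n)}$, summable in $n$. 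Borel--Cantelli then gives injectivity of $x \mapsto F_k(x)$ for all $k$ and all large enough $n$ with probability $1$; since $\mathsf{OWF}(k,x) = (k, F_k(x))$ also records $k$, the whole function is injective (and one patches the finitely many bad $n$ trivially, e.g.\ by hardcoding). The main obstacle — really the only subtlety — is choosing the expansion parameter $\ell$ large enough that the union bound over the $2^n$ keys of the birthday-collision probability is summable over $n$, so that Borel--Cantelli applies; this is routine once one is careful, but it is the one place a naive choice ($\ell = 3n$) fails.
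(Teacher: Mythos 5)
Your construction differs from the paper's in a way that breaks the one-wayness argument. The paper hides the \emph{key}: it sets $g(k) = (f_k(1),\ldots,f_k(3n))$ with the $3n$ evaluation points fixed and public, so that an inverter recovering $k$ from these outputs immediately yields a PRF distinguisher (query the challenge oracle $h$ at $1,\ldots,3n$, run the inverter, check consistency; a random $h$ is consistent with some $f_k$ with probability at most $2^n\cdot 2^{-3n}$). You instead set $\mathsf{OWF}(k,x) = (k, F_k(x))$, revealing the key and hiding the PRF \emph{input} $x$. Pseudorandomness of $\{F_k\}$ says nothing about this task: the PRF security game gives the distinguisher only oracle access to $f_k$ for a \emph{hidden} $k$, so your proposed reduction cannot supply the inverter with the input $k$ it requires, and the reduction does not go through. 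Nor is the implication true generically: a pseudorandom permutation is (essentially) a PRF, yet $(k,x)\mapsto(k,P_k(x))$ is trivially invertible given $k$. Salvaging your construction relative to this particular oracle would amount to proving one-wayness of the underlying Forrelation-encoded random oracle with the prefix $k$ known, which requires another pass through the resampling machinery (\Cref{lem:pseudorandom_image}) rather than a black-box appeal to \Cref{thm:main_prf}.

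Two smaller points. First, your injectivity analysis is correct but is made harder than necessary by your construction: because you need $x\mapsto F_k(x)$ injective for all $2^n$ keys simultaneously, you must union-bound over keys and inflate $\ell$ beyond $3n$; the paper's single function $g:\{0,1\}^n\to\{0,1\}^{3n}$ needs no such union bound and $3n$ suffices. Second, your definition $F_k(x) = f_k(x\|1)\cdots f_k(x\|\ell)$ has a domain mismatch ($f_k$ takes $n$-bit inputs, but $x\|i$ is longer than $n$ bits); this is cosmetic and fixable, unlike the issue above.
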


\begin{proof}
    Let $\{f_k\}$ be the quantum-computable pseudorandom function family from \Cref{thm:main_prf}.
    Consider the function $g: \{0,1\}^n \to \{0,1\}^{3n}$ defined by:
    \[
    g(k) = (f_k(1),f_k(2),\ldots,f_k(3n)),
    \]
    where the numbers $1,\ldots,3n$ are treated as their binary representations in $\{0,1\}^n$ (assuming $n \ge 4$).

    We claim that $g$ is a quantum-computable one-way function.
    $g$ is clearly efficiently quantum computable, by way of the algorithm for evaluating $f_k$, after suitably applying error reduction to compute all $3n$ output bits with high probability.
    The one-wayness of $g$ follows from the pseudorandomness of $\{f_k\}$ established in \Cref{thm:main_prf}, for if there were an efficient inversion algorithm to find $k$ given $(f_k(1),f_k(2),\ldots,f_k(3n))$ with noticeable probability, one could use this to distinguish the pseudorandom function family from a random function.
    In particular, given oracle access to $h: \{0,1\}^n \to \{0,1\}$ which is either uniformly random or pseudorandom (i.e., $h=f_k$ for a random $k$), the adversary would query $(h(1),h(2),\ldots,h(3n))$, run the inversion algorithm to obtain $k$, and then accept if and only if $h(i) = f_k(i)$ for all $i \in [3n]$.
    Assuming the inversion algorithm succeeds with non-negligible probability, then on $h = f_k$ a pseudorandom challenge, the adversary accepts with non-negligible probability.
    By contrast, the adversary rejects with high probability on a uniformly random $h$, because the probability that there exists an $f_k$ consistent with $h$ on these $3n$ inputs is at most
    \[
    \sum_{k \in \{0,1\}^n} \Pr[\forall i \in [3n]: h(i) = f_k(i) ] = 2^n \cdot 2^{-3n} = 2^{-2n}.
    \]

    It remains to establish injectivity.
    We claim that with probability $1$, $g$ is injective on all but finitely many input lengths.
    For a particular input length $n \in \Naturals$, the probability that $g$ fails to be injective is at most
    \[
    \sum_{x\neq y \in \{0,1\}^n} \Pr[g(x) = g(y)] \le \binom{2^n}{2} 2^{-3n} \le 2^{-n}.
    \]
    So, by the Borel--Cantelli Lemma, because $\sum_{n=1}^\infty 2^{-n} = 1 < \infty$, $g$ is injective on all but finitely many input lengths with probability $1$.
    Therefore, $g$ can be modified on finitely many inputs into a different function $g'$ that is injective on \textit{all} input lengths.
    This $g'$ remains a quantum-computable one-way function because it differs from $g$ in only finitely many places.
\end{proof}

\iftrapdoor
\section{Trapdoor Function Oracle and Construction}

In this section, we build the oracle relative to which quantum-computable trapdoor one-way functions exist but $\mathsf P = \mathsf{NP}$. 
(We defer the security proof to the next section.)

The oracle still consists of $A, B$ where $B$ is recursively constructed from $A$, and $A$ is a Forrelation-encoding as before.
Instead of $A$ encoding a random oracle, now $A$ will be an encoding of three functions $G, F, I$ each parameterized (implicitly, for notational convenience) by a security parameter $n \in \Naturals$.
We first describe the unencoded functions:
\begin{itemize}
    \item $G: \{0, 1\}^n \to \{0, 1\}^{3n}$ is a random function mapping a trapdoor $td$ to a public key $pk$.
    \item $F: \{0, 1\}^{3n} \times \{0, 1\}^n \to \{0, 1\}^{6n}$ is a random function mapping a public key $pk$ and a pre-image $x$ to an image $y$.
    \item $I: \{0, 1\}^n \times \{0, 1\}^{6n} \to \{0, 1\}^n \cup \{\bot\}$ is the inversion function determined by $F$ and $G$.
        In particular, $I(td, y) = F_{G(td)}^{-1}(y)$ where $F_{pk}^{-1}(y)$ is the lexicographically smallest bitstring $x$ that satisfies $F(pk, x) = y$, or a special symbol $\bot$ if such a bitstring does not exist.
\end{itemize}

The encoding $\overline F$ of $F$ takes as input $pk, x, i, z$ where $\overline F(pk, x, i, \cdot)$ is a Forrelated instance of length $L = 2^{15n}$ if the $i$th bit of $F(pk, x)$ is $1$, or uniformly random otherwise.
The encoding works similarly for $\overline G, \overline I$.
(We treat the output of $I$ as an $(n + 1)$-bit string.)
Finally, as in \Cref{def:oracle_A}, we choose a suitable encoding so that all of $\overline{G}$, $\overline{F}$, and $\overline{I}$ are accessible by querying $A$.
We henceforth denote this distribution over oracles by $A \sim \mathcal{E}_{TD}$.


We wish to show that computing $(G, F, I)$ gives a quantum-computable trapdoor one-way function.
We first show that it satisfies the conditions of \Cref{def:trapdoor_function} other than one-wayness, using the same Forrelation decoding argument as \Cref{prop:prf_computable}.

\begin{proposition}
    \label{prop:towf-correctness}
    With probability $1$ over $A \sim \mathcal{E}_{TD}$, there exists a triple of algorithms $(\Gen, \Eval, \Inv)$ satisfying the input-output behavior, efficient computability, trapdoor correctness criteria of \Cref{def:trapdoor_function}.
\end{proposition}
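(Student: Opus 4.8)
The plan is to lift the Forrelation-decoding argument of \Cref{prop:prf_computable} to the three encoded functions $\overline G, \overline F, \overline I$, and to add one further probability-one event ensuring that $F(pk,\cdot)$ is injective, which is exactly what makes the lexicographically-smallest-preimage definition of $I$ consistent with trapdoor correctness. Throughout, $L = 2^{15n}$ is the Forrelation length used in the construction. I would define $\Gen(1^n)$ to sample $td \sim \{0,1\}^n$ and output $(\widehat G^A(td), td)$, where $\widehat G^A$ runs the quantum decoder $\mathcal{A}$ of \Cref{thm:raz-tal} on each block $\overline G(td,i,\cdot)$, $i \in [3n]$, amplified by majority vote over $\poly(n)$ independent runs per block; analogously $\Eval(pk,x) \coloneqq \widehat F^A(pk,x)$ decodes the $6n$ blocks $\overline F(pk,x,\cdot,\cdot)$, and $\Inv(td,y) \coloneqq \widehat I^A(td,y)$ decodes the $n+1$ blocks $\overline I(td,y,\cdot,\cdot)$. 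Each algorithm runs in time $\poly(n)$, and the input--output behavior of \Cref{def:trapdoor_function}(i) is then immediate with $\lambda = 3n$, $\ell = n$, $m = 6n$.

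For the decoding correctness I would repeat the analysis of \Cref{prop:prf_computable}: for any single block, \Cref{thm:raz-tal}(1) together with Markov's inequality gives $\Pr_{A\sim\mathcal{E}_{TD}}[\text{one run of } \mathcal{A} \text{ errs on that block with probability} \ge 1/3] \le 3/L^2 = 3\cdot 2^{-30n}$. Summing over the $O(n)\cdot 2^{7n}$ blocks of $\overline G, \overline F, \overline I$ at parameter $n$, the probability that some block is bad is $O(n)\cdot 2^{-23n}$, and since $\sum_n O(n)2^{-23n}<\infty$, the Borel--Cantelli Lemma gives a probability-one event under which all blocks at all but finitely many lengths are good. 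On a good block, $\poly(n)$-fold majority amplification pushes the per-bit decoding error to $\negl(n)$, so a union bound over the at most $6n$ output bits makes each $\Gen$/$\Eval$/$\Inv$ call err with negligible probability; this gives efficient computability, i.e.\ $\Pr[\Eval(pk,x) = F(pk,x)] \ge 2/3$ (in fact $1-\negl(n)$).

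For trapdoor correctness, observe that on a good oracle $\Inv(td, F(G(td),x))$ decodes $I(td, F(G(td),x))$, which by definition is the lexicographically smallest preimage of $F(G(td),x)$ under $F(G(td),\cdot)$ --- and this is $x$ exactly when $F(G(td),\cdot)$ is injective. Since $F(pk,\cdot):\{0,1\}^n\to\{0,1\}^{6n}$ is uniformly random, $\Pr_A[F(pk,\cdot)\text{ is not injective}] \le \binom{2^n}{2}2^{-6n} \le 2^{-4n}$; a union bound over $pk\in\{0,1\}^{3n}$ together with Borel--Cantelli (as $\sum_n 2^{3n}2^{-4n}<\infty$) yields a second probability-one event on which $F(pk,\cdot)$ is injective for every $pk$ at all but finitely many lengths. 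Intersecting the two probability-one events, for every $x$ we obtain $\Pr_{(pk,td)\gets\Gen(1^n)}[\Inv(td,F(pk,x)) = x] \ge 1-\negl(n)$, the finitely many exceptional lengths being harmless for the asymptotic $\negl$ criterion (or removable by hard-coding).

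I do not expect a genuine obstacle: this is essentially \Cref{prop:prf_computable} with two additional encoded functions and one additional probability-one event. The only point needing care is the bookkeeping --- verifying that $L = 2^{15n}$ comfortably dominates the $\approx 2^{7n}$ blocks being union-bounded, and that the decoding event and the injectivity event are each probability one so their intersection is too. The one conceptual wrinkle relative to the PRF case is recognizing that trapdoor correctness for the lex-smallest-preimage oracle $I$ forces us to argue injectivity of $F(pk,\cdot)$, which is precisely why $F$ is taken to be $3n$-bit expanding.
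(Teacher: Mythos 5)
Your proposal is correct and follows essentially the same route as the paper: decode each bit of $G,F,I$ with the Raz--Tal quantum algorithm plus amplification, use Markov, a union bound over blocks, and Borel--Cantelli to get a probability-one decoding event, and separately establish that $F(pk,\cdot)$ is injective for all but finitely many $pk$ (via $\binom{2^n}{2}2^{-6n}$ and Borel--Cantelli) so that the lex-smallest-preimage oracle $I$ yields trapdoor correctness. The only difference is that you spell out the block-counting bookkeeping that the paper delegates to its citation of \Cref{prop:prf_computable}.
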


\begin{proof}
    Using the same approach as \Cref{prop:prf_computable}, all of $G, F, I$ can be evaluated (with probability $1$, on all but finitely many inputs) by a bounded-error quantum that queries the Forrelation encodings $\overline{G}, \overline{F}, \overline{I}$.
    The algorithm for $\Gen(1^n)$ is to sample a random $td \in \{0,1\}^n$ and evaluate $(G(td), td)$ using the bounded-error quantum algorithm mentioned above.
    Similarly, $\Eval(pk, x)$ evaluates $F(pk, x)$, and $\Inv(td, y)$ evaluates $I(td, y)$ with bounded-error.
    In the case of $\Gen$ and $\Inv$, we amplify the success probability $2/3$ to $1 - 2^{-n}$.
    These algorithms clearly satisfy the input-output behavior and efficient computability criteria with $\lambda = 3n$, $\ell = n$, and $m = 6n$, so it remains to establish trapdoor correctness. For this we show the following claim:

    \begin{claim}
        \label{claim:f_injective}
        With probability $1$ over $A \sim \mathcal{E}_{TD}$, for all but finitely many public keys $pk \in \{0,1\}^*$, $F(pk, \cdot)$ is injective.
    \end{claim}

    For a fixed $pk \in \{0,1\}^{3n}$, the probability that two distinct $x, x' \in \{0,1\}^n$ satisfy $F(pk, x) = F(pk, x')$ is at most $2^{2n} \cdot 2^{-6n} = 2^{-4n}$, by the union bound.
    So, for a given input length $n$, the probability that there exists a $pk \in \{0,1\}^{3n}$ for which $F(pk, \cdot)$ is not injective is at most $2^{-n}$.
    By the Borel-Cantelli lemma, since $\sum_{n=1}^\infty  2^{-n} = 1 < \infty$, the claim is established.

    Now observe that if $F(pk, \cdot)$ is injective for all $pk \in \{0,1\}^{3n}$, then for any input $x \in \{0,1\}^n$ we have:
    \comment{A: I changed it a bit. W: looks good.}
    \begin{align*}
    \Pr_{(pk, td) \gets \Gen(1^n)}\mbracket{\Inv(td, F(pk, x)) = x} &\ge 1 - 2 \cdot 2^{-n}.
    \end{align*}
    This is because (i) $\Gen(1^n)$ outputs a pair $(pk, td)$ where $pk = G(td)$ with probability at least $1-2^{-n}$, (ii) $\Inv$ executes $I$ correctly with probability at least $1-2^{-n}$, and (iii) when both (i) and (ii) happen and $F(pk, \cdot)$ is injective, then $\Inv(td, F(pk, x)) = I(td, F(G(td),x)) = x$.
    %
\end{proof}


It remains to show that $F$ is hard to invert against $\mathsf{BQP^{PH}}$ in the next section.


\section{Trapdoor Function Security Proof}
\label{sec:trapdoor-security}

We now turn to proving one-wayness.
On a high level, the proof is going to break down into three steps:
\begin{enumerate}
    \item Prove that our construction has pseudorandom public keys (\Cref{def:towf-pseudorandom-pk}), which again means that the public key distribution generated by $\Gen$ is pseudorandom.
        The proof idea is roughly the same as \Cref{thm:main_prf} but we instead perform a careful resampling of $G$ and $I$ at the same time.
    \item Prove that our construction is one-way against the uniformly random public key distribution (see \eqref{eq:trapdoor-oneway-fake}).
        The proof idea is that with overwhelming probability that we will actually sample a public key without any corresponding inversion oracles, thus one-wayness reduces to pseudorandomness (similar to \Cref{cor:injective_qcowf}), which can be again established via a similar argument as \Cref{thm:main_prf}.
    \item Combine the two above and conclude that it is one-way by \Cref{fact:fake-pk-conversion}.
\end{enumerate}

\begin{lemma}
\label{prop:pseudorandom-pk}
    $(\Gen, \Eval, \Inv)$ in \Cref{prop:towf-correctness} has pseudorandom public keys with probability $1$ over $A \sim \mathcal{E}_{TD}$.
\end{lemma}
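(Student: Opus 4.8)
The plan is to reduce the pseudorandomness of the public-key distribution to an invocation of \Cref{lem:pseudorandom_image}, in exactly the same way that \Cref{thm:security_no_abs} does for the PRF, but applied to the joint $(\overline G, \overline I)$ region of the oracle rather than to $\overline F$. Concretely, a draw from $\Gen(1^n)$ produces $pk = G(td)$ for a uniformly random $td \in \{0,1\}^n$, together with the inversion table $I(td,\cdot) = F_{G(td)}^{-1}(\cdot)$ sitting in the $td$-th block of $\overline I$. The key observation is that this is distributionally equivalent to the following resampling process: start from the \emph{real} oracle $A \sim \mathcal E_{TD}$, then pick a uniformly random row $td \sim \{0,1\}^n$, resample $G(td)$ to a fresh uniform $pk^\ast \in \{0,1\}^{3n}$, and overwrite the $td$-th block of $\overline I$ with the Forrelation encoding of the inversion table of $F(pk^\ast,\cdot)$. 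So I would set up the oracle so that the $(G, I)$ portion for security parameter $n$ forms a $2^n \times (\text{something})$ array of blocks indexed by $td$, each block consisting of the Forrelation-encoded pair $\bigl(G(td), I(td,\cdot)\bigr)$, and observe that conditioned on the rest of $A$ this array is distributed as $\PDist_{\SDist,L}^{2^n}$ for an appropriate $\SDist$ (the distribution of the concatenated, unencoded bits of a random public key together with the corresponding random inversion table), with $L = 2^{15n}$.

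With that framing in place, the argument runs as follows. Let $\mathcal A$ be a polynomial-time quantum distinguisher for the public keys. Since $\mathcal A$ is given only $pk$ as a classical string (plus $1^n$ and oracle access to $\mathcal O[A]$), I can apply \Cref{lem:pseudorandom_image} with the role of ``$z$'' played by the tuple $(pk^\ast, \text{inversion table of } F(pk^\ast,\cdot))$ — or more precisely, since \Cref{lem:pseudorandom_image} hands the adversary \emph{all} of $z$, I can afford to give $\mathcal A$ strictly less, namely just $pk^\ast$. The lemma then says that for any $\poly(n)$-query algorithm, the expected (over $A \sim \mathcal E_{TD}$ and over the resampled row) distinguishing advantage between $\mathcal O[A]$ and $\mathcal O[A_{td \mapsto w}]$ is $\negl(n)$. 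Unwinding: under $\mathcal O[A]$, the marginal on $pk$ is uniform (since $G$ on a random input is a uniform string and, crucially, this is the distribution before we ``plant'' the $\Gen$ public key); under $\mathcal O[A_{td\mapsto w}]$ with the row index revealed, the pair $(pk^\ast$, oracle$)$ is exactly distributed as $\Gen(1^n)$'s output together with the induced oracle. Hence the public keys produced by $\Gen$ are indistinguishable from uniform in expectation over $A$. Finally, I would promote the expectation-over-$A$ statement to a probability-$1$ statement over $A \sim \mathcal E_{TD}$ by the same Markov-plus-Borel--Cantelli-plus-countable-union argument used in the proof of \Cref{thm:main_prf}: the advantage for each fixed adversary is negligible with probability $1$, and there are only countably many uniform polynomial-time adversaries.

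There is one subtlety I need to handle carefully, and I expect it to be the main obstacle: unlike the PRF case, resampling the $td$-th row of $(G,I)$ is not a ``cosmetic'' change to an otherwise unconstrained random region — the blocks of $\overline I$ are \emph{functionally tied} to $F$ and $G$ (each $I(td,\cdot)$ is literally $F_{G(td)}^{-1}$), so I must verify that the conditional distribution of the $(G,I)$-array, given the rest of the oracle (in particular given all of $\overline F$), genuinely is a product distribution $\PDist_{\SDist,L}^{2^n}$ across rows. This holds because $G$ is a random function — the values $G(td)$ for distinct $td$ are independent — and each $I(td,\cdot)$ is a deterministic function of $G(td)$ and of $F$, which is fixed by the conditioning; so the rows are indeed i.i.d.\ given $F$, which is precisely what \Cref{lem:pseudorandom_image} requires (it allows $\SDist$ to depend arbitrarily on the portion of $A$ outside the $2^n \times NL$ region). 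I would also note that the resampled block $w$ must be drawn from $\PDist_{z,L}$ for the appropriate $z$ (the unencoded bits of $(pk^\ast, F_{pk^\ast}^{-1})$), which again matches the hypothesis of \Cref{lem:pseudorandom_image}. Beyond that, the remaining steps are routine: bookkeeping the lengths ($\lambda = 3n$, $L = 2^{15n}$, $N = 2^n$, so the condition $N^2 \cdot 2^{2n} \le L$ is satisfied with room to spare), and invoking \Cref{fact:fake-pk-conversion} is deferred to the later combination step, not needed here.
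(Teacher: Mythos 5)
Your proposal matches the paper's proof essentially step for step: the same resampling of the $td$-th row of the $(\overline G,\overline I)$ region to a fresh uniform $pk^*$ together with its induced inversion table, the same invocation of \Cref{lem:pseudorandom_image} with $\SDist$ the distribution of $(pk^*, F_{pk^*}^{-1})$ (and the same observation that handing the adversary less than all of $z$ only helps), and the same absolute-value/Markov/Borel--Cantelli wrap-up via the \Cref{cor:security_with_abs} and \Cref{thm:main_prf} machinery. One bookkeeping slip: the parameter $N$ in \Cref{lem:pseudorandom_image} is the number of \emph{unencoded} bits per row, i.e.\ $N=\lambda+2^m(\ell+1)=3n+2^{6n}(n+1)\le 2^{6n+o(n)}$ rather than $2^n$ (which is the number of rows $K$), though the condition $N^2\cdot 2^{2n}\le L=2^{15n}$ still holds with this correct value.
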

\begin{proof}
    In this proof, we will work directly with $(G, F, I)$ instead of $(\Gen, \Eval, \Inv)$, because the Forrelation-decoding algorithm in \Cref{prop:towf-correctness} correctly evaluates $(G, F, I)$ pseudo-deterministically on all but finitely many inputs.
    Since we only care about whether the adversary's advantage is negligible (a tail event), a finite change to the construction does not impact security.

    Fix any quantum polynomial time oracular adversary $\mathcal A$.
    To prove the lemma, it suffices to bound
    \[
    \E_{A \sim \mathcal{E}_{TD}}\abs{\Pr_{td \sim \{0,1\}^{n}}\mbracket{\mathcal A^{\mathcal O[A]}(G(td)) = 1} - \Pr_{pk^* \sim \{0,1\}^{\lambda}}\mbracket{\mathcal A^{\mathcal O[A]}(pk^*) = 1}}
    \]
    with a negligible quantity for uniformly random $td, pk^*$ by following the same argument as found in the proof of \Cref{thm:main_prf}.
    (We drop $1^n$ as an input for notational simplicity, because public keys are $m > n$ bits long.)
    Using the same trick as used in the proof of \Cref{cor:security_with_abs}, it suffices to instead bound
    \[
    \abs{\Pr_{A, td}\mbracket{\mathcal A^{\mathcal O[A]}(G(td)) = 1} - \Pr_{A, pk^*}\mbracket{\mathcal A^{\mathcal O[A]}(pk^*) = 1}}.
    \]

    We first claim that this is negligible even if we only count the queries that $\mathcal A$ makes to $\overline{G}$ and $\overline{I}$ on length $n$.
    That is, $\mathcal A$ is allowed to know the entire rest of the oracle, and even the unencoded $F$.
    As a special case, we may assume that the adversary has access to the function $F_{pk}^{-1}$ that inverts $F$ for the given key $pk$, since $F_{pk}^{-1}$ is already determined by $F$ and $pk$.\footnote{Looking ahead slightly, $(pk, F_{pk}^{-1})$ will play the role of $z$ in \Cref{lem:pseudorandom_image}.}
    This means that we consider
    \[
    \abs{\Pr_{A, td}\mbracket{\mathcal A^{\mathcal O[A]}(G(td), F_{G(td)}^{-1}) = 1} - \Pr_{A, pk^*}\mbracket{\mathcal A^{\mathcal O[A]}(pk^*, F_{pk^*}^{-1}) = 1}}.
    \]
    This quantity can only be larger since now $\mathcal A$ gets more information about the input.
    
    We next view the sampling of $(G, I)$ equivalently as follows: for each $td$, sample a random $pk$ and set $G(td) = pk$ and $I(td, \cdot) = F_{pk^*}^{-1}$.
    From this view, we claim that
    \[
    \Pr_{A, td}\mbracket{\mathcal A^{\mathcal O[A]}(G(td), F_{G(td)}^{-1}) = 1} = \Pr_{A, pk^*, td, A^*}\mbracket{\mathcal A^{\mathcal O[A^*]}(pk^*, F_{pk^*}^{-1}) = 1}
    \]
    where $A^*$ is the same as $A$ but we set $G(td) = pk^*$ and $I(td, \cdot) = F_{pk^*}^{-1}$, and then we resample from the corresponding patterned Forrelation distribution for $\overline G(td, \cdot)$ and $\overline I(td, \cdot)$.
    In light of this, we can bound:
    \begin{align*}
        &\equad \abs{\Pr_{A, td}\mbracket{\mathcal A^{\mathcal O[A]}(G(td), F_{G(td)}^{-1}) = 1} - \Pr_{A, pk^*}\mbracket{\mathcal A^{\mathcal O[A]}(pk^*, F_{pk^*}^{-1}) = 1}} \\
        &= \abs{\Pr_{A,pk^*, td, A^*}\mbracket{\mathcal A^{\mathcal O[A^*]}(pk^*, F_{pk^*}^{-1}) = 1} - \Pr_{A, pk^*}\mbracket{\mathcal A^{\mathcal O[A]}(pk^*, F_{pk^*}^{-1}) = 1}} \\
        &= \abs{\E_{A, pk^*}\mbracket{\Pr_{td,A^*}\mbracket{\mathcal A^{\mathcal O[A^*]}(pk^*, F_{pk^*}^{-1}) = 1} - \Pr\mbracket{\mathcal A^{\mathcal O[A]}(pk^*, F_{pk^*}^{-1}) = 1}}} \\
        &\le \E_{A, pk^*}\abs{\Pr_{td,A^*}\mbracket{\mathcal A^{\mathcal O[A^*]}(pk^*, F_{pk^*}^{-1}) = 1} - \Pr\mbracket{\mathcal A^{\mathcal O[A]}(pk^*, F_{pk^*}^{-1}) = 1}},
    \end{align*}
    where the last inequality is Jensen's.
    We conclude that this quantity is negligible by invoking \Cref{lem:pseudorandom_image}, setting $\SDist$ to be the distribution over $(pk^*, F_{pk^*}^{-1})$ for a uniformly random $pk^*$, $N = \lambda + 2^m(\ell + 1) = 3n + 2^{6n}(n + 1) \le 2^{6n + o(n)}$, and $L = 2^{15n}$.
\end{proof}

\begin{lemma}
\label{prop:oneway-fake-pk}
    $F$ is one-way with a uniformly random $pk$ (see \eqref{eq:trapdoor-oneway-fake}), with probability $1$ over $A \sim \mathcal{E}_{TD}$.
\end{lemma}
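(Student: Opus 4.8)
The plan is to mirror the argument for \Cref{thm:main_prf}/\Cref{cor:injective_qcowf}, using pseudorandomness of $F(pk,\cdot)$ as a random function to preclude inversion, but with an extra structural step to handle the inversion oracle $I$. Fix a polynomial-time quantum adversary $\mathcal A$ and let $T = \poly(n)$ bound both its query count and query length. By a Borel--Cantelli/union-bound argument identical to the end of \Cref{thm:main_prf}, it suffices to bound the \emph{expected} success probability
\[
\E_{A \sim \mathcal E_{TD}}\ \Pr_{\substack{x \sim \{0,1\}^\ell \\ pk \sim \{0,1\}^\lambda}}\mbracket{F\mparen{pk,\mathcal A^{\mathcal O[A]}(1^n,pk,F(pk,x))} = F(pk,x)}
\]
by a negligible function; raising this to probability-$1$ security against all uniform adversaries follows as before.

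First I would argue that the inversion oracle $I$ can be made useless to $\mathcal A$ with overwhelming probability. The issue is that $I(td,\cdot)$ for the unique (w.h.p.) $td$ with $G(td)=pk$ is exactly $F_{pk}^{-1}$, which would trivially break one-wayness. The key point is that $G : \{0,1\}^n \to \{0,1\}^{3n}$ is length-tripling, so by a union bound over the $2^{2n}$ pairs, $\Pr_{G}[\exists\, td \neq td' : G(td)=G(td')] \le 2^{-n}$; thus $G$ is injective with overwhelming probability. Conditioned on injectivity, for a \emph{uniformly random} $pk \sim \{0,1\}^\lambda$, the probability that $pk$ lies in the image of $G$ is at most $2^n \cdot 2^{-3n} = 2^{-2n}$. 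When $pk \notin \mathrm{im}(G)$, no trapdoor inverts $F(pk,\cdot)$: every entry $I(td,\cdot) = F_{G(td)}^{-1}$ with $G(td) \neq pk$, which (since $F(pk,\cdot)$ is injective w.h.p. by \Cref{claim:f_injective}, and distinct public keys give independent random functions) reveals nothing about $F_{pk}^{-1}$ beyond what is already in the rest of $A$. So up to a $2^{-2n} + 2^{-n}$ additive loss, we may condition on this event and treat the portion of $A$ encoding $F(pk,\cdot)$ as an independent uniformly random function of $\ell = n$ input bits into $6n$ output bits, with the rest of the oracle $(G, I, F|_{\neq pk})$ fixed arbitrarily.

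Next, with $F(pk,\cdot)$ now an independent random function independent of everything the adversary ``knows structurally,'' I would apply \Cref{lem:pseudorandom_image} in the style of \Cref{thm:security_no_abs}: one resamples the Forrelation-encoded region corresponding to $F(pk,\cdot)$ (a $2^\ell \times (6n)L$-sized block structure over $K = 2^\ell$ rows, with $L = 2^{15n}$) so that the challenge image $y = F(pk,x)$ is no longer in the image of the resampled $F(pk,\cdot)$, making \emph{any} preimage nonexistent and hence inversion impossible. The adversary cannot detect this resampling against $\mathsf{BQP^{PH}}$, because by \Cref{lem:PH_to_AC0} each oracle bit $\mathcal O[A](v)$ is a quasipolynomial-size $\mathsf{AC^0}$ circuit in $A_{\le T}$ and $B_{\le n}$, and $B_{\le n}$ does not depend on the region being resampled; the block-resampling indistinguishability then follows from \Cref{lem:patterned_forrelation_block_indistinguishable} combined with the quantitative BBBV bound \Cref{lem:average_case_bbbv}, exactly as packaged in \Cref{lem:pseudorandom_image}. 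Concretely: after resampling, $\Pr[F(pk,\mathcal A(\cdots)) = y] \le \Pr[y \in \mathrm{im}(\text{new }F(pk,\cdot))] \le 2^\ell \cdot 2^{-6n} = 2^{-4n}$ (the new image misses $y$ except with this probability, since $y$ is a fixed $6n$-bit string and the new $F(pk,\cdot)$ has only $2^n$ entries), and the resampling itself changes the acceptance probability by only $\negl(n)$.

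The main obstacle I anticipate is \emph{step two}: cleanly arguing that conditioning on $pk \notin \mathrm{im}(G)$ really does make $F(pk,\cdot)$ an independent uniform random function \emph{in the precise form required by the hypothesis of \Cref{lem:pseudorandom_image}} — namely that, conditioned on the rest of $A$, the relevant $2^\ell \times (6n)L$ region is distributed as $\PDist_{\SDist,L}^{2^\ell}$ for some $\SDist$. The subtlety is that $\SDist$ here must be allowed to depend on the rest of the oracle (including $I$ and $G$), and one must check that the dependence introduced by $I$-entries pointing at \emph{other} public keys does not leak into the $F(pk,\cdot)$ region; this is where injectivity of $G$ and of $F(pk',\cdot)$ for $pk'\neq pk$, plus the independence of distinct random functions, is doing the real work. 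Carefully setting up the conditioning so that \Cref{lem:pseudorandom_image}'s ``region identically distributed as $\PDist_{\SDist,L}^{2^n}$ conditioned on the rest of the oracle'' hypothesis is literally satisfied — and then combining with \Cref{fact:fake-pk-conversion} and \Cref{prop:pseudorandom-pk} to conclude actual one-wayness — is the step that needs the most care, though it is conceptually routine given the machinery already assembled.
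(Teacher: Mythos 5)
Your proposal follows essentially the same route as the paper's proof: condition on the uniformly random $pk$ falling outside the image of $G$ (a $2^{-2n}$-probability event to exclude, which renders $I$ useless and makes $F(pk,\cdot)$ an independent random function), then use \Cref{lem:pseudorandom_image} to indistinguishably resample the Forrelation block at the challenge input so that the target $y$ is a fresh uniform string outside the image of $F(pk,\cdot)$, and finish with the Borel--Cantelli argument. The only blemishes are cosmetic: the injectivity of $G$ and of $F(pk',\cdot)$ for $pk'\neq pk$ is not actually needed for this step, and the final bound should be $2^{\ell}\cdot 2^{-6n}=2^{-5n}$ rather than $2^{-4n}$.
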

\newcommand{\Hybrid}{\mathsf{Hyb}}
\begin{proof}
Fix a quantum polynomial-time oracular adversary $\mathcal A$.
We construct a polynomial-time checker algorithm $\mathcal C^{\mathcal O[A]}(pk, y)$ to output $1$ if and only if $F(pk, \mathcal A^{\mathcal O[A]}(pk, y)) = y$ (again we drop the input $1^n$ as it is unnecessary).
Note that this algorithm only needs to run $\mathcal A$ once and then spend $\poly(n)$ additional time and queries to $\mathcal O[A]$ to evaluate $F$.
Then following the same argument as found in the proof of \Cref{thm:main_prf}, it suffices to bound $\Pr_{A, pk, x}\mbracket{\mathcal C^{\mathcal O[A]}(pk, F(pk, x)) = 1}$ for uniformly random $pk, x$.

Given the truth table of $G$ for $A$, let $pk^*$ be uniformly sampled from $\{0, 1\}^\lambda$ conditioned on the event that for all $pk$, $G(td) \neq pk^*$, in other words, it is uniformly sampled from $\{0, 1\}^\lambda \setminus G(\{0, 1\}^n)$.
Since the fraction of the image of $G$ is at most $2^n/2^\lambda = 2^{-2n}$, the statistical distance between $pk$ and $pk^*$ is $2^{-\Omega(n)}$.
Thus
\begin{equation}
    \label{eq:always-bad-pk}
    \abs{\Pr_{A, pk, x}\mbracket{\mathcal C^{\mathcal O[A]}(pk, F(pk, x)) = 1} - \Pr_{A, pk^*, x}\mbracket{\mathcal C^{\mathcal O[A]}(pk^*, F(pk^*, x)) = 1}} \le 2^{-\Omega(n)},
\end{equation}
meaning that it suffices to consider $pk^*$ instead.

We next view the sampling of $F$ equivalently as follows: for each $pk$ and $x$, sample a random $y$ and set $F(pk,x) = y$.
From this view, we claim that
\[
\Pr_{A, pk^*,x}\mbracket{\mathcal C^{\mathcal O[A]}(pk^*, F(pk^*,x)) = 1} = \Pr_{A, pk^*, y^*, x, A^*}\mbracket{\mathcal C^{\mathcal O[A^*]}(pk^*, y^*) = 1}
\]
where $A^*$ is the same as $A$ but we set $F(pk^*,x) = y^*$, and then we sample from the corresponding patterned Forrelation distribution for $\overline F(pk^*,x, \cdot)$.
In light of this, we can bound:
\begin{align}
&\equad \abs{\Pr_{A, pk^*, x}\mbracket{\mathcal C^{\mathcal O[A]}(pk^*, F(pk^*, x)) = 1} - \Pr_{A,pk^*, y^*}\mbracket{\mathcal C^{\mathcal O[A]}(pk^*, y^*) = 1}} \nonumber\\
&= \abs{\Pr_{A, pk^*, y^*, x, A^*}\mbracket{\mathcal C^{\mathcal O[A^*]}(pk^*, y^*) = 1} - \Pr_{A,pk^*, y^*}\mbracket{\mathcal C^{\mathcal O[A]}(pk^*, y^*) = 1}} \nonumber\\
&= \abs{\E_{A, pk^*, y^*}\mbracket{\Pr_{x,A^*}\mbracket{\mathcal C^{\mathcal O[A^*]}(pk^*, y^*) = 1} - \Pr\mbracket{\mathcal C^{\mathcal O[A]}(pk^*, y^*) = 1}}} \nonumber\\
&\le \E_{A, pk^*, y^*}\abs{
    \Pr_{x,A^*}\mbracket{\mathcal C^{\mathcal O[A^*]}(pk^*, y^*) = 1} - \Pr\mbracket{\mathcal C^{\mathcal O[A]}(pk^*, y^*) = 1}
}\nonumber\\
&= \negl(n).
\label{eq:pseudorandom-image-towf}
\end{align}
Above, the penultimate inequality is Jensen's, and the last inequality is via \Cref{lem:pseudorandom_image}, considering $z = y^*$, $N = m = 6n$ and $L = 2^{15n}$.

Finally, the probability that a random $y^*$ is in the image of $F(pk^*, \cdot)$ is at most $2^n/2^m = 2^{-5n}$, thus
\begin{equation}
    \label{eq:random-image-uninvertible}
    \Pr_{pk^*,y^*}\mbracket{\mathcal C^{\mathcal O[A]}(pk^*, y^*) = 1} \le 2^{-5n}.
\end{equation}
Gathering \eqref{eq:always-bad-pk}, \eqref{eq:pseudorandom-image-towf}, and \eqref{eq:random-image-uninvertible}, we conclude that the probability that $\Pr_{A, pk, x}\mbracket{\mathcal C^{\mathcal O[A]}(pk, F(pk, x)) = 1}$, or equivalently that $\mathcal A$ inverts, is negligible by triangle inequality.
\end{proof}

\begin{theorem}
    \label{thm:main_towf}
    With probability $1$ over $A \sim \mathcal{E}_{TD}$, there exists a quantum-computable trapdoor one-way function with pseudorandom public keys, relative to $\mathcal{O}[A]$.
\end{theorem}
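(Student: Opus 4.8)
The plan is to glue together the three probability-$1$ statements already proved: \Cref{prop:towf-correctness} (input-output behavior, efficient computability, trapdoor correctness), \Cref{prop:pseudorandom-pk} (pseudorandom public keys), and \Cref{prop:oneway-fake-pk} (one-wayness under a uniformly random public key, i.e.\ \eqref{eq:trapdoor-oneway-fake}). First I would take the intersection of the three measure-one events these lemmas furnish. A finite intersection of probability-$1$ events over $A \sim \mathcal{E}_{TD}$ again has probability $1$, so for almost every $A$ all three conclusions hold simultaneously; fix such an $A$ for the remainder of the argument. Note that \Cref{prop:pseudorandom-pk} and \Cref{prop:oneway-fake-pk} are phrased as properties of the \emph{same} triple $(\Gen,\Eval,\Inv)$ produced in \Cref{prop:towf-correctness} (with $\lambda = 3n$, $\ell = n$, $m = 6n$, and $F$ the function computed by $\Eval$), so no further reconciliation of finitely-many-input modifications is needed.

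Given this fixed $A$, the triple $(\Gen,\Eval,\Inv)$ satisfies conditions (i)--(iii) of \Cref{def:trapdoor_function} by \Cref{prop:towf-correctness}, and the construction has pseudorandom public keys in the sense of \Cref{def:towf-pseudorandom-pk} by \Cref{prop:pseudorandom-pk}. It remains to obtain the standard one-wayness condition \eqref{eq:trapdoor-oneway} of \Cref{def:trapdoor_function}\,(iv). For this I would invoke \Cref{fact:fake-pk-conversion}: because the construction has pseudorandom public keys, one-wayness \eqref{eq:trapdoor-oneway} is equivalent to one-wayness under a uniformly random public key \eqref{eq:trapdoor-oneway-fake}, and the latter holds on our fixed $A$ by \Cref{prop:oneway-fake-pk}. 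Hence all four conditions of \Cref{def:trapdoor_function} hold together with \Cref{def:towf-pseudorandom-pk}, so $F$ (computed by $\Eval$), equipped with $(\Gen,\Eval,\Inv)$, is a quantum-computable trapdoor one-way function with pseudorandom public keys relative to $\mathcal{O}[A]$, which is exactly the claim.

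The argument is essentially bookkeeping, so the only points requiring care are the ones I would flag explicitly in the write-up. First, \Cref{fact:fake-pk-conversion} is a fully black-box reduction, and here it must be run in the relativized world: the reduction $\mathcal{B}$ must evaluate $F(pk,\cdot)$, which it does pseudo-deterministically via $\Eval$ using queries to $\mathcal{O}[A]$, and it must break pseudorandomness of public keys relative to $\mathcal{O}[A]$; one should check that the adversary class---uniform polynomial-time quantum algorithms with oracle access to $\mathcal{O}[A]$---is the same across all four invoked statements, which it is, since each of \Cref{prop:pseudorandom-pk} and \Cref{prop:oneway-fake-pk} already carries the implicit Markov-plus-Borel--Cantelli-plus-union-bound-over-adversaries step (as in the proof of \Cref{thm:main_prf}) needed to upgrade an averaged advantage bound to a probability-$1$ security statement. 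Second, the pseudo-deterministic evaluation only succeeds on all but finitely many input lengths, but since security is a tail property, discarding finitely many input lengths does not affect the conclusion. I do not anticipate any genuine obstacle at this stage: the substantive work lies in the three preceding lemmas, and ultimately in the resampling lemma \Cref{lem:pseudorandom_image} and the distributional block-insensitivity lemma \Cref{lem:ac0_dist_block_sensitivity} underlying them.
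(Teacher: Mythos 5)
Your proposal matches the paper's proof exactly: the paper also combines \Cref{prop:towf-correctness}, \Cref{prop:pseudorandom-pk}, \Cref{prop:oneway-fake-pk}, and \Cref{fact:fake-pk-conversion} in precisely this way. The additional bookkeeping you spell out (intersecting the measure-one events, checking the adversary class, and discarding finitely many input lengths) is left implicit in the paper but is the intended reasoning.
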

\begin{proof}
    \Cref{prop:towf-correctness} shows that $(\Gen, \Eval, \Inv)$ satisfies all the properties except one-wayness and pseudorandomness of the public keys.
    The latter is shown in \Cref{prop:pseudorandom-pk} and the former is shown by combining \Cref{prop:pseudorandom-pk,prop:oneway-fake-pk,fact:fake-pk-conversion}.
\end{proof}

\section{Classical-Communication Quantum Cryptomania in Algorithmica}

In this section, we show that relative to the same oracle, we in fact also have classical-communication Cryptomania
while classical computers live in Algorithmica.
Recall that Impagliazzo~\cite{Imp95-average} calls Algorithmica the world where ``$\mathsf{P} = \mathsf{NP}$'' and Cryptomania where ``public-key cryptography is possible''.
While Impagliazzo did not give a formal definition of public-key cryptography, a few concrete primitives were mentioned.
Among them, the strongest two are:
\begin{enumerate}
    \item (Classical-communication) ``secret-key exchange using trap-door one-way functions'': ``by means of conversations in class, Grouse and his pet student would be able to jointly choose a problem that they would both know the answer to, but which Gauss could not solve'';
    \item Secure multiparty computations: ``Any group of people can agree to jointly compute an arbitrary function of secret inputs without compromising their secrets.''
\end{enumerate}

For the rest of this section, we show that there is a \emph{post-quantum fully-black-box reduction} from these (classical-communication) primitives to trapdoor one-way functions with pseudorandom public keys.
We recall that a reduction is fully-black-box \cite{RTV04-reduction} if both the construction and the adversary are used as a ``black-box'' subroutine.\footnote{Strictly speaking, we only need a weaker notion of reductions called (post-quantum) relativizing reductions \cite[Definition 2.7]{RTV04-reduction}.}
We say a (fully-black-box) reduction is post-quantum if the reduction considers classical-input classical-output primitives and works for quantum adversaries potentially having quantum auxiliary inputs.
As a consequence, any post-quantum fully-black-box reduction trivially extends if the primitive is quantum-computable instead.

We begin by constructing semantically secure (classical-communication) public-key encryption schemes, which implies secret-key exchange by using it to encrypt the secret key.
\newcommand{\Enc}{\mathsf{Enc}}
\newcommand{\Dec}{\mathsf{Dec}}

\begin{definition}
    A public-key encryption scheme is a tuple of algorithms $(\Gen, \Enc, \Dec)$ satisfying the following conditions:
    \begin{enumerate}[(i)]
        \item (Input-output behavior) $\Gen(1^n)$ outputs a pair $(pk, sk) \in \{0, 1\}^\lambda \times \{0, 1\}^n$; $\Enc(pk, b)$ encrypts a bit $b$, whose output is a ciphertext $c$; and finally, $\Dec(sk, c)$ outputs the decrypted bit.
        \item (Correctness) For any $b = 0, 1$,
        \[\Pr_{(pk, sk) \gets \Gen(1^n)}\mbracket{\Dec(sk, \Enc(pk, b)) = b} = 1 - \negl(n).\]
        \item (Semantic security) For $(pk, sk) \gets \Gen(1^n)$, $(pk, \Enc(pk, 0))$ is computationally indistinguishable from $(pk, \Enc(pk, 1))$.
    \end{enumerate}
\end{definition}

\begin{fact}
    \label{fact:pke-from-towf}
    There exists a post-quantum fully-black-box reduction from semantically secure public-key encryption schemes to trapdoor one-way functions.
\end{fact}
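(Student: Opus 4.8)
The plan is to give the standard construction of a public-key encryption scheme from a trapdoor one-way function equipped with the Goldreich--Levin hardcore predicate \cite{GL89-hard-core}, and then verify that every step relativizes and is black-box in the sense of \cite{RTV04-reduction}, so that the reduction is post-quantum and---by the principle already observed, that a post-quantum fully-black-box reduction carries over when the primitive is quantum-computable---applies to quantum-computable trapdoor one-way functions as well.

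Concretely, I would start from a trapdoor one-way function $(\Gen, \Eval, \Inv)$ with parameters $\lambda, \ell, m$ as in \Cref{def:trapdoor_function}, first amplifying the success probabilities of $\Eval$ and $\Inv$ to $1 - 2^{-n}$ by majority voting (as in \Cref{prop:towf-correctness}). I would then define the scheme by $\Gen'(1^n) := \Gen(1^n)$, outputting $(pk, td)$; by letting $\Enc(pk, b)$ sample $x, r \sim \{0,1\}^\ell$ and output the ciphertext $(\Eval(pk, x),\, r,\, \langle x, r\rangle \oplus b)$, where $\langle x, r\rangle := \bigoplus_{i \in [\ell]} x_i r_i$; and by letting $\Dec(td, (y, r, z))$ output $\langle \Inv(td, y), r\rangle \oplus z$. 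Correctness then follows directly from trapdoor correctness of $F$: with probability $1 - \negl(n)$ over $\Gen$ and the (amplified) internal randomness one has $\Inv(td, \Eval(pk, x)) = x$, so $\Dec$ recovers $b$; note that injectivity of $F(pk,\cdot)$ is not needed, only approximate trapdoor correctness.

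For semantic security, the observation is that $(pk, \Enc(pk, 0))$ and $(pk, \Enc(pk, 1))$ are computationally indistinguishable exactly when $\langle x, r\rangle$ is computationally indistinguishable from a uniform bit given $(pk, \Eval(pk, x), r)$ for uniform $x, r$---the hardcore-predicate property. I would then invoke the Goldreich--Levin list-decoding argument, which converts any efficient predictor of $\langle x, r\rangle$ with non-negligible advantage into an algorithm that, on input $(pk, \Eval(pk, x))$, outputs a preimage of $\Eval(pk, x)$ under $\Eval(pk, \cdot)$ with non-negligible probability, contradicting one-wayness of $F$. This argument uses the predictor only as a black box and its analysis relativizes, so it goes through when the predictor is a quantum algorithm with quantum auxiliary input (alternatively, one can cite the quantum version of the Goldreich--Levin theorem). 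Since the construction likewise invokes $\Gen, \Eval, \Inv$ only as subroutines, the resulting reduction is fully black-box and post-quantum, and plugging in the quantum-computable trapdoor one-way function of \Cref{thm:main_towf} yields a quantum-computable semantically secure public-key encryption scheme relative to $\mathcal{O}[A]$.

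The only genuinely delicate ingredient is the post-quantum soundness of Goldreich--Levin; since this is by now standard, I expect no real obstacle there. The remaining care is bookkeeping: amplifying $\Eval$ and $\Inv$ so that decryption errors (and errors of the one-wayness \emph{checker} that recomputes $F$ from $\Eval$) stay negligible, and confirming that the construction and the security reduction touch the primitive and the adversary only as black boxes, which they plainly do.
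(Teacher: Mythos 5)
Your proposal is correct and matches the paper's proof, which gives exactly the same hardcore-predicate construction ($\Enc(pk,b)=(\Eval(pk,x),r,\langle x,r\rangle\oplus b)$ with trapdoor decryption) and cites the post-quantum Goldreich--Levin theorem of Adcock--Cleve for security. The paper states this in two lines; your additional bookkeeping about amplification and black-box use of the adversary is consistent with, and just a more explicit version of, the same argument.
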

\begin{proof}
    Consider the standard reduction through hardcore predicates: $\Gen$ is identical except relabeling $td$ as $sk$, $\Enc(pk, m) = (\Eval(pk, x), r, x \cdot r \oplus m)$ for random bitstrings $x, r$, and $\Dec(sk, (y, r, c)) = \Inv(sk, y) \cdot r \oplus c$.
    This is known to be post-quantum \cite{AC01-quantumgl}.
\end{proof}

\begin{fact}
    There exists a post-quantum fully-black-box reduction from secret-key exchange to semantically secure public-key encryption schemes.
\end{fact}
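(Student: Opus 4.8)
The plan is to instantiate the textbook construction of key agreement from public-key encryption and to check that its analysis both relativizes and survives quantum adversaries with quantum advice. Concretely: Alice runs $\Gen(1^n)$ to obtain $(pk, sk)$ and sends $pk$ over the classical channel; Bob samples a uniformly random $s \in \{0,1\}^n$, encrypts it bit-by-bit as $c_i \gets \Enc(pk, s_i)$, and sends $c = (c_1, \dots, c_n)$; Alice recovers $s$ by decrypting each $c_i$ with $sk$. Correctness of the PKE together with a union bound over the $n$ bits shows the two parties agree on $s$ except with negligible probability, and the protocol transcript is $\tau = (pk, c)$, which is entirely classical.

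For security I would show that for every polynomial-time quantum eavesdropper $\mathcal{E}$ (possibly holding quantum advice), $(\tau, s)$ is computationally indistinguishable from $(\tau, s')$ for an independent uniform $s'$. This follows from a standard hybrid argument: for $j = 0, 1, \dots, n$ let $H_j$ be the experiment in which the first $j$ bits of $s$ are encrypted honestly and the remaining $n - j$ bits are replaced by encryptions of $0$, while the register holding the value the eavesdropper must predict is kept fixed to the real $s$ across all hybrids. Adjacent hybrids $H_{j-1}$ and $H_j$ differ only in the encryption of a single bit under $pk$, so any distinguisher between them yields, by a black-box invocation of $\mathcal{E}$, a distinguisher for $(pk, \Enc(pk, 0))$ versus $(pk, \Enc(pk, 1))$, contradicting semantic security. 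Chaining the hybrids shows $(\tau, s)$ is indistinguishable from $(pk, \Enc(pk, 0^n), s)$, which is trivially indistinguishable from $(pk, \Enc(pk, 0^n), s')$, and then the same chain run in reverse gives indistinguishability from $(\tau, s')$; hence the agreed key is pseudorandom given the transcript.

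Since the construction invokes $(\Gen, \Enc, \Dec)$ only as oracles and the reduction obtains its semantic-security adversary by running $\mathcal{E}$ as a black box, the reduction is fully black-box in the sense of \cite{RTV04-reduction}; because the analysis is nothing more than a polynomial-length hybrid that goes through verbatim when $\mathcal{E}$ is a quantum algorithm with quantum advice, it is also post-quantum. Composing with \Cref{fact:pke-from-towf} then gives secret-key exchange from trapdoor one-way functions (with pseudorandom public keys) relative to our oracle. The only point meriting care — and thus the main obstacle, though a minor one — is the bookkeeping in the hybrid: one must ensure that replacing encrypted plaintext bits never touches the register the eavesdropper is asked to predict, which is why $s$ itself is held fixed while only the plaintexts fed to $\Enc$ are modified.
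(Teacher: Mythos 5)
Your proposal is correct and matches the paper's proof: the same protocol (Alice sends $pk$, Bob encrypts the random key bit-by-bit) with security via the same hybrid argument reducing the eavesdropper's view to $(pk, \Enc(pk, 0^n))$, which is independent of the exchanged key. The extra bookkeeping you flag about holding $s$ fixed while varying only the encrypted plaintexts is standard and implicit in the paper's one-line argument.
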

\begin{proof}
    The standard reduction simply asks Alice to send her public key and Bob to use the public key to encrypt random bits for $n$ times.
    Completeness is satisfied since Alice can decrypt by correctness of the encryption scheme.
    By semantic security and hybrid argument, the eavesdropper's view is indistinguishable from $(pk, \Enc(pk, 0^n))$, which is independent of the key exchanged.
\end{proof}

We now turn to classical-communication secure multiparty computations.
Since secure multiparty computations for any classical functionality (in the dishonest majority setting) can be built by calling an entire oblivious transfer (OT) protocol as a black-box \cite{Kil88-ot}, it suffices to simply build OT instead.

Our construction of OT works in two steps as is standard.
We first construct semi-honest oblivious transfer and then upgrade it to malicious security.

\begin{definition}
    A semi-honest oblivious transfer (OT) protocol is a two-party protocol involving a sender and a receiver.
    The sender chooses two input bits $x_0, x_1$; the receiver chooses a choice bit $y$ and receives a single-bit output $z$ from running the protocol.
    Let each party's view be a tuple of its input, its randomness, and the protocol transcript.
    We require the following:
    \begin{enumerate}[(i)]
        \item (Completeness) For any $x_0, x_1, y$, $\Pr[z = x_y]$ is negligibly close to 1.
        \item (Pseudodeterminism) For each party, given its input and randomness and the transcript so far, there exists a unique $m$ that is going to be picked as its next message with probability negligibly close to 1.\footnote{Classically, this is trivially satisfied with probability $1$. However, for a quantum-computed classical-communication oblivious transfer protocol, it is possible that a party may sample randomness using its own quantum computer.}
        \item (Security against semi-honest sender) For any $x_0, x_1$, the sender's view when $y = 0$ is computationally indistinguishable from that when $y = 1$.
        \item (Security against semi-honest receiver) For any $y, x_y$, the receiver's view when $x_{1 - y} = 0$ is computationally indistinguishable from that when $x_{1 - y} = 1$.
    \end{enumerate}
\end{definition}

While semi-honest security is subtler in the quantum setting (see e.g.\ \cite{BCQ23-efi}), as long as the protocol is pseudodeterministic as defined above, the same definition as classically suffices for our setting.
Note that pseudodeterminism is important since a post-quantum fully-black-box reduction is only black-box in terms of invoking the next-message functions as a black box.
It is possible that such reductions may not be black box in terms of functionality (like \cite{Kil88-ot}), meaning that it may not necessarily run and use the input-output behavior of the whole protocol as a black box.
For example, to upgrade a classical semi-honest protocol to malicious security, a coin-flipping protocol is usually needed to set each party's random coins.

On the other hand, since the error probability for pseudodeterminism is negligible, this does not affect any guarantee up to a small negligible loss.

We now construct semi-honest oblivious transfer in a standard way.
Specifically, the reduction shown in the work by Gertner, Kannan, Malkin, Reingold, and Viswanathan \cite{GKMRV00-pkeot} to \emph{fakeable} trapdoor one-way functions turns out to be post-quantum and fully black-box.
For completeness, we reproduce it below.

\begin{lemma}[{\cite[Proposition 12]{GKMRV00-pkeot}}]
    There exists a post-quantum fully-black-box reduction from semi-honest oblivious transfer protocols to trapdoor one-way functions with pseudorandom public keys.
\end{lemma}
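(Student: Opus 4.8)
The plan is to reproduce the construction of Gertner, Kannan, Malkin, Reingold, and Viswanathan~\cite{GKMRV00-pkeot}, instantiated with the trapdoor one-way function $(\Gen,\Eval,\Inv)$ with pseudorandom public keys from \Cref{def:trapdoor_function,def:towf-pseudorandom-pk}, together with a post-quantum hardcore predicate obtained via Goldreich--Levin~\cite{GL89-hard-core,AC01-quantumgl}. Concretely, to transfer a single bit, the receiver with choice bit $c$ runs $(pk_c, td) \gets \Gen(1^n)$, samples $pk_{1-c} \sim \{0,1\}^\lambda$ uniformly at random (the ``fake'' public key, for which the receiver holds no trapdoor), and sends $(pk_0, pk_1)$. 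The sender with inputs $x_0, x_1$ samples, for each $b \in \{0,1\}$, a uniform $r_b \in \{0,1\}^\ell$ and a uniform $s_b \in \{0,1\}^\ell$, and sends $\big(\Eval(pk_b, r_b),\, s_b,\, \langle r_b, s_b\rangle \oplus x_b\big)$. The receiver recovers $r_c = \Inv(td, \Eval(pk_c, r_c))$ by trapdoor correctness, computes $\langle r_c, s_c\rangle$, unmasks $x_c$, and outputs it. This is exactly the ``fakeable public-key'' structure that the technical overview anticipated.

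Completeness is then immediate from trapdoor correctness of $(\Gen,\Eval,\Inv)$, and pseudodeterminism holds because the two parties' next-message functions invoke $\Gen$, $\Eval$, and $\Inv$ only as subroutines followed by deterministic classical post-processing, so amplifying each invocation's error to $\negl(n)$ makes every message pseudo-deterministic. For security against a semi-honest sender, the sender's view differs between $c=0$ and $c=1$ only in which of $(pk_0,pk_1)$ is produced by $\Gen$ and which is uniform; a two-step hybrid through the ``both keys produced by $\Gen$'' distribution, each step reducing to \Cref{def:towf-pseudorandom-pk} (pseudorandomness of public keys), gives indistinguishability, and the reduction plainly uses the distinguisher as a black box.

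For security against a semi-honest receiver, the receiver's view consists of $c$, its trapdoor $td$ for $pk_c$, the uniformly sampled $pk_{1-c}$, and the sender's messages; the only part depending on $x_{1-c}$ is $\langle r_{1-c}, s_{1-c}\rangle \oplus x_{1-c}$, where $pk_{1-c}$ is uniform and independent of $td$. By \Cref{fact:fake-pk-conversion}, $\Eval(pk,\cdot)$ is one-way for uniformly random $pk$; invoking the post-quantum Goldreich--Levin reduction~\cite{AC01-quantumgl} in a black-box fashion, $\langle r_{1-c}, s_{1-c}\rangle$ is pseudorandom given $\big(pk_{1-c}, \Eval(pk_{1-c}, r_{1-c}), s_{1-c}\big)$, so the masked bit hides $x_{1-c}$ and the two receiver views are indistinguishable (one hybrid replacing the hardcore bit with a uniform bit). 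Every step is a fully-black-box use of $(\Gen,\Eval,\Inv)$ and of the OT adversary, and every underlying reduction (pseudorandom public keys, one-wayness under uniform public keys, post-quantum Goldreich--Levin) is post-quantum, so the composite reduction is post-quantum fully-black-box, and in particular survives replacing the trapdoor function with a quantum-computable one.

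I expect the main obstacle to be the bookkeeping around pseudodeterminism in the quantum-computable setting: one must check that replacing each classical call to $\Gen$/$\Eval$/$\Inv$ by a bounded-error quantum algorithm and then amplifying still yields well-defined next-message functions whose outputs are pseudo-deterministic with negligible error, and that this negligible slack propagates harmlessly through completeness and through both security hybrids. A secondary subtlety is ensuring the Goldreich--Levin step is used as a genuinely black-box post-quantum reduction (so it applies verbatim when $\Eval$ is only quantum-computable), rather than relying on any structural feature of $\Eval$.
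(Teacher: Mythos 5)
Your proposal is correct and follows essentially the same route as the paper: the paper first packages the Goldreich--Levin-based encryption $\big(\Eval(pk,r),\,s,\,\langle r,s\rangle\oplus m\big)$ as a standalone public-key encryption scheme (its \Cref{fact:pke-from-towf}) and then runs the identical GKMRV protocol (real key for the chosen slot, uniformly random ``fake'' key for the other), whereas you inline the encryption directly into the OT messages. The security arguments match as well---pseudorandom public keys for sender security, and \Cref{fact:fake-pk-conversion} plus post-quantum Goldreich--Levin for receiver security---so the difference is purely one of presentation.
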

\begin{proof}
    We first construct (classical-communication) public-key encryption $(\Gen, \Enc, \Dec)$ from trapdoor one-way functions with semantic security as in \Cref{fact:pke-from-towf}.
    Since $\Gen$ algorithms are identical, this public-key encryption scheme has pseudorandom public keys as well.
    The protocol is as follows:
    \begin{enumerate}
        \item 
          The receiver samples $(pk_y, td_y) \gets \Gen(1^n)$ and samples $pk_{1 - y}$ uniformly at random; and sends $pk_0, pk_1$ to the sender.
        \item The sender encrypts $c_b \gets \Enc(pk_b, x_b)$ for each $b = 0, 1$; and sends $c_0, c_1$.
        \item The receiver decrypts $\Dec(td_y, c_y)$.
    \end{enumerate}
    This protocol is pseudodeterministic since it is a classical protocol other than possibly the use of (quantum-computable) pseudodeterministic primitives.
    Its completeness follows from the correctness of the encryption scheme.
    Security against semi-honest sender holds by pseudorandomness of the public key.
    Finally, security against semi-honest receiver holds assuming semantic security under random public keys as $pk_{1 - y}$ is uniformly sampled; on the other hand, semantic security under random public keys holds by combining standard semantic security and pseudorandomness of the public key (see \Cref{fact:fake-pk-conversion}).
\end{proof}

Now we bootstrap this protocol into a maliciously secure one.
Note that the most standard approaches such as the GMW compiler \cite{GMW87-mpc} would not immediately work.
To see this, we re-emphasize the two requirements.
\begin{enumerate}
    \item The compiler needs to be ``black-box'' in the sense that it should be insensitive to the presence of a quantum-computable primitive.
        Compilers like \cite{GMW87-mpc} are not even relativizing since they use zero-knowledge protocols to prove statements about the underlying protocol.
        The underlying protocol here involves computing a quantum-computable cryptographic primitive, thus it appears that a zero-knowledge protocol for $\mathsf{QCMA}$ must be used instead.
        However, it is unknown how to construct a \emph{classical-communication} zero-knowledge protocol for $\mathsf{QCMA}$ from assumptions like trapdoor one-way functions or even oblivious transfer.
    \item The reduction must be post-quantum, i.e.\ it should remain secure even in the presence of a quantum attacker.
        For example, the fully black-box reduction from \cite{HIKLP11-bbmpc} heavily uses rewinding in the security proof.
\end{enumerate}

Therefore, we invoke a different reduction due to Chatterjee, Liang, Pandey, and Yamakawa \cite[Theorem 1]{CLPY24-pqmpc}, who showed that there exists a \emph{post-quantum fully black-box reduction} from oblivious transfer protocols to semi-honest oblivious transfer and one-way functions.
This work builds on an earlier work \cite{CCLY22-pqextraction} which achieves a similar result except the oblivious transfer protocol is only $\varepsilon$-simulation secure instead of negligibly secure.
Thus, we arrive at the following corollary.

\begin{corollary}
    There exists a post-quantum fully-black-box reduction from oblivious transfer protocols to trapdoor one-way functions with pseudorandom public keys.
\end{corollary}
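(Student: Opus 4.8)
The plan is to obtain the reduction by composing known post-quantum fully-black-box reductions and then using the fact that the composition of such reductions is again a post-quantum fully-black-box reduction. There are three ingredients. First, the preceding lemma (our reproduction of \cite[Proposition 12]{GKMRV00-pkeot}) already gives a post-quantum fully-black-box reduction from \emph{semi-honest} oblivious transfer to trapdoor one-way functions with pseudorandom public keys, and the resulting protocol is pseudodeterministic because its only non-classical steps are invocations of the (quantum-computable, pseudodeterministic) underlying primitive. Second, a one-way function is post-quantum fully-black-box reducible to a trapdoor one-way function: define $g(r,x) \coloneqq (pk, \Eval(pk,x))$ where $(pk, td)$ is the output of $\Gen(1^n)$ run with randomness $r$; then any algorithm that inverts $g$ on $(pk, F(pk,x))$ in particular produces some $x'$ with $F(pk,x') = F(pk,x)$, so it breaks the one-wayness of $F$ with the same probability, and this reduction is syntactic and hence post-quantum. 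Third, we invoke \cite[Theorem 1]{CLPY24-pqmpc}, a post-quantum fully-black-box reduction from (simulation-secure) oblivious transfer to semi-honest oblivious transfer together with one-way functions.

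Chaining these: starting from a trapdoor one-way function with pseudorandom public keys, the first two ingredients produce both a semi-honest OT protocol and a one-way function; feeding both into the reduction of \cite{CLPY24-pqmpc} yields a fully secure OT protocol. Each arrow is a post-quantum fully-black-box reduction, and such reductions compose: the composite construction invokes the trapdoor one-way function only as a black box, and the composite security reduction chains the individual security reductions, each of which uses the relevant adversary as a black box. This gives the claimed post-quantum fully-black-box reduction from OT to trapdoor one-way functions with pseudorandom public keys.

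The main obstacle is bookkeeping rather than mathematics: we must check that the reduction of \cite{CLPY24-pqmpc} accesses the semi-honest OT only through its next-message functions (so that plugging in a quantum-computable, pseudodeterministic semi-honest protocol is legitimate), and that the negligible per-round pseudodeterminism error of our semi-honest protocol accumulates to only a negligible loss in the completeness and simulation security of the final protocol. Both points were flagged in the discussion preceding this corollary: \cite{CLPY24-pqmpc} (building on \cite{CCLY22-pqextraction}) is black-box in the next-message functions, and a union bound over the polynomially many rounds controls the accumulated error, so the remaining work is simply to assemble the pieces.
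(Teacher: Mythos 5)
Your proposal is correct and takes essentially the same route as the paper: the corollary is obtained by composing the preceding lemma (semi-honest OT from trapdoor one-way functions with pseudorandom public keys) with the post-quantum fully-black-box reduction of \cite[Theorem 1]{CLPY24-pqmpc} from full OT to semi-honest OT plus one-way functions, with the pseudodeterminism caveats handled exactly as in the discussion preceding the corollary. Your explicit second ingredient---that a one-way function is itself post-quantum fully-black-box obtainable from the trapdoor one-way function---is a step the paper leaves implicit, and it is a correct and welcome addition.
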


\begin{corollary}
    \label{cor:cryptomania}
    There is a classical oracle relative to which classical-communication oblivious transfer protocols and public-key encryption schemes exist yet $\mathsf{P} = \mathsf{NP}$.
\end{corollary}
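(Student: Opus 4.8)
The plan is to assemble the corollary directly from the results already in hand, since it requires no new technical ingredients. First I would fix a single oracle $A$ in the support of $\mathcal{E}_{TD}$ that simultaneously witnesses all the probability-$1$ conclusions proven earlier — the correctness guarantees of \Cref{prop:towf-correctness}, the pseudorandomness of public keys from \Cref{prop:pseudorandom-pk}, and the one-wayness assembled in \Cref{thm:main_towf}. Each of these holds with probability $1$ over $A \sim \mathcal{E}_{TD}$, so their intersection is still a probability-$1$ event, and in particular nonempty; pick any such $A$ and set $\mathcal{O} \coloneqq \mathcal{O}[A] = (A,B)$.

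Next I would record the two properties of $\mathcal{O}$ that we need. On one hand, the Proposition following \Cref{def:PH_oracle} gives $\mathsf{P}^{\mathcal{O}} = \mathsf{NP}^{\mathcal{O}}$, and this holds for \emph{every} choice of $A$, hence for ours. On the other hand, \Cref{thm:main_towf} says that relative to $\mathcal{O}$ there is a quantum-computable trapdoor one-way function with pseudorandom public keys. It then remains to run the chain of reductions collected in this section: \Cref{fact:pke-from-towf} converts a trapdoor one-way function into a semantically secure classical-communication public-key encryption scheme, and the corollary proved just above converts a trapdoor one-way function with pseudorandom public keys into a maliciously (simulation-)secure classical-communication oblivious transfer protocol. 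Because both reductions are post-quantum and fully black-box — indeed merely relativizing — they apply unchanged when the underlying primitive is only quantum-computable: the reduction invokes $\Gen, \Eval, \Inv$ as black-box subroutines and its security analysis relativizes, so substituting the $\mathsf{BQP}^{\mathcal{O}}$ evaluation algorithms for classical ones changes nothing. Thus both OT and PKE exist relative to $\mathcal{O}$, while $\mathsf{P}^{\mathcal{O}} = \mathsf{NP}^{\mathcal{O}}$.

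The only points needing care — and the closest thing to an obstacle — are the bookkeeping issues around quantum evaluation. One must check that the pseudodeterminism of the quantum-computable trapdoor function propagates through each reduction (this is precisely why the oblivious transfer definition in this section explicitly includes a pseudodeterminism clause, so that ``black-box in the next-message function'' is meaningful), and that the negligible failure probability incurred whenever a $\mathsf{BQP}$ algorithm mis-evaluates the primitive is absorbed into the overall negligible security loss. Both are routine: each reduction makes only $\poly(n)$ calls to the primitive, so a union bound over calls keeps the accumulated error negligible, and a primitive that fails with only negligible probability is statistically indistinguishable, for a polynomial-time security game, from an idealized one that never fails. Hence no genuinely new argument is needed beyond what the preceding sections already establish.
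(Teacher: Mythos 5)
Your proposal is correct and matches the paper's (implicit) argument exactly: the paper states this corollary without a separate proof, intending precisely the assembly you describe — fix an oracle in the probability-$1$ event of \Cref{thm:main_towf}, note $\mathsf{P}^{\mathcal{O}[A]}=\mathsf{NP}^{\mathcal{O}[A]}$ holds for every $A$, and push the quantum-computable trapdoor OWF with pseudorandom public keys through the post-quantum fully-black-box reductions to PKE and OT collected in that section. Your added bookkeeping on pseudodeterminism and the accumulated negligible evaluation error is the same point the paper makes when it motivates the pseudodeterminism clause in its OT definition, so nothing is missing.
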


\fi

\ifauthors
\section*{Acknowledgments}

We thank Ran Canetti, Vipul Goyal, and Xiao Liang for the references on classical and post-quantum oblivious transfer protocols.
\fi

\bibliographystyle{alphaurl}
\bibliography{MainBibliography}

\end{document}